\newcommand{\remove}[1]{}
\renewcommand{\int}{int}
\newtheorem{claimx}{Claim}
\newcommand{\cv}{collinear vertices}
\newcommand{\cfsl}{planar straight-line drawing}
\newcommand{\etal}{\emph{et al.}}
\renewenvironment{proof}
{{\bf Proof:}}{\hspace*{\fill}$\Box$\par\vspace{2mm}}
\newcommand{\rome}{$^{\dag}$} 
\newcommand{\ottawa}{$^{\diamond}$} 
\newcommand{\kit}{$^{\circ}$}
\begin{document}
\title{Drawing Planar Graphs with\\ Many Collinear Vertices 
\thanks{The research of Giordano Da Lozzo, Fabrizio Frati and Vincenzo Roselli was partially supported by MIUR Project ``AMANDA'' under PRIN 2012C4E3KT. The research of Vida Dujmovi\'c was partially supported by NSERC and Ontario Ministry of Research and Innovation.}
}
\author{Giordano Da Lozzo\rome, Vida Dujmovi\'c\ottawa, Fabrizio Frati\rome, \\Tamara Mchedlidze\kit, Vincenzo Roselli\rome}
\institute{
\rome University Roma Tre, Italy \\ 
\email{\{dalozzo,frati,roselli\}@dia.uniroma3.it}\\
\ottawa University of Ottawa, Canada\\
\email{vida.dujmovic@uottawa.ca}\\
\kit Karlsruhe Institute of Technology, Germany\\
\email{mched@iti.uka.de}
}
\maketitle
\pagestyle{plain}

\begin{abstract}
Consider the following problem: Given a planar graph $G$, what is the maximum 
number $p$ such that $G$ has a \cfsl\ with $p$ \cv? This problem resides at the core of several graph drawing problems, including  universal point subsets, untangling, and column planarity.  The following results are known for it: Every $n$-vertex planar graph has a \cfsl\ with  $\Omega(\sqrt{n})$ \cv; for every $n$, there is an $n$-vertex planar graph whose every \cfsl\ has $O(n^\sigma)$ \cv, where $\sigma<0.986$; every $n$-vertex planar graph of treewidth at most two has a \cfsl\ with $\Theta(n)$ \cv. We extend the linear bound to planar graphs of treewidth at most three and to triconnected cubic planar graphs. This (partially) answers two open problems posed by  Ravsky and Verbitsky [\emph{WG}~2011:295--306].
Similar results are not possible for all bounded treewidth planar graphs or for all bounded degree planar graphs. For planar graphs of treewidth at most three, our results also imply asymptotically tight bounds for all of the other above mentioned graph drawing problems.

\end{abstract}




\section{Introduction}\label{le:intro}


A subset $S$ of the vertices of a planar graph $G$ is a \emph{collinear set} if $G$ has a \cfsl\ where all the vertices in $S$ are collinear.  Ravsky and Verbitsky~\cite{DBLP:conf/wg/RavskyV11} consider the problem of determining the maximum cardinality of collinear sets in planar graphs. A stronger notion is defined as follows: a set $R\subseteq V(G)$  is a \emph{free collinear set} if a total order $<_R$ of $R$ exists such that, given any set of $|R|$ points on a line $\ell$, graph $G$ has a \cfsl\ where the vertices in $R$ are mapped to the given points and their order on $\ell$ matches the order $<_R$. Free collinear sets were first used 
(although not named) by Bose~\etal~\cite{DBLP:journals/dcg/BoseDHLMW09}; also, they were called \emph{free sets}  by Ravsky and Verbitsky~\cite{DBLP:conf/wg/RavskyV11}. Clearly, every free collinear set is also a  collinear set. In addition to this obvious relationship to collinear sets, free collinear sets have connections to other graph drawings problems, as will be discussed later in this introduction.




Based on the results in \cite{DBLP:journals/dcg/BoseDHLMW09}, Dujmovi\'c~\cite{DBLP:conf/gd/Dujmovic15} showed that every $n$-vertex planar graph has a free collinear set of cardinality at least $\sqrt{n/2}$. A natural question to consider would be whether a linear bound is possible for all planar graphs. Ravsky and Verbitsky~\cite{DBLP:conf/wg/RavskyV11} provided a negative answer to that question. In particular, they observed that if a planar triangulation has a large collinear set, then its dual graph has a cycle of proportional length. Since there are $m$-vertex triconnected cubic planar graphs whose longest cycle has length $O(m^\sigma)$~\cite{gw-sefg-73}, it follows that there are $n$-vertex planar graphs in which the cardinality of every collinear set is $O(n^\sigma)$. Here $\sigma$ is a known graph-theoretic constant called \emph{shortness exponent}, for which the best known upper bound is $\sigma<0.986$~\cite{gw-sefg-73}.

In addition to the natural open problem of closing the gap between the $\Omega(n^{0.5})$ and $O(n^\sigma)$ bounds for general $n$-vertex planar graphs, these results raise the question of which classes of planar graphs have (free) collinear sets of linear cardinality. Goaoc~\etal~\cite{DBLP:journals/dcg/GoaocKOSSW09} proved (implicitly) that $n$-vertex outerplanar graphs have free collinear sets of cardinality  $(n+1)/2$; this result was explicitly stated and proved by Ravsky and Verbitsky~\cite{DBLP:conf/wg/RavskyV11}. Ravsky and Verbitsky~\cite{DBLP:conf/wg/RavskyV11} also considerably strengthened that result by proving that all $n$-vertex planar graphs of treewidth at most two have free collinear sets of cardinality $n/30$; they also asked for other classes of graphs with (free) collinear sets of linear cardinality, calling special attention to planar graphs of bounded treewidth and to planar graphs of bounded degree. In this paper we prove the following results: 
\begin{enumerate}
\item every $n$-vertex planar graph of treewidth at most three has a free collinear set with cardinality $\lceil \frac{n-3}{8}\rceil$; 
\item every $n$-vertex triconnected cubic planar graph has a collinear set with cardinality $\lceil \frac{n}{4}\rceil$; and 
\item every planar graph of treewidth $k$ has a collinear set with cardinality $\Omega(k^2)$.
\end{enumerate}

Our first result generalizes the previous result on planar graphs of treewidth at most two~\cite{DBLP:conf/wg/RavskyV11}. As noted by Ravsky and Verbitsky in the full version of their paper \cite[Corollary 3.5]{DBLP:journals/corr/abs-0806-0253}, there are $n$-vertex planar graphs of treewidth at most  $8$ whose largest collinear set has cardinality $o(n)$. To obtain that, the authors show a construction relying on the dual of the Barnette-Bos\'ak-Lederberg's non-Hamiltonian cubic triconnected planar graph. It can be shown that the dual of Tutte's graph has treewidth $5$, thus if one relies on that dual instead, the sub-linear upper bound holds true for planar graphs of treewidth at most $5$. Thus, our first result leaves $k=4$ as the only remaining open case for the question of whether planar graphs of treewidth at most $k$ admit (free) collinear sets with linear cardinality.

Our second result provides the first linear lower bound on the cardinality of collinear sets for a fairly wide class of bounded-degree planar graphs. The result cannot be extended to all bounded-degree planar graphs. In particular it cannot be extended to planar graphs of degree at most $7$, since there exist $n$-vertex planar triangulations of maximum degree $7$ whose dual graph has a longest cycle of length $o(n)$~\cite{DBLP:journals/dm/Owens81}.

Finally, our third result improves the $\Omega(\sqrt{n})$ bound on the cardinality of collinear sets in general planar graphs for all planar graphs whose treewidth is~$\omega(\sqrt[4]{n})$. 

We now discuss applications of our results to other graph drawing problems. Since our first result gives {\em free} collinear sets, its consequences are broader. 

A {\em column planar set} in a planar graph $G$ is a set $Q\subseteq V(G)$ satisfying the following property: there exists a function $\gamma: Q \rightarrow \mathbb{R}$ such that, for any function $\lambda: Q \rightarrow \mathbb{R}$, there exists a \cfsl\ of $G$ in which each vertex $v\in Q$ is mapped to point $(\gamma(v),\lambda(v))$. Column planar sets were defined by Evans~\etal~\cite{DBLP:conf/gd/EvansKSS14} motivated by applications to partial simultaneous geometric embeddings\footnote{The original definition by Evans~\etal~\cite{DBLP:conf/gd/EvansKSS14} had also an extra condition that required the point set composed of the points $(\gamma(v),\lambda(v))$ for all $v\in Q$ not to have three points on a line.}. They proved that $n$-vertex trees have column planar sets of size $14n/17$. The lower bounds in all our three results carry over to the size of column planar sets for the corresponding graph classes.

A \emph{universal point subset} for the $n$-vertex planar graphs is a set $P$ of $k\leq n$ points in the plane such that, for every $n$-vertex planar graph $G$, there exists a \cfsl\ of $G$ in which $k$ vertices are placed at the $k$ points in $P$. Universal point subsets were introduced by Angelini~\etal~\cite{DBLP:conf/isaac/AngeliniBEHLMMO12}.  Every set of $n$ points in general position is a universal point subset for the $n$-vertex outerplanar graphs \cite{GMPP,DBLP:journals/comgeo/Bose02,DBLP:conf/cccg/CastanedaU96} and every set of $\sqrt{n/2}$ points in the plane is a universal point subset for the $n$-vertex planar graphs \cite{DBLP:conf/gd/Dujmovic15}. As a corollary of our first result, we obtain that every set of $\lceil \frac{n-3}{8}\rceil$ points in the plane is a universal point subset for the $n$-vertex planar graphs of treewidth at most three.

Given a straight-line drawing of a planar graph, possibly with crossings, to \emph{untangle} it means to assign new locations to some of its vertices so that the resulting straight-line drawing is planar. The goal is to do so while \emph{keeping fixed} (i.e., not changing the location of) as many vertices as possible. Several papers have studied the untangling problem~\cite{DBLP:journals/dcg/PachT02,DBLP:journals/siamdm/CanoTU14,DBLP:journals/dcg/Cibulka10,DBLP:journals/dcg/BoseDHLMW09,DBLP:journals/dcg/GoaocKOSSW09,DBLP:journals/dam/KangPRSV11,DBLP:conf/wg/RavskyV11}. It is known that general $n$-vertex planar graphs can be untangled while keeping $\Omega(n^{0.25})$ vertices fixed \cite{DBLP:journals/dcg/BoseDHLMW09} and that there are $n$-vertex planar graphs that cannot be untangled while keeping $\Omega(n^{0.4948})$ vertices fixed \cite{DBLP:journals/siamdm/CanoTU14}. Asymptotically tight bounds are known for paths \cite{DBLP:journals/dcg/Cibulka10}, trees \cite{DBLP:journals/dcg/GoaocKOSSW09}, outerplanar graphs \cite{DBLP:journals/dcg/GoaocKOSSW09}, and planar graphs of treewidth two \cite{DBLP:conf/wg/RavskyV11}. As a corollary of our first result, we obtain that every $n$-vertex planar graph of treewidth at most three can be untangled while keeping $\Omega(\sqrt{n})$ vertices fixed. This bound is the best possible, as there are forests of stars that cannot be untangled while keeping $\omega(\sqrt{n})$ vertices fixed \cite{DBLP:journals/dcg/BoseDHLMW09}. Our result generalizes previous results on trees, outerplanar graphs and planar graphs of treewidth at most two.

\section{Preliminaries} \label{le:preliminaries}

The graphs called \emph{$k$-trees} are defined recursively as follows. A complete graph on $k+1$ vertices is a $k$-tree. If $G$ is a $k$-tree, then the graph obtained by adding a new vertex to $G$ and making it adjacent to all the vertices in a $k$-clique of $G$ is a $k$-tree. The {\em treewidth} of a graph $G$ is the minimum $k$ such that $G$ is a subgraph of some $k$-tree. 

A connected {\em plane graph} $G$ is a connected planar graph together with a {\em plane embedding}, that is, an equivalence class of planar drawings of $G$, where two planar drawings are {\em equivalent} if they have the same {\em rotation system} (i.e., the same clockwise order of the edges incident to each vertex) and the same {\em outer face} (i.e., the unbounded face is delimited by the same walk). We always think about a plane graph $G$ as if it is drawn according to its plane embedding; also, when we talk about a planar drawing of $G$, we always mean that it respects the plane embedding of $G$. The {\em interior} of $G$ is the closure of the union of the internal faces of $G$. We associate with a subgraph $H$ of $G$ the plane embedding obtained from the one of $G$ by deleting vertices and edges not in $H$. 

We denote the degree of a vertex $v$ in a graph $G$ by $\delta_G(v)$. A graph is {\em cubic} ({\em subcubic}) if every vertex has degree $3$ (resp.\ at most $3$). Let $G$ be a graph and $U\subseteq V(G)$. We denote by $G-U$ the graph obtained from $G$ by removing the vertices in $U$ and their incident edges. The subgraph of $G$ {\em induced} by $U$ has $U$ as vertex set and has an edge $e\in E(G)$ if and only if both its end-vertices are in $U$. Let $H$ be a subgraph of $G$; then $H$ is {\em induced} if $H$ is induced by $V(H)$. If $v\in V(G)-V(H)$, we denote by $H\cup \{v\}$ the subgraph of $G$ composed of $H$ and of the isolated vertex $v$. An {\em $H$-bridge} $B$ is either a {\em trivial $H$-bridge} -- an edge of $G$ not in $H$ with both end-vertices in $H$ -- or a non-trivial $H$-bridge -- a connected component of $G-V(H)$ together with the edges from that component to $H$. The vertices in $V(H)\cap V(B)$ are called {\em attachments}.


Let $G$ be a connected graph. A {\em cut-vertex} is a vertex whose removal disconnects $G$. If $G$ has no cut-vertex and it is not a single edge, then it is {\em biconnected}. A {\em biconnected component} of $G$ is a maximal (with respect to both vertices and edges) biconnected subgraph of $G$. Let $G$ be a biconnected graph. A {\em separation pair} is a pair of vertices whose removal disconnects $G$. If $G$ has no separation pair, then it is {\em triconnected}. Given a separation pair $\{a,b\}$ in a biconnected graph $G$, an {\em $\{a,b\}$-component} is either a {\em trivial $\{a,b\}$-component} -- edge $(a,b)$ -- or a {\em non-trivial $\{a,b\}$-component} -- a subgraph of $G$ induced by $a$, $b$, and the vertices of a connected component of $G-\{a,b\}$. 



\section{From a Geometric to a Topological Problem} \label{le:topology}

In this section we show that the problem of determining a large collinear set in a planar graph, which is geometric by definition, can be transformed into a purely topological problem. This result may be useful to obtain bounds for the size of collinear sets in classes of planar graphs different from the ones we studied in this paper.

Given a planar drawing $\Gamma$ of a plane graph $G$, we say that an open simple (i.e., non-self-intersecting) curve $\lambda$ is {\em good} for $\Gamma$ if, for each edge $e$ of $G$, curve $\lambda$ either entirely contains $e$ or has at most one point in common with $e$ (if $\lambda$ passes through an end-vertex of $e$, that counts as a common point). Clearly, the existence of a good curve passing through a certain sequence of vertices, edges, and faces of $G$ does not depend on the actual drawing $\Gamma$, but only on the plane embedding of $G$. For this reason we often talk about the existence of good curves in plane graphs, rather than in their planar drawings. We denote by $R_{G,\lambda}$ the only unbounded region of the plane defined by $G$ and $\lambda$. Curve $\lambda$ is {\em proper} if both its end-points are incident to $R_{G,\lambda}$. We have the following.

\begin{theorem} \label{th:topology}
A plane graph $G$ has a \cfsl\ with $x$ collinear vertices if and only if $G$ has a proper good curve that passes through $x$ vertices of $G$.
\end{theorem}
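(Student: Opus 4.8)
The plan is to prove the two directions separately, the forward (``only if'') direction being routine and the reverse (``if'') direction carrying the technical weight. For the forward direction I would start from a \cfsl\ $\Gamma$ of $G$ in which the $x$ vertices of a collinear set $S$ lie on a common line $\ell$, and take $\ell$ itself as the candidate good curve. Since every edge of $G$ is drawn as a straight-line segment, $\ell$ meets each edge either in a single point or, when the edge lies along $\ell$, contains it entirely; hence $\ell$ satisfies the defining property of a good curve. To turn $\ell$ into an \emph{open} curve I would truncate it to a segment $\lambda$ whose two end-points lie strictly outside the bounding box of $\Gamma$. This truncation destroys no intersection with $G$, so $\lambda$ is still good and still passes through all of $S$; moreover both end-points are incident to the unbounded region $R_{G,\lambda}$, so $\lambda$ is proper. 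This yields a proper good curve through the $x$ vertices.

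For the reverse direction I would first convert $\lambda$ into a Jordan curve. As $\lambda$ is proper, its two end-points lie in $R_{G,\lambda}$; joining them by an arc $\alpha$ routed through $R_{G,\lambda}$ and disjoint from $G$ yields a Jordan curve $J=\lambda\cup\alpha$. By the Jordan curve theorem, $J$ splits the vertices of $G$ into those strictly inside $J$, those strictly outside $J$, and the $x$ vertices $v_1,\dots,v_x$ on $\lambda$, which inherit a linear order along $\lambda$. I would then build an auxiliary plane graph $G^+$: subdivide every edge crossed transversally by $\lambda$ at its crossing point, and wherever $\lambda$ traverses a face insert a dummy edge along that portion of $\lambda$ (portions of $\lambda$ that already run along edges of $G$ are kept). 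Because $\lambda$ is simple and good, each inserted arc is a chord of a single face, so $G^+$ is plane and the image of $\lambda$ is a path $P=w_1\cdots w_m$ whose end-points lie on the outer face of $G^+$. The path $P$ separates $G^+$ into an ``inside'' subgraph $H_{\mathrm{in}}$ and an ``outside'' subgraph $H_{\mathrm{out}}$, each having $P$ on its outer boundary.

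The core step is to realize $P$ as a straight segment. I would draw $H_{\mathrm{in}}$ with $P$ placed on a horizontal line $\ell$ at prescribed increasing $x$-coordinates and all remaining vertices strictly above $\ell$, and draw $H_{\mathrm{out}}$ with the same placement of $P$ on $\ell$ and all remaining vertices strictly below; a standard straight-line drawing result (drawing a plane graph with a path on its outer face as a flat chain, e.g.\ via an internally triangulated, triconnected augmentation and a convex-drawing theorem) provides such one-sided drawings, and fixing the common $x$-coordinates of $w_1,\dots,w_m$ lets the two drawings be glued along $\ell$ into a single \cfsl\ of $G^+$. Restricting this drawing to $v_1,\dots,v_x$ gives $x$ collinear vertices. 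Finally I would recover $G$ from $G^+$ by deleting the dummy edges and suppressing the degree-two subdivision vertices.

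The main obstacle is this last recovery step. Suppressing a subdivision vertex $w_j$, which lies on $\ell$ with one neighbour above and one below, replaces its two incident segments by the single chord joining those neighbours, and such a chord could in principle cross nearby edges. To prevent this, I would arrange the two one-sided drawings so that at every degree-two subdivision vertex the two incident edges are collinear, e.g.\ drawn vertically through $w_j$; then suppression leaves the geometry unchanged except for deleting a point, so planarity and straightness are preserved, and the resulting \cfsl\ of $G$ has $v_1,\dots,v_x$ collinear. Securing this alignment at the subdivision vertices, together with the precise statement and proof of the one-sided drawing result, is where the real work lies; the remaining bookkeeping (faces traversed more than once by $\lambda$, and edges running along $\lambda$) is routine given the goodness and properness of $\lambda$.
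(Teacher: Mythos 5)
Your overall architecture coincides with the paper's: the necessity direction is identical (truncate the line to a segment with both end-points in the outer face), and for sufficiency both you and the paper subdivide the crossed edges, turn $\lambda$ into a path, split the augmented graph into the part on one side of that path and the part on the other, and draw each part with the path on a horizontal line $\ell$ via a triconnected/convex-drawing argument. The genuine gap is in your last step, the recovery of $G$ from $G^+$. After gluing the two one-sided drawings, every original edge crossed by $\lambda$ is a two-segment polyline with a bend at a subdivision vertex $w_j$ on $\ell$. Your fix is to ``arrange'' the two convex drawings so that the two half-edges at each $w_j$ are collinear (e.g.\ vertical), but you give no argument that such drawings exist, and none of the tools you invoke delivers this: Tutte's theorem lets you prescribe the outer polygon of each half, not the directions of the edges leaving its boundary vertices, and the two half-edges at $w_j$ live in two independently constructed drawings. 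Since there may be linearly many subdivision vertices, each imposing a slope constraint on both drawings simultaneously, this is not routine bookkeeping --- you have essentially restated the hard part of the theorem rather than proved it.

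The paper closes exactly this gap by a different final move: it attempts no alignment at all. It observes that in the glued drawing every edge of $G$ is a $y$-monotone curve (either a straight segment or two segments meeting at a point of $\ell$, one above and one below), and then invokes the known stretching result of~\cite{efln-sldahgc-06,pt-mdpg-04}, which states that any planar drawing whose edges are $y$-monotone curves can be converted into a planar \emph{straight-line} drawing in which every vertex keeps its $y$-coordinate. Since the $x$ vertices on $\lambda$ all lie on $\ell$ in the glued drawing, they remain collinear after stretching. If you replace your alignment step by an appeal to that theorem, your argument becomes complete and is then essentially the paper's proof.
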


\begin{proof}
For the necessity, assume that $G$ has a \cfsl\ $\Gamma$ with $x$ vertices lying on a common line $\ell$. We transform $\ell$ into a straight-line segment $\lambda$ by cutting off two disjoint half-lines of $\ell$ in the outer face of $G$. This immediately implies that $\lambda$ is proper. Further, $\lambda$ passes through $x$ vertices of $G$ since $\ell$ does. Finally, if an edge $e$ has two common points with $\lambda$, then $\lambda$ entirely contains it, since $\lambda$ is a straight-line segment and since $e$ is a straight-line segment in $\Gamma$.





For the sufficiency, assume that $G$ has a proper good curve $\lambda$ passing through $x$ of its vertices; see Fig.~\ref{fig:characterization}(a). Augment $G$ by adding to it (refer to Fig.~\ref{fig:characterization}(b)): (i) a dummy vertex at each proper crossing between an edge and $\lambda$; (ii) two dummy vertices at the end-points $a$ and $b$ of $\lambda$; (iii) an edge between any two consecutive vertices of $G$ along $\lambda$, which now represents a path $(a,\dots,b)$ of $G$; (iv) two dummy vertices $d_1$ and $d_2$ in $R_{G,\lambda}$; and (v) edges in $R_{G,\lambda}$ connecting each of $d_1$ and $d_2$ with each of $a$ and $b$ so that cycles $C_1=(d_1,a,\dots,b)$ and $C_2=(d_2,a,\dots,b)$ are embedded in this counter-clockwise and clockwise direction in $G$, respectively. For $i=1,2$, let $G_i$ be the subgraph of $G$ induced by the vertices of $C_i$ or inside it. Triangulate the internal faces of $G_i$ with dummy vertices and edges, so that there are no edges between non-consecutive vertices of $C_i$; indeed, these edges do not exist in the original graph $G$, given that $\lambda$ is good.

\begin{figure}[htb]
\begin{center}
\begin{tabular}{c c c c}
\mbox{\includegraphics[scale=.5]{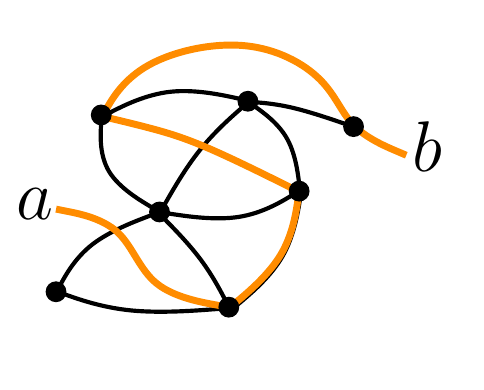}} \hspace{1mm} &
\mbox{\includegraphics[scale=.5]{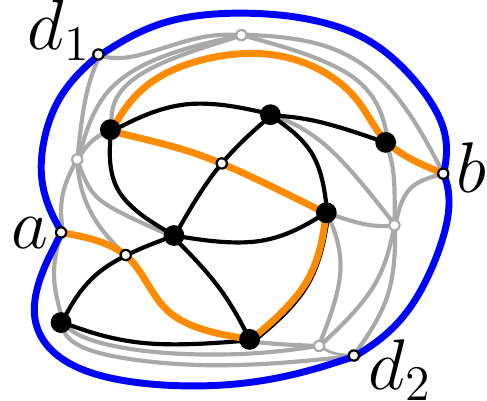}} \hspace{1mm} &
\mbox{\includegraphics[scale=.5]{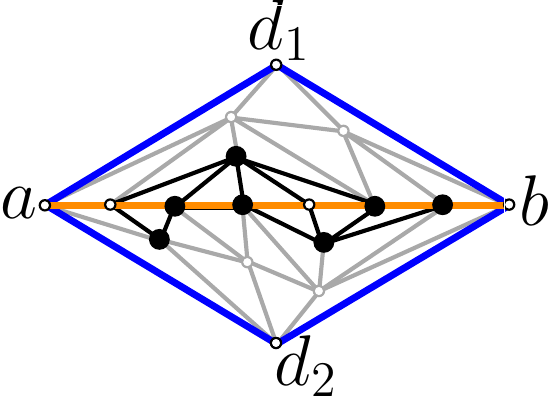}} \hspace{1mm} &
\mbox{\includegraphics[scale=.5]{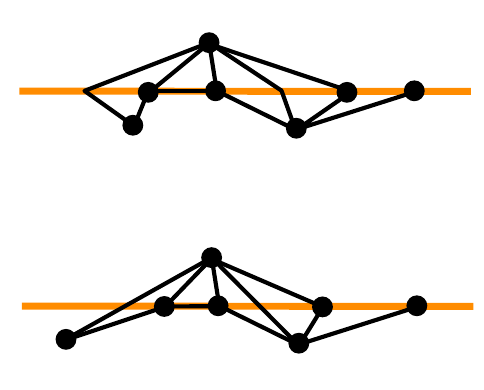}}\\
(a) \hspace{1mm} & (b) \hspace{1mm} & (c) \hspace{1mm} & (d)
\end{tabular}
\caption{(a) A proper good curve $\lambda$ (orange) for a plane graph $G$ (black). (b) Augmentation of $G$ with dummy vertices and edges. (c) A \cfsl\ of the augmented graph $G$. (d) Planar polyline (top) and straight-line (bottom) drawings of the original graph $G$.}
\label{fig:characterization}
\end{center}
\end{figure}

Represent $C_1$ as a convex polygon $Q_1$ whose all vertices, except for $d_1$, lie along a horizontal line $\ell$, with $a$ to the left of $b$ and $d_1$ above $\ell$; see Fig.~\ref{fig:characterization}(c). Graph $G_1$ is triconnected, as it contains no edge between any two non-consecutive vertices of its only non-triangular face. Thus, a \cfsl\ of $G_1$ in which $C_1$ is represented by $Q_1$ exists~\cite{t-hdg-63}. Analogously, represent $C_2$ as a convex polygon $Q_2$ whose all vertices, except for $d_2$, lie at the same points as in $Q_1$, with $d_2$ below $\ell$. Construct a \cfsl\ of $G_2$ in which $C_2$ is represented by $Q_2$.

Removing the dummy vertices and edges results in a planar drawing $\Gamma$ of the original graph $G$ in which each edge $e$ is a $y$-monotone curve; see Fig.~\ref{fig:characterization}(d). In particular, the fact that $\lambda$ crosses at most once $e$ ensures that $e$ is either a straight-line segment or is composed of two straight-line segments that are one below and one above $\ell$ and that share an end-point on $\ell$. A \cfsl\ $\Gamma'$ of $G$ in which the $y$-coordinate of each vertex is the same as in $\Gamma$ always exists, as proved in~\cite{efln-sldahgc-06,pt-mdpg-04}. Since $\lambda$ passes through $x$ vertices of $G$, we have that $x$ vertices of $G$ lie along $\ell$ in $\Gamma'$. 
\end{proof}

Theorem~\ref{th:topology} can be stated for planar graphs without a given plane embedding as follows: A planar graph has a collinear set with $x$ vertices if and only if it admits a plane embedding for which a proper good curve can be drawn that passes through $x$ of its vertices. While this version of Theorem~\ref{th:topology} might be more general, it is less useful for us, so we preferred to explicitly state its version for plane graphs.



\section{Planar Graphs with Treewidth at most Three} \label{le:3-trees}

In this section we prove the following theorem.

\begin{theorem} \label{th:3-trees}
Every $n$-vertex plane graph of treewidth at most three admits a \cfsl\ with at least $\lceil \frac{n-3}{8}\rceil$ collinear vertices.
\end{theorem}


For technical reasons, we regard a plane cycle with three vertices as a plane $3$-tree. Then every plane graph $G$ with $n\geq 3$ vertices and treewidth at most three can be augmented with dummy edges to a plane $3$-tree $G'$~\cite{kv-npp3t-12} which is a plane triangulation. A \cfsl\ of $G$ with $\lceil \frac{n-3}{8}\rceil$ collinear vertices can be obtained from a \cfsl\ of $G'$ with $\lceil \frac{n-3}{8}\rceil$ collinear vertices by removing the inserted dummy edges. Thus for the remainder of this section, we assume that $G$ is a plane $3$-tree. 

By Theorem~\ref{th:topology} it suffices to prove that $G$ admits a proper good curve passing through $\lceil \frac{n-3}{8}\rceil$ vertices of $G$. Let $u$, $v$, and $z$ be the external vertices of $G$. If $n=3$, then $G$ does not contain any internal vertex and we say that it is {\em empty}. If $G$ is not empty, let $w$ be the unique internal vertex of $G$ adjacent to all of $u$, $v$, and $z$; we say that $w$ is the {\em central vertex} of $G$. Let $G_1$, $G_2$, and $G_3$ be the plane $3$-trees which are the subgraphs of $G$ whose outer faces are delimited by cycles $(u,v,w)$, $(u,z,w)$, and $(v,z,w)$. We call $G_1$, $G_2$, and $G_3$ {\em children} of $G$ and {\em children} of $w$.    

We associate to each internal vertex $x$ of $G$ a plane $3$-tree $G(x)$, which is a subgraph of $G$, as follows. We associate $G$ to $w$ and we recursively associate plane $3$-trees to the internal vertices of the children $G_1$, $G_2$, and $G_3$ of $G$. Note that $x$ is the central vertex of the plane $3$-tree $G(x)$ associated to it.

We now introduce a classification of the internal vertices of $G$; see Fig.~\ref{fig:plane3-tree-intro}(a). Consider an internal vertex $x$ of $G$. We say that $x$ is of {\em type A}, {\em B}, {\em C}, or {\em D} if, respectively, $3$, $2$, $1$, or $0$ of the children of $G(x)$ are empty. We denote by $a(G)$, $b(G)$, $c(G)$, and $d(G)$ the number of internal vertices of $G$ of type A, B, C, and D, respectively. Let $m=n-3$ be the number of internal vertices of $G$. 

\begin{figure}[htb]
\begin{center}
\begin{tabular}{c c c}
\mbox{\includegraphics[height=.25\textwidth]{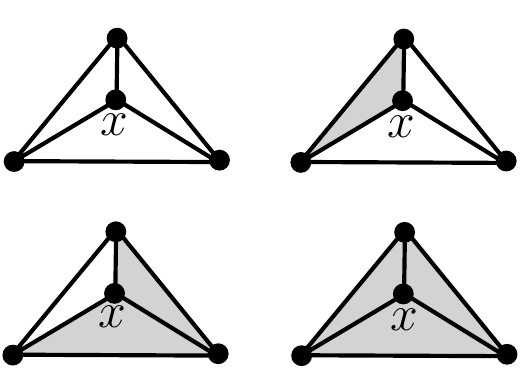}} \hspace{3mm} &
\mbox{\includegraphics[height=.25\textwidth]{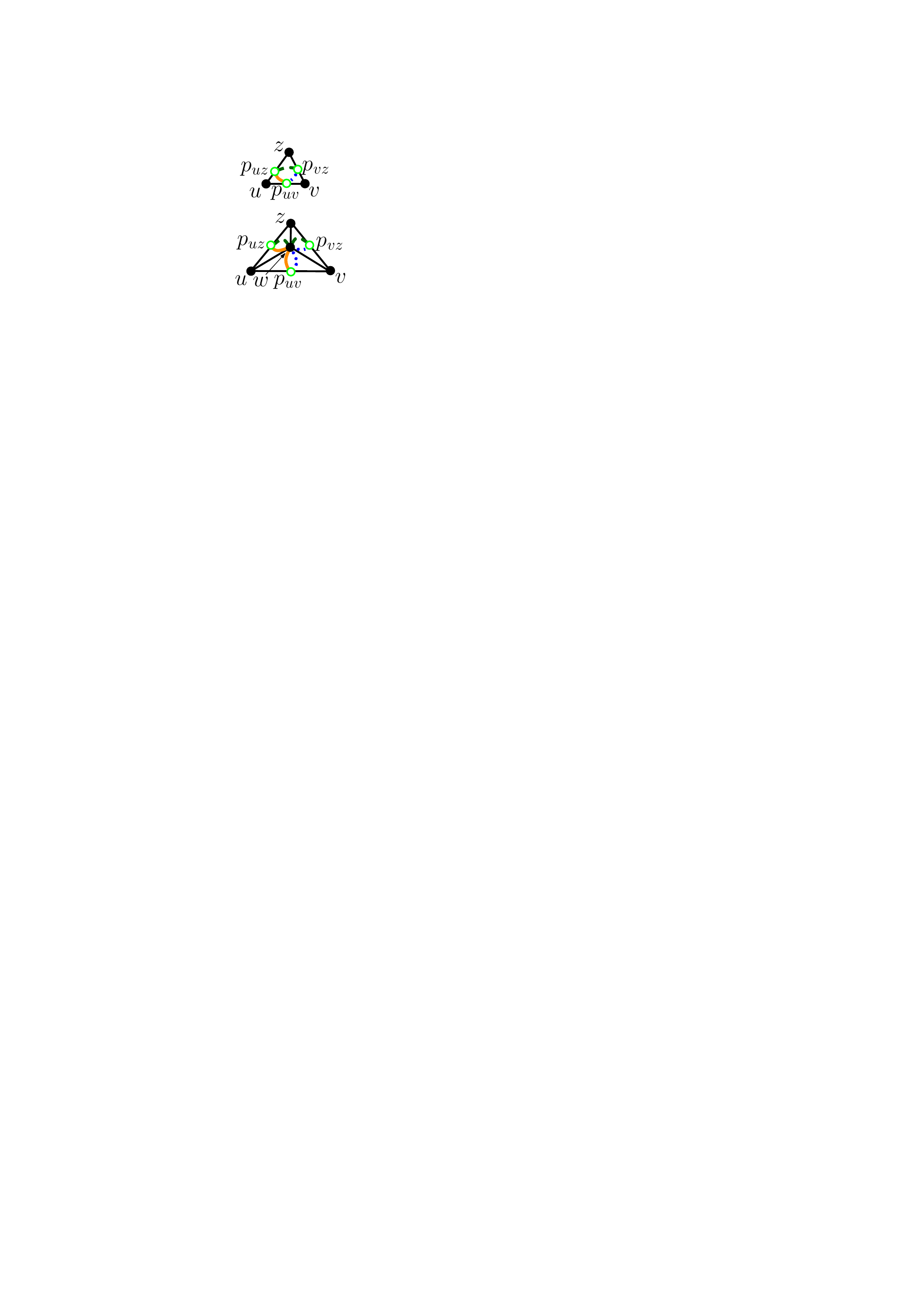}} \hspace{3mm} &
\mbox{\includegraphics[height=.25\textwidth]{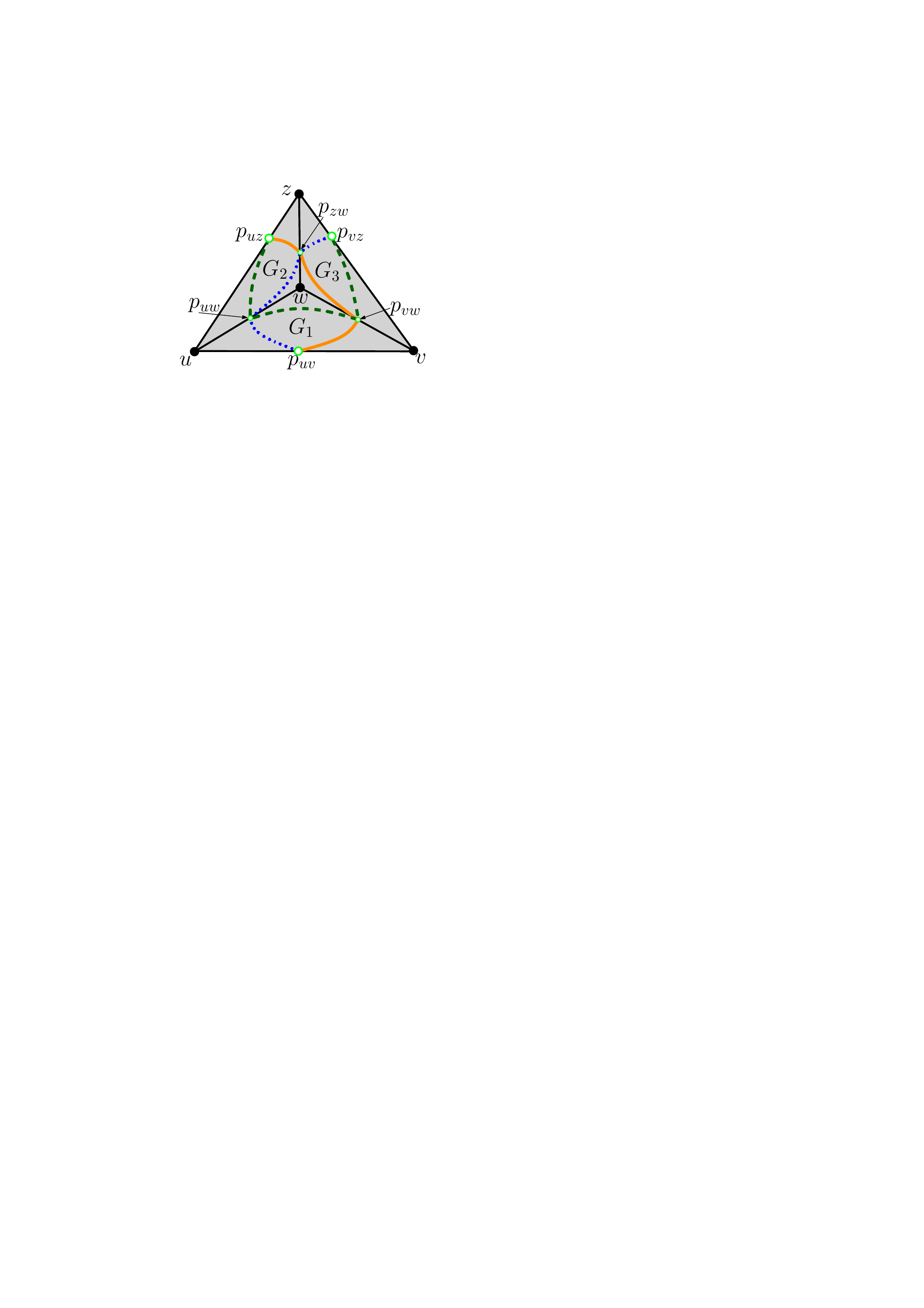}}\\
(a) \hspace{3mm} & (b) \hspace{3mm} & (c)
\end{tabular}
\caption{(a) A vertex $x$ of type A (top-left), B (top-right), C (bottom-left), and D (bottom-right). (b) Curves $\lambda_{u}(G)$ (solid), $\lambda_{v}(G)$  (dotted), and $\lambda_{z}(G)$  (dashed) if $m=0$ (top) and $m=1$ (bottom). (c) Curves $\lambda_{u}(G)$, $\lambda_{v}(G)$, and $\lambda_{z}(G)$ if $w$ is of type C or D.}
\label{fig:plane3-tree-intro}
\end{center}
\end{figure}

In the following we present an algorithm that computes three proper good curves $\lambda_{u}(G)$, $\lambda_v(G)$, and $\lambda_z(G)$ lying in the interior of $G$. For every edge $(x,y)$ of $G$, let $p_{xy}$ be an arbitrary internal point of $(x,y)$. The end-points of $\lambda_u(G)$ are $p_{uv}$ and $p_{uz}$, the end-points of $\lambda_v(G)$ are $p_{uv}$ and $p_{vz}$, and the end-points of $\lambda_z(G)$ are $p_{uz}$ and $p_{vz}$. Although each of $\lambda_{u}(G)$, $\lambda_v(G)$, and $\lambda_z(G)$ is a good curve, any two of these curves might cross each other arbitrarily and might pass through the same vertices of $G$. Each of these curves passes through all the internal vertices of $G$ of type A, through no vertex of type C or D, and through ``some'' vertices of type B. We will prove that the total number of internal vertices of $G$ curves $\lambda_{u}(G)$, $\lambda_v(G)$, and $\lambda_z(G)$ pass through is at least $\frac{3m}{8}$, hence one of them passes through at least $\lceil \frac{m}{8} \rceil$ internal vertices of~$G$. 

Curves $\lambda_{u}(G)$, $\lambda_v(G)$, and $\lambda_z(G)$ are constructed by induction on $m$. In the base case we have $m\leq 1$; refer to Fig.~\ref{fig:plane3-tree-intro}(b). If $m=0$, then $\lambda_{u}(G)$ starts at $p_{uv}$, traverses the internal face $(u,v,z)$ of $G$, and ends at $p_{uz}$. Curves $\lambda_{v}(G)$ and $\lambda_{z}(G)$ are defined analogously. If $m=1$, then $\lambda_{u}(G)$ starts at $p_{uv}$, traverses the internal face $(u,v,w)$ of $G$, passes through the central vertex $w$ of $G$, traverses the internal face $(u,z,w)$ of $G$, and ends at $p_{uz}$. Curves $\lambda_{v}(G)$ and $\lambda_{z}(G)$ are defined analogously. 


If $m>1$, then the central vertex $w$ of $G$ is of one of types B--D. If $w$ is of type C or D, then proper good curves are inductively constructed for the children of $G$ and composed to obtain $\lambda_{u}(G)$, $\lambda_v(G)$, and $\lambda_z(G)$. If $w$ is of type B, then a maximal sequence of vertices of type B starting at $w_1=w$ is considered; this sequence is called a {\em B-chain}. While the only child $H_i$ of the last vertex $w_i$ in the sequence has a central vertex $w_{i+1}$ of type B, the sequence is enriched with $w_{i+1}$; once $w_{i+1}$ is not of type B, induction is applied on $H_{i}$, and the three curves obtained by induction are composed with curves passing through vertices of the B-chain to get $\lambda_{u}(G)$, $\lambda_v(G)$, and $\lambda_z(G)$.


Assume first that $w$ is of type C or D. Refer to Fig.~\ref{fig:plane3-tree-intro}(c). Inductively construct curves $\lambda_{u}(G_1)$, $\lambda_v(G_1)$, and $\lambda_w(G_1)$ for $G_1$, curves $\lambda_{u}(G_2)$, $\lambda_z(G_2)$, and $\lambda_w(G_2)$ for $G_2$, and curves $\lambda_{v}(G_3)$, $\lambda_z(G_3)$, and $\lambda_w(G_3)$ for $G_3$. Let 
\begin{eqnarray*}
\lambda_{u}(G)  &=&  \lambda_v(G_1)\cup \lambda_w(G_3)\cup \lambda_z(G_2), \\
\lambda_{v}(G)  &=&  \lambda_u(G_1)\cup \lambda_w(G_2)\cup \lambda_z(G_3), \textrm{ and } \\ 
\lambda_{z}(G)  &=&  \lambda_u(G_2)\cup \lambda_w(G_1)\cup \lambda_v(G_3).
\end{eqnarray*}

Next, consider the case in which $w$ is of type B. In order to describe how to construct curves $\lambda_{u}(G)$, $\lambda_v(G)$, and $\lambda_z(G)$, we need to further explore the structure of $G$. 

Let $H_0=G$, let $w_1=w$, and let $H_1$ be the only non-empty child of $G$. 
We define three paths $P_u$, $P_v$, and $P_z$ as described in Table~\ref{tab:h1-paths}, depending on which among $u$, $v$, $z$, and $w$ are the external vertices of $H_1$.

\begin{table}\centering
	\vspace{-1.5em}
\begin{tabular}{c|c|c|c}
	external vertices of $H_1$	& $P_u$ 		 & $P_v$ 	& $P_z$\\\hline
	$v,w,z$						& $(u,w)$		& $(v)$ 	& $(z)$ \\\hline
	$u,w,z$						& $(u)$			 & $(v,w)$	& $(z)$ \\\hline
	$u,v,w$						& $(u)$			 & $(v)$	  & $(z,w)$\vspace{1.4pt}
\end{tabular}
\caption{Definition of $P_u$, $P_v$, and $P_z$ depending on the external vertices of $H_1$.}
\label{tab:h1-paths}
\vspace{-2em}
\end{table}

Now suppose that, for some $i\geq 1$, a sequence $w_1,\dots,w_i$ of vertices of type B, a sequence $H_0,H_1,\dots,H_i$ of plane $3$-trees, and three paths $P_u$, $P_v$, and $P_z$ (possibly single vertices or edges) have been defined so that the following properties hold true: 
\begin{enumerate}[$(1)$] 
\item for $1\leq j\leq i$, vertex $w_j$ is the central vertex of $H_{j-1}$ and $H_j$ is the only non-empty child of $H_{j-1}$; 
\item $P_u$, $P_v$, and $P_z$ are vertex-disjoint and each of them is induced in $G$; and 
\item $P_u$, $P_v$, and $P_z$ connect $u$, $v$, and $z$ with the three external vertices $u'$, $v'$, and $z'$ of $H_i$, respectively. 
\end{enumerate}

Properties (1)--(3) are indeed satisfied with $i=1$. Consider the central vertex of $H_i$ and denote it by $w_{i+1}$. 

If $w_{i+1}$ is of type B, then let $H_{i+1}$ be the only non-empty child of $H_{i}$. If cycle $(v',z',w_{i+1})$ delimits the outer face of $H_{i+1}$, add edge $(u',w_{i+1})$ to $P_u$ and leave $P_v$ and $P_z$ unaltered. The cases in which cycles $(u',z',w_{i+1})$ or $(u',v',w_{i+1})$ delimit the outer face of $H_{i+1}$ can be dealt with analogously. Properties (1)--(3) are clearly satisfied by the described construction. 

If $w_{i+1}$ is not of type B, we call the sequence $w_1,\dots,w_i$ a {\em B-chain} of $G$; note that all of $w_1,\dots,w_i$ are of type B. For simplicity of notation,  let $H=H_i$. We denote the vertices of $P_u$, $P_v$, and $P_z$ as follows: $P_u$$=$$(u=u_1,u_2,\dots,u_U=u')$, $P_v$$=$$(v=v_1,v_2,\dots,v_V=v')$, and $P_z$$=$$(z=z_1,z_2,\dots,z_Z=z')$; also, define cycles $C_{uv}$$=$$P_u \cup (u,v) \cup P_v \cup (u',v')$, $C_{uz}$$=$$P_u \cup (u,z) \cup P_z \cup (u',z')$, and $C_{vz}$$=$$P_v \cup (v,z) \cup P_z \cup (v',z')$. Each of these cycles contains no vertex in its interior; also, every edge in the interior of $C_{uv}$, $C_{uz}$, or $C_{vz}$ connects two vertices on distinct paths among $P_u$, $P_v$, and $P_z$, given that each of these paths is induced. We are going to use the following (a similar lemma can be stated for $C_{uz}$ and $C_{vz}$).




%

\begin{lemma} \label{le:outerplanar-draw}
Let $p_1$ and $p_2$ be two points on the boundary of $C_{uv}$, possibly coinciding with vertices of $C_{uv}$, and not both on the same edge of $G$. There exists a good curve connecting $p_1$ and $p_2$, lying inside $C_{uv}$, except at its end-points, and intersecting every edge of $G$ inside $C_{uv}$ at most once.
\end{lemma}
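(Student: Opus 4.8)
The key structural facts we are handed are that $C_{uv}$ has no vertex in its interior, and that every interior edge of $C_{uv}$ connects a vertex of $P_u$ to a vertex of $P_v$ (since $P_u$ and $P_v$ are induced and vertex-disjoint, no chord can join two vertices of the same path). Thus the subgraph $G_{uv}$ consisting of $C_{uv}$ together with its interior is an internally triangulated (or at least chordal-along-the-two-sides) outerplanar-like graph whose interior edges form a ``ladder'' or ``zigzag'' between the two paths $P_u$ and $P_v$. The plan is to exploit exactly this bipartite-chord structure: a curve that wants to avoid being blocked must weave through the faces of $G_{uv}$, and because every interior edge goes from one side to the other, the faces are arranged in a linear sequence.

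\textbf{The construction.}
First I would fix the two endpoints $p_1, p_2$ on the boundary of $C_{uv}$, each lying in the interior of some boundary edge or at a vertex. I would then identify, for each of $p_1$ and $p_2$, the interior face of $G_{uv}$ that the curve should first enter. The heart of the argument is to show that the interior faces of $G_{uv}$ can be linearly ordered so that consecutive faces share an edge, and that one can route a simple curve $\lambda$ that starts at $p_1$, passes monotonically through this sequence of faces (crossing each separating interior edge exactly once), and ends at $p_2$. Concretely, because every interior edge connects $P_u$ to $P_v$, the dual of the interior triangulation, restricted to interior faces, is a path (a ``triangulation of a polygon between two chains'' has a dual that is a tree, and here the special two-chain structure makes it a path or can be traversed as one). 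I would route $\lambda$ so that it crosses each interior edge at most once — which is immediate if we cross each separating edge exactly once as we move from one face to the next — and so that it enters and exits only through the two boundary points. The ``not both on the same edge'' hypothesis guarantees that $p_1$ and $p_2$ lie in (or on the boundary of) distinct faces or can be connected across at least one face, so $\lambda$ is genuinely interior except at its endpoints.

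\textbf{Verifying goodness.}
It then remains to check that $\lambda$ is \emph{good} in the sense of Section~\ref{le:topology}: for every edge $e$ of $G$, the curve either contains $e$ entirely or meets it in at most one point. For interior edges of $C_{uv}$ this holds by construction, since we cross each such edge at most once. For boundary edges of $C_{uv}$, the curve touches the boundary only at $p_1$ and $p_2$, so each boundary edge is met at most once (and the ``not both on the same edge'' condition prevents the degenerate case of two interior touchings of one boundary edge). Edges of $G$ lying outside $C_{uv}$ are disjoint from $\lambda$ entirely, since $\lambda$ lies inside $C_{uv}$ except at its endpoints; and by the interior-vertex-free property there are no stray vertices for $\lambda$ to be forced through. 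I would present the routing as an explicit induction on the number of interior faces of $G_{uv}$: in the base case $C_{uv}$ is a single face and $\lambda$ is a chord of that face, and in the inductive step I peel off the face incident to $p_1$, cross the one interior edge bounding it, and apply induction to the remaining region with the crossing point as the new starting endpoint.

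\textbf{Main obstacle.}
The step I expect to be most delicate is establishing that the interior faces admit the linear (path-like) traversal used above, and handling the boundary cases when $p_1$ or $p_2$ coincides with a vertex of $C_{uv}$ rather than lying in the interior of an edge. When an endpoint is a vertex, the curve must emanate into a face incident to that vertex without crossing any incident edge, and one must argue such a face exists and can be chosen consistently with the face-sequence. The clean way to dispatch this is to observe that the two induced paths $P_u, P_v$ together with the bipartite chord set force each interior face to be a triangle with exactly two vertices on one path and one on the other (or one and two), so the dual path is unambiguous and the vertex-endpoint case is subsumed by choosing the first or last triangle of the sequence. I do not expect this to require heavy calculation, only a careful case analysis driven entirely by the ``every interior edge joins $P_u$ to $P_v$'' property, which is the structural engine of the whole lemma.
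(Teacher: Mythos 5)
Your argument is correct in substance but takes a genuinely different route from the paper. The paper's proof is a one-line geometric argument: represent $C_{uv}$ as a strictly convex polygon, draw the interior edges as straight-line segments, and take $\lambda$ to be the straight segment $\overline{p_1p_2}$; goodness then follows because two straight segments meet in at most one point, and the ``not both on the same edge'' hypothesis rules out the degenerate overlap. You instead give a purely topological argument: since every interior edge of $C_{uv}$ joins a vertex of $P_u$ to a vertex of $P_v$ and these chords are pairwise non-crossing, they are linearly ordered and the interior faces form a linear sequence whose dual is a path, so one can route a curve from $p_1$ to $p_2$ crossing each separating chord exactly once. That structural engine is sound and the routing works, including the vertex-endpoint cases. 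The one inaccuracy is your claim that each interior face must be a triangle with two vertices on one path and one on the other: the interior of $C_{uv}$ need not be triangulated (there may be few or no chords), so faces can be arbitrary ``trapezoids'' bounded by two consecutive chords and subpaths of $P_u$ and $P_v$, or the whole interior may be a single face. Fortunately you never actually need the triangle claim --- the linear order of the faces follows directly from the fact that all chords go between the two induced paths and are non-crossing --- so this is a cosmetic blemish rather than a gap. The trade-off between the two proofs: the paper's is shorter and exploits the freedom to choose the drawing, while yours is drawing-independent and makes explicit the combinatorial structure (the dual face path) that is implicitly what makes the convex-position argument succeed.
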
 

\begin{proof}
The lemma has a simple geometric proof. Represent $C_{uv}$ as a strictly-convex polygon and draw the edges of $G$ inside $C_{uv}$ as straight-line segments. Then the straight-line segment $\overline{p_1 p_2}$ is a good curve satisfying the requirements of the lemma.
\end{proof}


We now describe how to construct curves $\lambda_{u}(G)$, $\lambda_v(G)$, and $\lambda_z(G)$. First, inductively construct curves $\lambda_{u'}(H)$, $\lambda_{v'}(H)$, and $\lambda_{z'}(H)$ for $H$. The construction of $\lambda_{u}(G)$, $\lambda_v(G)$, and $\lambda_z(G)$ varies based on how many among $P_u$, $P_v$, and $P_z$ are single vertices. Observe that not all of $P_u$, $P_v$, and $P_z$ are single vertices, as $w_1\neq u,v,z$. 

Suppose first that none of $P_u$, $P_v$, and $P_z$ is a single vertex, as in Fig.~\ref{fig:plane3-tree-construction}(a). We describe how to construct $\lambda_{u}(G)$, as the construction of $\lambda_{v}(G)$ and $\lambda_{z}(G)$ is analogous. 

\begin{figure}[htb]
\begin{center}
\begin{tabular}{c c c}
\mbox{\includegraphics[scale=.6]{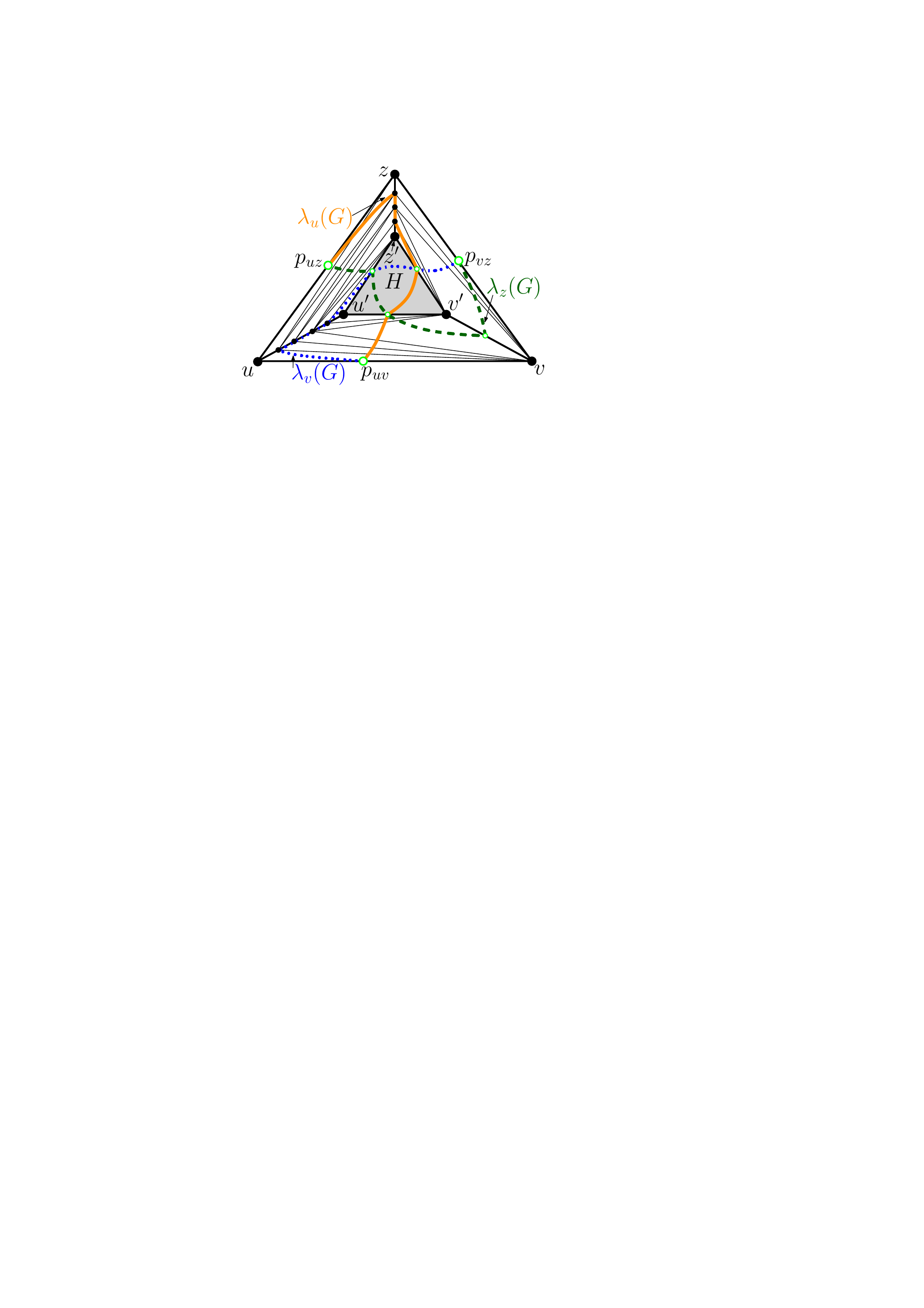}} \hspace{1mm} &
\mbox{\includegraphics[scale=.6]{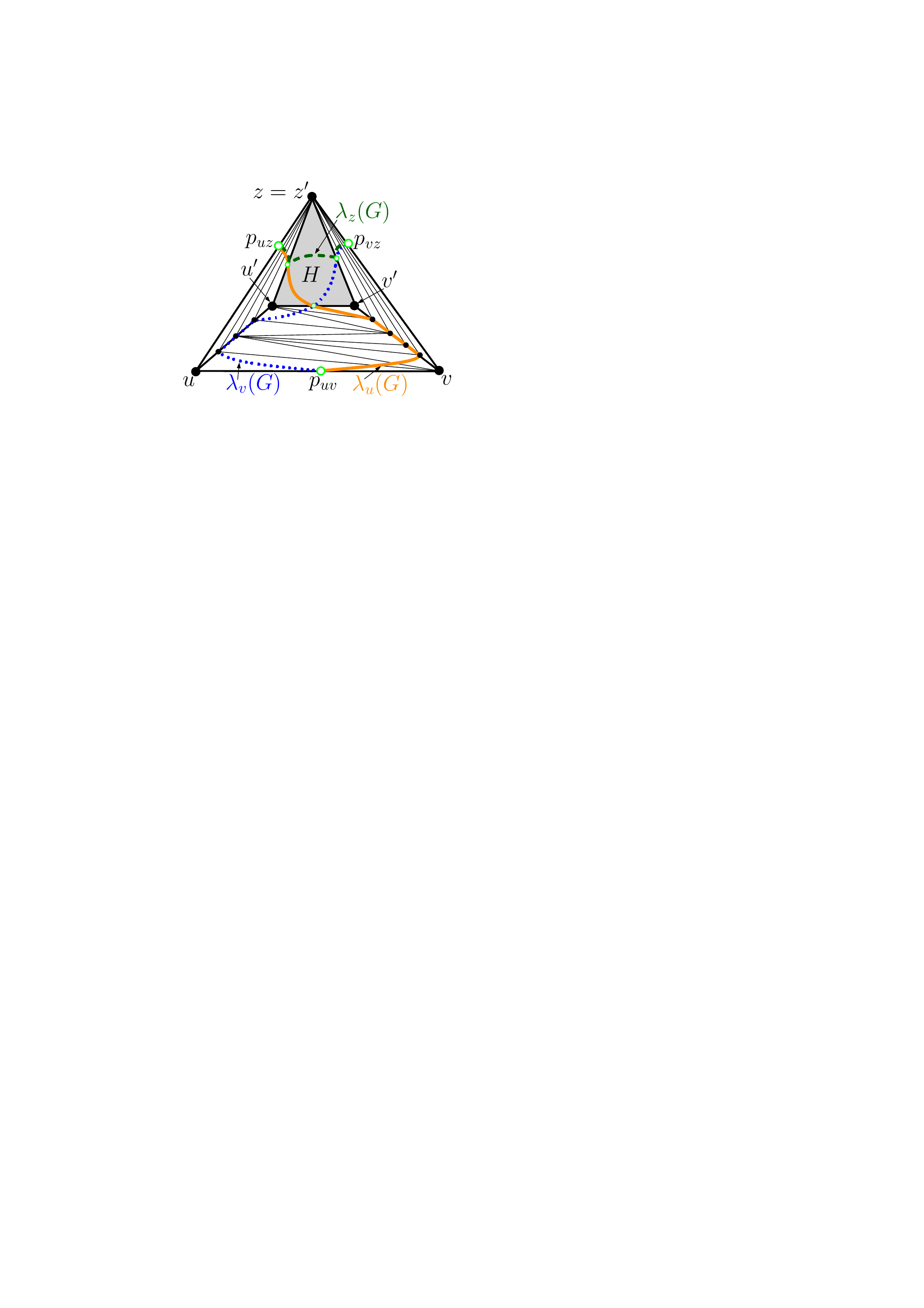}} \hspace{1mm} &
\mbox{\includegraphics[scale=.6]{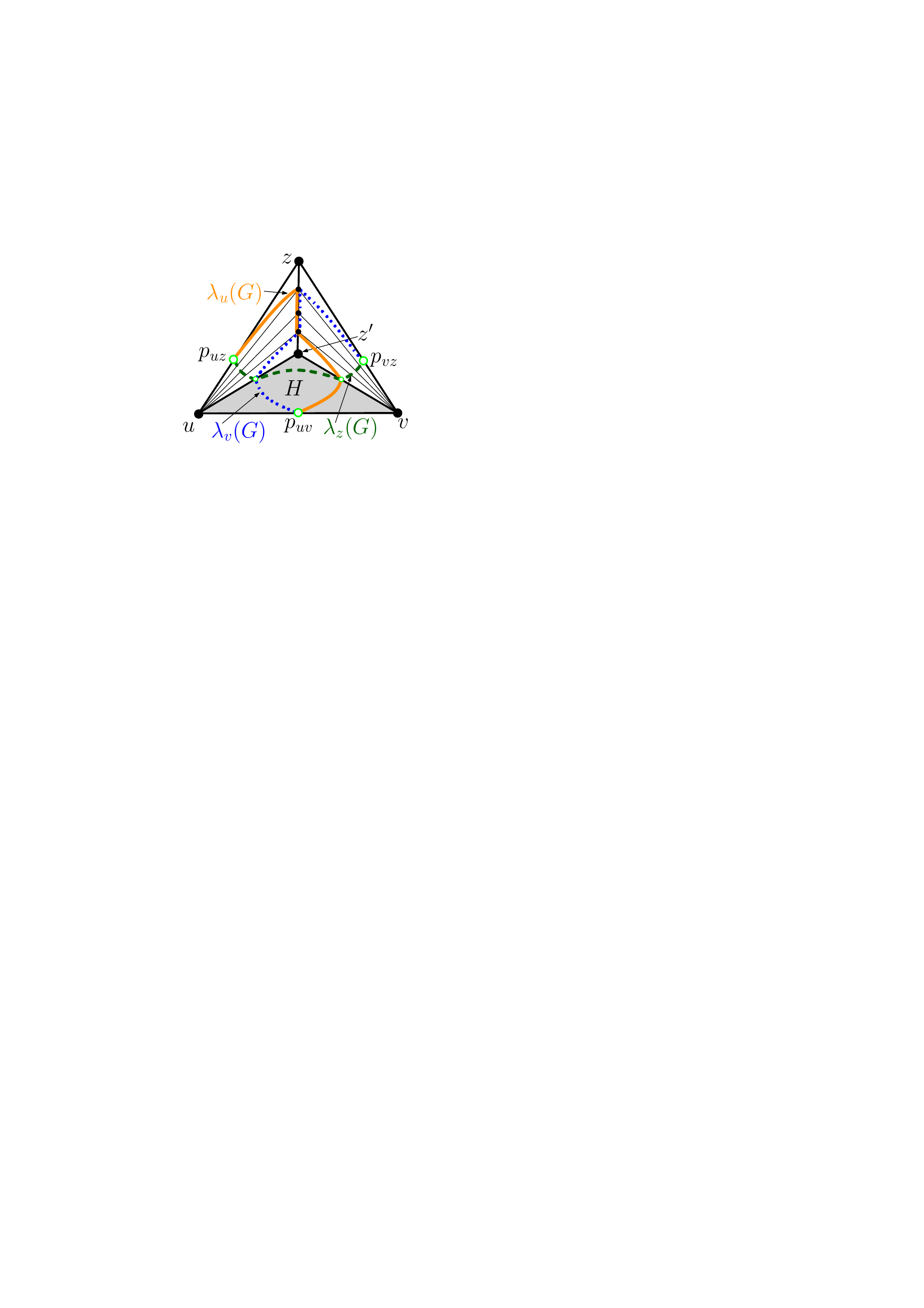}}\\
(a) \hspace{1mm} & (b) \hspace{1mm} & (c)
\end{tabular}
\caption{Construction of $\lambda_{u}(G)$, $\lambda_{v}(G)$, and $\lambda_{z}(G)$ if $w$ is of type B. (a) None of $P_u$, $P_v$, and $P_z$ is a single vertex. (b) $P_z$ is a single vertex while $P_u$ and $P_v$ are not. (c) $P_u$ and $P_v$ are single vertices while $P_z$ is not.}
\label{fig:plane3-tree-construction}
\end{center}
\end{figure}

\begin{itemize}
\item If $Z>2$, then $\lambda_{u}(G)$ consists of curves $\lambda^0_{u},\dots,\lambda^4_{u}$. Curve $\lambda^0_{u}$ lies inside $C_{uz}$ and connects $p_{uz}$ with $z_2$, which is internal to $P_z$ since $Z>2$; curve $\lambda^1_{u}$ coincides with path $(z_2,\dots,z_{Z-1})$ (the path consists of a single vertex if $Z=3$); curve $\lambda^2_{u}$ lies inside $C_{vz}$ and connects $z_{Z-1}$ with $p_{v'z'}$; curve $\lambda^3_{u}$ coincides with $\lambda_{v'}(H)$; finally, $\lambda^4_{u}$ lies inside $C_{uv}$ and connects $p_{u'v'}$ with $p_{uv}$. Curves $\lambda^0_{u}$, $\lambda^2_{u}$, and $\lambda^4_{u}$ are constructed as in Lemma~\ref{le:outerplanar-draw}. 
\item If $Z=2$, then $\lambda_{u}$ consists of curves $\lambda^1_{u},\dots,\lambda^4_{u}$. Curve $\lambda^1_{u}$ lies inside $C_{uz}$ and connects $p_{uz}$ with $p_{zz'}$; curve $\lambda^2_{u}$ lies inside $C_{vz}$ and connects $p_{zz'}$ with $p_{v'z'}$; curves $\lambda^3_{u}$ and $\lambda^4_{u}$ are defined as in the case $Z>2$. Curves $\lambda^1_{u}$, $\lambda^2_{u}$, and $\lambda^4_{u}$ are constructed as in Lemma~\ref{le:outerplanar-draw}.
\end{itemize}

Suppose next that one of $P_u$, $P_v$, and $P_z$, say $P_z$, is a single vertex, as in Fig.~\ref{fig:plane3-tree-construction}(b). We describe how to construct $\lambda_{u}(G)$ and $\lambda_{z}(G)$; the construction of $\lambda_{v}(G)$ is analogous to the one of $\lambda_{u}(G)$. Curve $\lambda_{z}(G)$ consists of curves $\lambda^0_{z},\lambda^1_{z},\lambda^2_{z}$. Curve $\lambda^0_{z}$ lies inside $C_{uz}$ and connects $p_{uz}$ with $p_{u'z}$; curve $\lambda^1_{z}$ coincides with $\lambda_{z'}(H)$; curve $\lambda^2_{z}$ lies inside $C_{vz}$ and connects $p_{v'z}$ with $p_{vz}$. Curves $\lambda^0_{z}$ and $\lambda^2_{z}$ are constructed as in Lemma~\ref{le:outerplanar-draw}. Curve $\lambda_{u}(G)$ is constructed as follows.

\begin{itemize}
\item If $V>2$, then $\lambda_{u}(G)$ consists of curves $\lambda^0_{u},\dots,\lambda^4_{u}$. Curve $\lambda^0_{u}$ lies inside $C_{uv}$ and connects $p_{uv}$ with $v_2$, which is internal to $P_v$ since $V>2$; curve $\lambda^1_{u}$ coincides with path $(v_2,\dots,v_{V-1})$ (the path consists of a single vertex if $V=3$); curve $\lambda^2_{u}$ lies inside $C_{uv}$ and connects $v_{V-1}$ with $p_{u'v'}$; curve $\lambda^3_{u}$ coincides with $\lambda_{u'}(H)$; finally, $\lambda^4_{u}$ coincides with $\lambda^0_{z}$. Curves $\lambda^0_{u}$, $\lambda^2_{u}$, and $\lambda^4_{u}$ are constructed as in Lemma~\ref{le:outerplanar-draw}.
\item If $V=2$, then $\lambda_{u}(G)$ consists of curves $\lambda^0_{u},\lambda^1_{u},\lambda^2_{u}$. Curve $\lambda^0_{u}$ lies inside $C_{uv}$ and connects $p_{uv}$ with $p_{u'v'}$; curve $\lambda^1_{u}$ coincides with $\lambda_{u'}(H)$; curve $\lambda^2_{u}$ coincides with $\lambda^0_{z}$. Curves $\lambda^0_{u}$ and $\lambda^2_{u}$ are constructed as in Lemma~\ref{le:outerplanar-draw}.
\end{itemize}

Suppose finally that two of $P_u$, $P_v$, and $P_z$, say $P_u$ and $P_v$, are single vertices, as in Fig.~\ref{fig:plane3-tree-construction}(c). We describe how to construct $\lambda_{u}(G)$ and $\lambda_{z}(G)$; the construction of $\lambda_{v}(G)$ is analogous to the one of $\lambda_{u}(G)$. Curve $\lambda_{z}(G)$ consists of curves $\lambda^0_{z},\lambda^1_{z},\lambda^2_{z}$. Curve $\lambda^0_{z}$ lies inside $C_{uz}$ and connects $p_{uz}$ with $p_{uz'}$; curve $\lambda^1_{z}$ coincides with $\lambda_{z'}(H)$; curve $\lambda^2_{z}$ lies inside $C_{vz}$ and connects $p_{vz'}$ with $p_{vz}$. Curves $\lambda^0_{z}$ and $\lambda^2_{z}$ are constructed as in Lemma~\ref{le:outerplanar-draw}. Curve $\lambda_{u}(G)$ is constructed as follows.

\begin{itemize}
\item If $Z>2$, then $\lambda_{u}(G)$ consists of curves $\lambda^0_{u},\dots,\lambda^3_{u}$. Curve $\lambda^0_{u}$ lies inside $C_{uz}$ and connects $p_{uz}$ with $z_2$, which is internal to $P_z$ since $Z>2$; curve $\lambda^1_{u}$ coincides with path $(z_2,\dots,z_{Z-1})$ (the path consists of a single vertex
 if $Z=3$); curve $\lambda^2_{u}$ lies inside $C_{vz}$ and connects $z_{Z-1}$ with $p_{vz'}$; finally, curve $\lambda^3_{u}$ coincides with $\lambda_{v'}(H)$. Curves $\lambda^0_{u}$ and $\lambda^2_{u}$ are constructed as in Lemma~\ref{le:outerplanar-draw}.
\item If $Z=2$, then $\lambda_{u}(G)$ consists of curves $\lambda^0_{u},\lambda^1_{u},\lambda^2_{u}$. Curve $\lambda^0_{u}$ lies inside $C_{uz}$ and connects $p_{uz}$ with $p_{zz'}$; curve $\lambda^1_{u}$ lies inside $C_{vz}$ and connects $p_{zz'}$ with $p_{vz'}$; curve $\lambda^2_{u}$ coincides with $\lambda_{v'}(H)$. Curves $\lambda^0_{u}$ and $\lambda^1_{u}$ are constructed as in Lemma~\ref{le:outerplanar-draw}.
\end{itemize}

This completes the construction of $\lambda_{u}(G)$, $\lambda_{v}(G)$, and $\lambda_{z}(G)$. Since these curves lie in the interior of $G$ and since their end-points are incident to the outer face of $G$, they are proper. We now prove that they are good and pass through many vertices of $G$.

\begin{lemma} \label{le:plane-good}
Curves $\lambda_{u}(G)$, $\lambda_{v}(G)$, and $\lambda_{z}(G)$ are good.
\end{lemma}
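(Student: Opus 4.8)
The plan is to prove the lemma by induction on $m$, the number of internal vertices of $G$, following exactly the recursive structure used to construct the curves. Before starting I would record the invariant that the construction maintains and that makes the gluing work: each of $\lambda_u(G)$, $\lambda_v(G)$, $\lambda_z(G)$ lies in the interior of $G$ and meets the outer triangle $(u,v,z)$ only at its two prescribed end-points, which are interior points of two of the edges $(u,v)$, $(u,z)$, $(v,z)$. Goodness is a local condition — for every edge $e$ of $G$ the curve must either contain $e$ entirely or meet $e$ in at most one point — so the whole argument reduces to controlling, edge by edge, how many of the constituent pieces of a given curve touch $e$, and the invariant guarantees that a piece touches a boundary edge of its region only at an end-point.

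For the base cases $m\leq 1$ the curves traverse one or two internal faces and, when $m=1$, pass once through the central vertex $w$; a direct inspection shows that every incident edge is met at most once (at $w$ or at some $p_{xy}$), so the curves are good. For the inductive step when $w$ is of type C or D, I would use that the children $G_1,G_2,G_3$ have pairwise disjoint interiors meeting only along $(u,w)$, $(v,w)$, $(z,w)$ and at $w$. Since each of $\lambda_u(G),\lambda_v(G),\lambda_z(G)$ is assembled from exactly one inductively good piece inside each child, every edge interior to a child is met at most once by induction, while each shared edge among $(u,w),(v,w),(z,w)$ is met either not at all or only at the single point $p_{xy}$ where two consecutive pieces are glued; here the invariant ensures a child's curve touches the child's boundary triangle only at its end-points, so no shared edge is hit twice.

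The substantial case is $w$ of type B. Here I would exploit that the cycles $C_{uv}$, $C_{uz}$, $C_{vz}$ together with the triangle $u'v'z'$ bounding $H$ partition the interior of $G$, that these cycles contain no vertex in their interiors, and that every edge interior to one of them joins two of the paths $P_u,P_v,P_z$. Each piece of a curve either lies inside one of $C_{uv},C_{uz},C_{vz}$ (and, by Lemma~\ref{le:outerplanar-draw}, is good and meets every edge interior to that cycle at most once), or coincides with a subpath of $P_u,P_v,P_z$ (hence contains its edges entirely), or coincides with an inductively good curve $\lambda_{u'}(H),\lambda_{v'}(H),\lambda_{z'}(H)$ in $H$. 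I would then classify the edges of $G$ — interior edges of a single cycle, edges of $P_u,P_v,P_z$, edges interior to $H$, and the boundary edges $(u,v),(u,z),(v,z),(u'v'),(u'z'),(v'z')$ — and for each class bound the total number of crossings contributed by the pieces, using the invariant so that boundary edges are touched only at gluing points and an edge contained in a path-piece is treated as entirely contained regardless of any incidental touch at a shared vertex.

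The main obstacle, and the only delicate geometric point, arises in the subcases where two pieces of the same curve lie inside the same cycle, e.g.\ $\lambda^0_u$ and $\lambda^2_u$ inside $C_{uv}$ when $P_z$ is a single vertex and $V>2$: Lemma~\ref{le:outerplanar-draw} controls each piece separately, so a priori both could cross one common interior edge. I would resolve this in the strictly convex representation of $C_{uv}$ furnished by Lemma~\ref{le:outerplanar-draw}: the two chords reach the boundary near the two distinct end-vertices $v$ and $v'$ of the path $P_v$ that they attach to, and a chord-crossing (alternation) argument shows that $\lambda^0_u$ can cross only interior edges incident to $v$ while $\lambda^2_u$ can cross only those incident to $v'$. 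Since $v\neq v'$, no interior edge is crossed by both, and together with the path edges contained by the intermediate piece the curve meets every edge inside $C_{uv}$ at most once. Assembling these observations across all edge classes and all subcases, and symmetrically for $\lambda_v(G)$ and $\lambda_z(G)$, shows that the three curves are good and completes the induction.
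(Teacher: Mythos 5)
Your proposal is correct and follows essentially the same route as the paper's proof: induction on $m$ mirroring the construction, gluing inductively good pieces in the type C/D case, and in the type B case classifying edges and resolving the only delicate situation (two chord-pieces of the same curve inside one cycle $C_{uv}$) via the convex representation of Lemma~\ref{le:outerplanar-draw} together with an end-point alternation argument showing that $\lambda^0_u$ meets only the interior chords incident to $v_1$ and $\lambda^2_u$ only those incident to $v_V$. This is exactly the paper's argument, merely organized by piece rather than by edge.
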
 

\begin{proof}
We prove that $\lambda_{u}(G)$ is good by induction on $m$; the proof for $\lambda_{v}(G)$ and $\lambda_{z}(G)$ is analogous. If $m\leq 1$ the statement is trivial. If $m>1$, then the central vertex $w$ of $G$ is of one of types B--D. 

If $w$ is of type C or D, then $\lambda_{u}(G)$ is composed of the three curves $\lambda_v(G_1)$, $\lambda_w(G_3)$, and $\lambda_z(G_2)$, each of which is good by induction. By construction, $\lambda_{u}(G)$ intersects edges $(u,v)$, $(v,w)$, $(z,w)$, and $(u,z)$ at points $p_{uv}$, $p_{vw}$, $p_{zw}$, and $p_{uz}$, respectively, and does not intersect edges $(v,z)$ and $(u,w)$ at all. Consider an edge $e$ internal to $G_1$. Curves $\lambda_w(G_3)$ and $\lambda_z(G_2)$ have no intersection with the region of the plane inside cycle $(u,v,w)$; further, $\lambda_{u}(G)$ does not pass through $u$, $v$, or $w$. Hence, $\lambda_{u}(G)$ contains $e$ or intersects at most once $e$, given that $\lambda_v(G_1)$ is good. Analogously, $\lambda_{u}(G)$ contains or intersects at most once every internal edge of $G_2$ and $G_3$. 

Assume now that $w$ is of type B. We prove that, for every edge $e$ of $G$, curve $\lambda_{u}(G)$ either contains $e$ or intersects $e$ at most once.


\begin{itemize}
\item By construction, $\lambda_{u}(G)$ intersects each of $(u,v)$, $(u,z)$, $(v,z)$, $(u',v')$, $(u',z')$, and  $(v',z')$ at most once. Also, $\lambda_{u}(G)$ has no intersection with any edge of the path $P_u$.
\item Consider an edge $e$ internal to $H$. The curves that compose $\lambda_{u}(G)$ and that lie inside $C_{uv}$, $C_{uz}$, or $C_{vz}$, or that coincide with a subpath of $P_v$ or $P_z$ have no intersection with the region of the plane inside cycle $(u',v',z')$; further, $\lambda_{u}(G)$ does not pass through $u'$, $v'$, or $z'$. Hence, $\lambda_{u}(G)$ contains $e$ or intersects at most once $e$, given that $\lambda_{u'}(H)$, $\lambda_{v'}(H)$, and $\lambda_{z'}(H)$ are good. 
\item Consider an edge $e=(v_j,v_{j+1})\in P_v$ (the argument for the edges in $P_z$ is analogous). If $\lambda_{u}(G)$ has no intersection with $P_v$, then it has no intersection with $e$. If $\lambda_{u}(G)$ intersects $P_v$ and $V>2$, then it contains $e$ (if $2\leq j\leq V-2$), or it intersects $e$ only at $v_{j+1}$ (if $j=1$), or it intersects $e$ only at $v_{j}$ (if $j=V-1$). Finally, if $\lambda_{u}(G)$ intersects $P_v$ and $V=2$, then $\lambda_{u}(G)$ properly crosses $e$ at $p_{vv'}$. 
\item We prove that $\lambda_{u}(G)$ intersects at most once the edges inside $C_{uv}$ (the argument for the edges inside $C_{uz}$ or $C_{vz}$ is analogous). Recall that, since $P_u$ and $P_v$ are induced, every edge inside $C_{uv}$ connects a vertex of $P_u$ and a vertex of $P_v$. Assume that $\lambda_{u}(G)$ contains a curve $\lambda^0_{u}$ inside $C_{uv}$ that connects $p_{uv}$ with $v_2$, a curve $\lambda^1_{u}$ that coincides with path $(v_2,\dots,v_{V-1})$, and a curve $\lambda^2_{u}$ inside $C_{uv}$ that connects $v_{V-1}$ with $p_{u'v'}$, as in Fig.~\ref{fig:plane3-tree-construction}(b); all the other cases are simpler to handle. 

\begin{itemize}
\item Consider any edge $e$ incident to $v_1$ inside $C_{uv}$. Curve $\lambda^0_{u}$ intersects $e$ once -- in fact the end-points of $\lambda^0_{u}$ alternate with those of $e$ along $C_{uv}$, hence $\lambda^0_{u}$ intersects $e$; moreover, $\lambda^0_{u}$ and $e$ do not intersect more than once by Lemma~\ref{le:outerplanar-draw}. Path $(v_2,\dots,v_{V-1})$, and hence curve $\lambda^1_{u}$ that coincides with it, has no intersection with $e$, since the end-vertices of $e$ are not in $v_2,\dots,v_{V-1}$. Further, curve $\lambda^2_{u}$ has no intersection with $e$ -- in fact the end-points of $\lambda^2_{u}$ do not alternate with those of $e$ along $C_{uv}$, hence if $\lambda^2_{u}$ and $e$ intersected, they would intersect at least twice, which is not possible by Lemma~\ref{le:outerplanar-draw}. Thus, $\lambda_{u}(G)$ intersects $e$ once.
\item Analogously, every edge $e$ incident to $v_V$ inside $C_{uv}$ has no intersection with $\lambda^0_{u}$, no intersection with $\lambda^1_{u}$, and one intersection with $\lambda^2_{u}$, hence $\lambda_{u}(G)$ intersects $e$ once. 
\item Finally, consider any edge $e$ incident to $v_j$, with $2\leq j\leq V-1$. Curve $\lambda^0_{u}$ and $\lambda^2_{u}$ have no intersection with $e$ -- in fact the end-points of each of these curves do not alternate with those of $e$ along $C_{uv}$, hence each of these curves does not intersect $e$ by Lemma~\ref{le:outerplanar-draw}. Further, $\lambda^1_{u}$ contains an end-vertex of $e$ and thus it intersects $e$ once. It follows that $\lambda_{u}(G)$ intersects $e$ once.
\end{itemize}
\end{itemize}

This concludes the proof of the lemma.
\end{proof}

We introduce three parameters. Let $s(G)$ be the total number of vertices of $G$ curves $\lambda_{u}(G)$, $\lambda_{v}(G)$, and $\lambda_{z}(G)$ pass through, counting each vertex with a multiplicity equal to the number of curves that pass through it. Further, let $x(G)$ be the number of internal vertices of type B none of $\lambda_{u}(G)$, $\lambda_{v}(G)$, and $\lambda_{z}(G)$ passes through. Finally, let $h(G)$ be the number of B-chains of $G$. We have the following inequalities.

\begin{lemma} \label{le:inequalities}
The following hold true if $m\geq 1$:
\begin{enumerate}
\item[(1)] $a(G)+b(G)+c(G)+d(G)=m$;
\item[(2)] $a(G)=c(G)+2 d(G)+1$;
\item[(3)] $h(G)\leq 2c(G)+3 d(G)+1$;
\item[(4)] $x(G)\leq b(G)$;
\item[(5)] $x(G)\leq 3h(G)$; and
\item[(6)] $s(G)\geq 3a(G)+b(G)-x(G)$.
\end{enumerate}
\end{lemma}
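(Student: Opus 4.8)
The plan is to introduce an auxiliary rooted tree $T(G)$ that encodes the recursive decomposition of $G$ into the plane $3$-trees $G(x)$, to read inequalities (1)--(3) off its structure, and to derive (4)--(6) directly from the curve construction. I would let $T(G)$ have one node for each internal vertex of $G$, rooted at the central vertex $w$, where the children of a node $x$ are the central vertices of the non-empty children of $G(x)$. By the definition of the types, a node of type A, B, C, or D has exactly $0$, $1$, $2$, or $3$ children, respectively. Inequality (1) is then immediate, since every internal vertex has exactly one type and there are $m$ of them. For (2), I would count the edges of $T(G)$ in two ways: summing out-degrees gives $|E(T(G))| = b(G)+2c(G)+3d(G)$, while $|E(T(G))| = m-1$ because $T(G)$ is a tree on $m$ nodes; combining this with (1) yields $a(G) = c(G)+2d(G)+1$. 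For (3), I would note that, since every type-B node has a unique child, the type-B nodes decompose into maximal directed paths, which are exactly the B-chains, so $h(G)$ is their number. The topmost node of a B-chain is either the root or has a parent not of type B (a type-B parent would absorb it into its own chain), and since type-A nodes are leaves, such a parent must be of type C or D. Hence every B-chain other than the one possibly containing the root is entered through a type-B child of a type-C or type-D node, and there are at most $2c(G)+3d(G)$ such children; adding $1$ for the root gives $h(G) \le 2c(G)+3d(G)+1$.

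Next I would turn to the curve construction. Inequality (4) is trivial, because the $x(G)$ vertices it counts are type-B vertices and there are $b(G)$ of them in total. For (5) I would examine a single B-chain $w_1,\dots,w_i$: by the inductive construction each $w_j$ is appended to exactly one of the paths $P_u,P_v,P_z$ as its new endpoint, pushing the previous endpoint to become an internal path vertex. Going through the cases of the construction (none of the paths a single vertex; one of them a single vertex; two of them single vertices; and the subcases $V=2$ and $Z=2$), I would verify that every internal vertex of $P_u$, $P_v$, and $P_z$ is traversed by the curve that ``coincides with'' the corresponding subpath, so that the only type-B vertices of the chain that can escape all three curves are the three final path endpoints $u',v',z'$. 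This gives at most $3$ missed type-B vertices per B-chain, i.e., $x(G) \le 3h(G)$.

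Finally, for (6) I would count $s(G)$ by type, arguing by induction on $m$. Each of $\lambda_u(G),\lambda_v(G),\lambda_z(G)$ is assembled from one inductively constructed curve per child or per the subtree $H$, each of which passes through all type-A vertices there; hence every type-A vertex is traversed by all three curves and contributes $3a(G)$, while type-C and type-D vertices are traversed by none. Each of the $b(G)-x(G)$ type-B vertices not counted by $x(G)$ is, by definition of $x(G)$, traversed by at least one curve, contributing at least $b(G)-x(G)$. Summing these contributions gives $s(G) \ge 3a(G)+b(G)-x(G)$.

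The main obstacle is step (5): in contrast to the clean tree-counting behind (1)--(3), it requires careful bookkeeping of the curve construction to confirm that, within each B-chain, only the three endpoints $u',v',z'$ can fail to lie on all three curves. This rests on the case analysis of the construction and on the goodness and traversal properties established in Lemma~\ref{le:plane-good}, which in turn relies on Lemma~\ref{le:outerplanar-draw}.
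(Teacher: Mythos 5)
Your proposal is correct, and for parts (2), (3) and (6) it takes a genuinely different route from the paper. The paper proves (2), (3) and (6) by induction on $m$ with a case analysis on the type of the central vertex $w$ (and, for type B, a further sub-analysis on the type of the central vertex of $H$); you instead make the decomposition tree $T(G)$ explicit and read (2) off as a double count of its edges ($m-1=b(G)+2c(G)+3d(G)$ versus the out-degree distribution $0,1,2,3$ for types A--D), read (3) off by observing that B-chains are exactly the maximal directed type-B paths of $T(G)$ and that each one other than possibly the root's is entered through one of the $2c(G)+3d(G)$ children of type-C/D nodes, and obtain (6) by summing per-vertex contributions to $s(G)$ (each type-A vertex contributes $3$ since all three curves pass through it, each type-B vertex not counted by $x(G)$ contributes at least $1$). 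This global counting is cleaner and less error-prone than the paper's case-by-case inductions, at the cost of having to justify that $T(G)$ is a well-defined rooted tree on the $m$ internal vertices and that the paper's recursively defined B-chains coincide with the maximal type-B paths of $T(G)$ --- both of which do hold, since every internal vertex is the central vertex of exactly one $G(x)$ and a B-chain starts precisely at the root or at a type-B child of a type-C/D node. The paper's inductive proofs, by contrast, stay entirely inside the recursive framework already set up for the curve construction. Your treatment of (1), (4) and (5) matches the paper's: in particular, for (5) you correctly identify that the only type-B vertices of a B-chain that all three curves can miss are the three final path endpoints $u'$, $v'$, $z'$, since every $w_j$ is appended to exactly one of $P_u$, $P_v$, $P_z$ and the curves cover all internal vertices of those paths.
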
 

\begin{proof}
{\bf (1)} $a(G)+b(G)+c(G)+d(G)=m$. This equality follows from the fact that every internal vertex of $G$ is of one of types A--D. 

\noindent{\bf (2)} $a(G)=c(G)+2 d(G)+1$. We use induction on $m$. If $m=1$ the statement is easily proved, as then the only internal vertex $w$ of $G$ is of type A, hence $a(G)=1$ and $c(G)=d(G)=0$. If $m>1$, then the central vertex $w$ of $G$ is of one of types B--D. 
%
%

	Suppose first that $w$ is of type B. Also, suppose that $G_1$ has internal vertices; the other cases are analogous. 
	Since $w$ is of type B, we have $a(G)=a(G_1)$, $c(G)=c(G_1)$, and $d(G)=d(G_1)$. Hence, $a(G)=a(G_1)=c(G_1)+2 d(G_1)+1=c(G)+2 d(G)+1$; the second equality holds by induction.
	
	Suppose next that $w$ is of type C. Also, suppose that $G_1$ and $G_2$ have internal vertices; the other cases are analogous. Since $w$ is of type C, we have $a(G)=a(G_1)+a(G_2)$, $c(G)=c(G_1)+c(G_2)+1$, and $d(G)=d(G_1)+d(G_2)$. Hence, $a(G)=a(G_1)+a(G_2)=(c(G_1)+2 d(G_1)+1)+(c(G_2)+2 d(G_2)+1)=(c(G_1)+c(G_2)+1)+2(d(G_1)+d(G_2))+1=c(G)+2 d(G)+1$; the second equality holds by induction.
	 
	Suppose finally that $w$ is of type D. Then we have $a(G)=a(G_1)+a(G_2)+a(G_3)$, $c(G)=c(G_1)+c(G_2)+c(G_3)$, and $d(G)=d(G_1)+d(G_2)+d(G_3)+1$. Hence, $a(G)=a(G_1)+a(G_2)+a(G_3)=(c(G_1)+2 d(G_1)+1)+(c(G_2)+2 d(G_2)+1)+(c(G_3)+2 d(G_3)+1)=(c(G_1)+c(G_2)+c(G_3))+2(d(G_1)+d(G_2)+d(G_3)+1)+1=c(G)+2 d(G)+1$; the second equality holds by induction. 
%

{\bf (3)} $h(G)\leq 2c(G)+3 d(G)+1$. We use induction on $m$. If $m=1$, then the only internal vertex $w$ of $G$ is of type A, hence $h(G)=0<1=2c(G)+3 d(G)+1$. If $m>1$, then the central vertex $w$ of $G$ is of one of types B--D. 

%
	Suppose first that $w$ is of type C. Also, suppose that $G_1$ and $G_2$ have internal vertices; the other cases are analogous. Since $w$ is of type C, we have $h(G)=h(G_1)+h(G_2)$, $c(G)=c(G_1)+c(G_2)+1$, and $d(G)=d(G_1)+d(G_2)$. Hence, $h(G)=h(G_1)+h(G_2)\leq(2c(G_1)+3d(G_1)+1)+(2c(G_2)+3d(G_2)+1)=2(c(G_1)+c(G_2)+1)+3(d(G_1)+d(G_2))=
2c(G)+3d(G)<2c(G)+3d(G)+1$; the second inequality holds by induction.

Second, if $w$ is of type D, we have $h(G)=h(G_1)+h(G_2)+h(G_3)$, $c(G)=c(G_1)+c(G_2)+c(G_3)$, and $d(G)=d(G_1)+d(G_2)+d(G_3)+1$. Hence, $h(G) = h(G_1)+h(G_2)+h(G_3) \leq (2c(G_1)+3d(G_1)+1)+(2c(G_2)+3d(G_2)+1)+(2c(G_3)+3d(G_3)+1)=2(c(G_1)+c(G_2)+c(G_3))+3(d(G_1)+d(G_2)+d(G_3)+1)=
2c(G)+3d(G)<2c(G)+3d(G)+1$; the second inequality holds by induction.

%
Finally, suppose that $w$ is of type B. Then $w_1=w$ is the first vertex of a B-chain $w_1,\dots,w_i$ of $G$. Recall that $H$ is the only plane $3$-tree child of $w_i$ that has internal vertices. Let $x$ be the central vertex of $H$. By the maximality of $w_1,\dots,w_i$, we have that $x$ is not of type B, hence $x$ is of type A, C, or D. If $x$ is of type A, we have $h(G)=1$, $c(G)=d(G)=0$, hence $h(G)=1=2c(G)+3d(G)+1$.

If $x$ is of type C, then let $L_1$ and $L_2$ be the children of $H$ containing internal vertices. We have $h(G)=h(L_1)+h(L_2)+1$, $c(G)=c(L_1)+c(L_2)+1$, and $d(G)=d(L_1)+d(L_2)$. Thus, $h(G)=h(L_1)+h(L_2)+1\leq (2c(L_1)+3d(L_1)+1) + (2c(L_2)+3d(L_2)+1)+1=2(c(L_1)+c(L_2)+1)+3(d(L_1)+d(L_2))+1=2c(G)+3d(G)+1$; the second inequality holds by induction.

Finally, if $x$ is of type D, then let $L_1$, $L_2$, and $L_3$ be the children of $H$. We have $h(G)=h(L_1)+h(L_2)+h(L_3)+1$, $c(G)=c(L_1)+c(L_2)+c(L_3)$, and $d(G)=d(L_1)+d(L_2)+d(L_3)+1$. Thus, $h(G)=h(L_1)+h(L_2)+h(L_3)+1\leq (2c(L_1)+3d(L_1)+1) + (2c(L_2)+3d(L_2)+1)+(2c(L_3)+3d(L_3)+1)+1=2(c(L_1)+c(L_2)+c(L_3))+3(d(L_1)+d(L_2)+d(L_3)+1)+1=2c(G)+3d(G)+1$; the second inequality holds by induction.

{\bf (4)} $x(G)\leq b(G)$. This inequality follows from the fact that $x(G)$ is the number of vertices of type B of $G$ none of $\lambda_{u}(G)$, $\lambda_v(G)$, and $\lambda_z(G)$ passes through, hence this number cannot be larger than the number of vertices of type B of $G$. 

{\bf (5)} $x(G)\leq 3h(G)$. Every internal vertex of $G$ of type B belongs to a B-chain of $G$. Further, for every B-chain $w_1,w_2,\dots,w_i$ of $G$, curves $\lambda_{u}(G)$, $\lambda_{v}(G)$, and $\lambda_{z}(G)$ pass through all of $w_1,w_2,\dots,w_i$, except for at most three vertices $u'=u_U$, $v'=v_V$, and $z'=z_Z$ (note that, in the description of the construction of $\lambda_{u}(G)$, $\lambda_{v}(G)$, and $\lambda_{z}(G)$ if $w$ is of type B, vertices $u$, $v$, and $z$ are not among $w_1,w_2,\dots,w_i$). Thus, the number $x(G)$ of vertices of type B none of $\lambda_{u}(G)$, $\lambda_{v}(G)$, and $\lambda_{z}(G)$ passes through is at most three times the number $h(G)$ of B-chains of $G$.    

{\bf (6)} $s(G)\geq 3a(G)+b(G)-x(G)$. We use induction on $m$. If $m=1$ then the only internal vertex $w$ of $G$ is of type A, hence $a(G)=1$ and $b(G)=x(G)=0$. Further, by construction, each of $\lambda_{u}(G)$, $\lambda_{v}(G)$, and $\lambda_{z}(G)$ passes through $w$, hence $s(G)=3$. Thus, $s(G)=3=3a(G)+b(G)-x(G)$. If $m>1$, then the central vertex $w$ of $G$ is of one of types B--D. 

Suppose first that $w$ is of type C. Also, suppose that $G_1$ and $G_2$ have internal vertices; the other cases are analogous. Since $w$ is of type C, we have $a(G)=a(G_1)+a(G_2)$, $b(G)=b(G_1)+b(G_2)$, and $x(G)=x(G_1)+x(G_2)$. By construction, curves $\lambda_{u}(G)$, $\lambda_{v}(G)$, and $\lambda_{z}(G)$ contain all of $\lambda_{u}(G_1)$, $\lambda_{v}(G_1)$, $\lambda_{w}(G_1)$, $\lambda_{u}(G_2)$, $\lambda_{z}(G_2)$, and $\lambda_{w}(G_2)$. It follows that $s(G)=s(G_1)+s(G_2)\geq (3a(G_1)+b(G_1)-x(G_1))+(3a(G_2)+b(G_2)-x(G_2)) = 3(a(G_1)+a(G_2))+(b(G_1)+b(G_2))-(x(G_1)+x(G_2))=3a(G)+b(G)-x(G)$; the second inequality follows by induction.

Suppose next that $w$ is of type D. Then we have $a(G)=a(G_1)+a(G_2)+a(G_3)$, $b(G)=b(G_1)+b(G_2)+b(G_3)$, and $x(G)=x(G_1)+x(G_2)+x(G_3)$. By construction, curves $\lambda_{u}(G)$, $\lambda_{v}(G)$, and $\lambda_{z}(G)$ contain all of $\lambda_{u}(G_1)$, $\lambda_{v}(G_1)$, $\lambda_{w}(G_1)$, $\lambda_{u}(G_2)$, $\lambda_{z}(G_2)$, $\lambda_{w}(G_2)$, $\lambda_{v}(G_3)$, $\lambda_{z}(G_3)$, and $\lambda_{w}(G_3)$. It follows that $s(G)=s(G_1)+s(G_2)+s(G_3)\geq (3a(G_1)+b(G_1)-x(G_1))+(3a(G_2)+b(G_2)-x(G_2)) +(3a(G_3)+b(G_3)-x(G_3)) = 3(a(G_1)+a(G_2)+a(G_3))+(b(G_1)+b(G_2)+b(G_3))-(x(G_1)+x(G_2)+x(G_3))=3a(G)+b(G)-x(G)$; the second inequality follows by induction.

Suppose finally that $w$ is of type B. Then $w_1=w$ is the first vertex of a B-chain $w_1,\dots,w_i$ of $G$ and $H$ is the only plane $3$-tree child of $w_i$ that has internal vertices. Every internal vertex of $G$ of type A is internal to $H$, hence $a(G)=a(H)$. Every internal vertex of $G$ of type B is either an internal vertex of $H$ of type B, or is one among $w_1,\dots,w_i$; hence $b(G)=b(H)+i$. Since $\lambda_{u}(G)$, $\lambda_{v}(G)$, and $\lambda_{z}(G)$ contain all of $\lambda_{u'}(H)$, $\lambda_{v'}(H)$, and $\lambda_{z'}(H)$, we have that $s(G)$ is greater than or equal to $s(H)$ plus the number of vertices among $w_1,\dots,w_i$ curves $\lambda_{u}(G)$, $\lambda_{v}(G)$, and $\lambda_{z}(G)$ pass through; for the same reason, $x(G)$ is equal to $x(H)$ plus the number of vertices among $w_1,\dots,w_i$ none of $\lambda_{u}(G)$, $\lambda_{v}(G)$, and $\lambda_{z}(G)$ passes through. By construction, $\lambda_{u}(G)$, $\lambda_{v}(G)$, and $\lambda_{z}(G)$ do not pass through at most three vertices among $w_1,\dots,w_i$, hence $x(G)\leq x(H)+3$ and $s(G)\geq s(H)+i-3$. Thus, we have $s(G)\geq s(H)+i-3 \geq 3a(H)+b(H)-x(H) + i-3 = 3a(H)+(b(H)+i)-(x(H)+3)\geq 3a(G)+b(G)-x(G)$; the second inequality follows by induction.
\end{proof}

Lemma~\ref{le:inequalities} can be used to prove that one of $\lambda_{u}(G)$, $\lambda_{v}(G)$, and $\lambda_{z}(G)$ passes through many vertices of $G$. Let $k$ be a parameter to be determined later. 

If $a(G)\geq km$, then by (4) and (6) we get $s(G)\geq 3a(G)\geq 3km$.

If $a(G)< km$, by (1) and (6) we get $s(G)\geq 3a(G) + (m -a(G)- c(G) - d(G)) - x(G)$, which by (5) becomes $s(G)\geq m + 2a(G) - c(G) - d(G) - 3h(G)$. Using (2) and (3) we get $s(G)\geq m + 2 (c(G)+2d(G)+1) -c(G)-d(G)-3(2c(G)+3d(G)+1)= m - 5c(G)-6d(G)-1$. Again by (2) and by hypothesis we get $c(G)+2d(G)+1<km$, thus $5c(G)+6d(G)+1<5c(G)+10d(G)+5<5km$. Hence, $s(G)\geq m-5km$.

Let $k=\frac{1}{8}$. We get $3km=m-5km=\frac{3m}{8}$, thus $s(G)\geq \frac{3m}{8}$ both if $a(G)\geq \frac{m}{8}$ and if $a(G)< \frac{m}{8}$. It follows that one of $\lambda_u(G)$, $\lambda_v(G)$, and $\lambda_z(G)$ is a proper good curve passing through $\lceil\frac{n-3}{8}\rceil$ internal vertices of $G$. This concludes the proof of Theorem~\ref{th:3-trees}.

We now prove that not only a \cfsl\ of a plane $3$-tree exists with $\lceil\frac{n-3}{8}\rceil$ collinear vertices, but the geometric placement of these vertices can be arbitrarily prescribed, as long as it satisfies an ordering constraint. Since every plane graph of treewidth at most three is a subgraph of a plane $3$-tree~\cite{kv-npp3t-12}, this implies that every plane graph of treewidth at most three has a free collinear set with $\lceil\frac{n-3}{8}\rceil$ vertices.

\begin{theorem} \label{th:csvfcs} 
Every collinear set in a plane $3$-tree is also a free collinear set. 
\end{theorem}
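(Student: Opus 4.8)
The plan is to combine the topological characterization of Theorem~\ref{th:topology} with the freedom that Tutte's theorem offers in its sufficiency construction. Let $S$ be a collinear set of the plane $3$-tree $G$, witnessed by a \cfsl\ in which the vertices of $S$ lie on a line $\ell_0$; after an arbitrarily small perturbation of the vertices not in $S$ (which preserves planarity and straightness) I may assume that $\ell_0$ contains exactly the vertices of $S$. The necessity direction of Theorem~\ref{th:topology} then supplies a proper good curve $\lambda$ passing through exactly the vertices of $S$, and I would define $<_S$ to be the order in which $\lambda$ encounters these vertices as it is traversed from one end-point to the other. It then suffices to show that, for every increasing sequence $q_1<\dots<q_{|S|}$ of points on a line $\ell$, the graph $G$ has a \cfsl\ in which the $i$-th vertex of $S$ in the order $<_S$ is placed at $q_i$.

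Next I would rerun the sufficiency construction of Theorem~\ref{th:topology} on $\lambda$ while tracking coordinates. That construction subdivides each crossing of $\lambda$ with an edge by a dummy vertex, turns the trace of $\lambda$ into a path $(a,\dots,b)$ whose internal vertices are the vertices of $S$ together with these dummies, and closes it into two triangulated triconnected graphs $G_1$ and $G_2$ sharing the path, each drawn by Tutte's theorem~\cite{t-hdg-63} with its outer cycle realized as a prescribed convex polygon. The key point is that this outer polygon may be fixed arbitrarily: I place each vertex of $S$ exactly at $(q_i,0)$ on $\ell$ (the line $y=0$), interleave the dummy crossing vertices at freely chosen abscissae respecting their order along $\lambda$, and put the apices $d_1$ and $d_2$ above and below $\ell$. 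Since $G_1$ and $G_2$ use identical positions for the shared path, the two drawings glue consistently and every vertex of $S$ already sits at its target $(q_i,0)$. Deleting the dummies leaves a planar drawing of $G$ in which every edge not crossed by $\lambda$ is a straight segment, while every edge crossed by $\lambda$ is a $y$-monotone two-segment polyline whose only bend lies on $\ell$.

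The remaining, and hardest, step is to straighten these bent edges \emph{without moving the vertices of $S$}. The generic straightening of~\cite{efln-sldahgc-06,pt-mdpg-04} invoked in Theorem~\ref{th:topology} only preserves $y$-coordinates, so a priori it is free to slide the vertices of $S$ along $\ell$, and forbidding this slide is the crux. My plan is to regard $\ell$ as a single horizontal level carrying the vertices of $S$ together with one dummy vertex per bend, i.e.\ per crossing of $\lambda$ with an edge, so that each crossing edge becomes a two-layer edge routed through its dummy on $\ell$; I would then apply the straightening in a level-based form (adapting~\cite{efln-sldahgc-06,pt-mdpg-04} if necessary) that allows the abscissae of the \emph{real} vertices on a level to be prescribed while the dummies on that level are positioned by the procedure. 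In a straight-line drawing of this kind each such dummy lies on the straight segment realizing its edge, so deleting the dummies yields a genuine \cfsl\ of $G$ in which the vertices of $S$ remain pinned at $q_1,\dots,q_{|S|}$ in the order $<_S$. Verifying that the straightening can simultaneously honour the prescribed positions of all vertices of $S$ on the \emph{interior} level $\ell$ and keep every crossing edge straight is where I expect the technical heart of the argument to lie.
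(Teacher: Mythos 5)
Your first two steps are fine, but the proof does not go through as written because the entire difficulty of the theorem is concentrated in the step you defer to ``adapting~\cite{efln-sldahgc-06,pt-mdpg-04} if necessary''. Those straightening results only guarantee a planar straight-line drawing with the same $y$-coordinates; they give no control over abscissae, and a generic ``level-based'' straightening that pins the $x$-coordinates of the real vertices on an interior level cannot exist for arbitrary planar graphs: it would immediately show that every collinear set of every planar graph is a free collinear set, which the paper explicitly lists as an open problem posed by Ravsky and Verbitsky. Hence any correct completion of your plan must exploit the plane-$3$-tree structure somewhere, and your proposal never does. Note also the concrete obstruction you are glossing over: in a straight-line drawing, the point where an edge of $E_{\ell}$ crosses $\ell$ is determined by the positions of its two endpoints, which lie off $\ell$; pinning the vertices of $S$ forces these crossing points into prescribed gaps between consecutive pinned vertices, which propagates constraints to vertices arbitrarily far from $\ell$. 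Your ``dummies on that level are positioned by the procedure'' is exactly the interaction that has to be controlled, not a detail that a black-box procedure will absorb.

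The paper's proof does not pass through Theorem~\ref{th:topology} at all. It takes the witnessing drawing $\Psi$, labels every vertex $\uparrow$, $\downarrow$, or $=$, records the left-to-right order of the collinear vertices \emph{and} of the crossing points of the edges in $E_{\ell}$, prescribes a target point on $\ell$ for each of these $|S|+|E_{\ell}|$ objects, and proves by induction on the recursive structure of the plane $3$-tree (Lemma~\ref{le:conformal}) that a straight-line drawing exists respecting both the labelling and the full ordering: at each step the central vertex $w$ is placed, by a case analysis on the labels of $u$, $v$, $z$, and $w$, at a point that simultaneously realizes its own target (if $w$ is labelled $=$) and the targets of the crossings of $(u,w)$, $(v,w)$, $(z,w)$ with $\ell$, after which the three sub-$3$-trees are handled recursively. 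Prescribing the edge crossing points together with the vertex positions is what makes the induction close, and that is the idea missing from your proposal.
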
 

Let $G$ be an $n$-vertex plane $3$-tree with external vertices $u$, $v$, and $z$ in this counter-clockwise order along cycle $(u,v,z)$. Consider any \cfsl\ $\Psi$ of $G$ and a horizontal line $\ell$. Label each vertex of $G$ as $\uparrow$, $\downarrow$, or $=$ according to whether it lies above, below, or on $\ell$, respectively; let $S$ be the set of vertices labeled $=$. Let $E_{\ell}$ be the set of edges of $G$ that properly cross $\ell$ in $\Psi$; thus, the edges in $E_{\ell}$ have one end-vertex labeled $\uparrow$ and one end-vertex labeled $\downarrow$. Let $<_{\Psi}$ be the total ordering of $S \cup E_{\ell}$ corresponding to the left-to-right order in which the vertices in $S$ and the crossing points between the edges in $E_{\ell}$ and $\ell$ appear along $\ell$ in $\Psi$. 

Let $X$ be any set of $|S| + |E_{\ell}|$ distinct points on $\ell$. Each element in $S \cup E_{\ell}$ is associated with a point in $X$: The $i$-th element of $S \cup E_{\ell}$, where the elements in $S \cup E_{\ell}$ are ordered according to $<_{\Psi}$, is associated with the $i$-th point of $X$, where the points in $X$ are in left-to-right order along $\ell$. Denote by $X_S$ and $X_E$ the subsets of the points in $X$ associated to the vertices in $S$ and to the edges in $E_{\ell}$, respectively; also, denote by $q_x$ the point in $X$ associated with a vertex $x\in S$ and by $q_{xy}$ the point in $X$ associated with an edge $(x,y)\in E_{\ell}$. 

We have the following lemma, which implies Theorem~\ref{th:csvfcs}.

\begin{lemma} \label{le:conformal}
There exists a \cfsl\ $\Gamma$ of $G$ such that: 
\begin{enumerate}
\item[(1)] $\Gamma$ {\em respects the labeling} -- every vertex labeled $\uparrow$, $\downarrow$, or $=$ is above, below, or on $\ell$, respectively; and 
\item[(2)] $\Gamma$ {\em respects the ordering} -- every vertex in $S$ is placed at its associated point in $X_S$ and every edge in $E_{\ell}$ crosses $\ell$ at its associated point in $X_E$.
\end{enumerate}
\end{lemma}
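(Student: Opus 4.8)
The goal is to construct a planar straight-line drawing $\Gamma$ of the plane $3$-tree $G$ that respects both the labeling (vertices go to their designated side of $\ell$, or onto $\ell$) and the ordering (vertices of $S$ and crossing points of $E_{\ell}$ land at their prescribed points of $X$). The natural strategy is to proceed by induction on the number of internal vertices of $G$, exploiting the recursive $3$-tree structure already set up in Theorem~\ref{th:3-trees}: each plane $3$-tree is bounded by a triangle $(u,v,z)$ and, if non-empty, splits at its central vertex $w$ into the three children $G_1$, $G_2$, $G_3$ on the triangles $(u,v,w)$, $(u,z,w)$, and $(v,z,w)$. The key observation to establish first is that the labeling $\{\uparrow,\downarrow,=\}$ together with the prescribed ordering $<_{\Psi}$, both of which come from the given drawing $\Psi$, are \emph{consistent} in the sense that they determine, for the triangle $(u,v,z)$ and for each internal triangle, a well-defined convex target region on the correct side(s) of $\ell$ into which the corresponding child must be drawn.

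\textbf{Base case and the region setup.} I would first fix the images of the three external vertices $u$, $v$, $z$: each is already labeled, so its target side of $\ell$ is determined, and if it is in $S$ it is pinned to its point of $X_S$; otherwise I place it at any point on the correct open half-plane consistent with the left-to-right order induced by $<_{\Psi}$ on the edges of the outer triangle that cross $\ell$. The crossing points of the three outer edges with $\ell$ (those among them in $E_{\ell}$) are pinned to their points of $X_E$. This fixes a triangle $T$ whose sides may cross $\ell$ at prescribed points. For the inductive step at central vertex $w$: $w$ is labeled, so its side of $\ell$ is known and, if $w\in S$, its point $q_w\in X_S$ is fixed; otherwise it is placed freely on its side. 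Then the three child triangles $(u,v,w)$, $(u,z,w)$, $(v,z,w)$ each become a region, and I recurse into each one, passing down the restriction of the labeling, of $X$, and of the order $<_{\Psi}$ to the vertices and crossing edges inside that child. The recursive call must produce a drawing of the child inside its triangle respecting the inherited labeling and ordering.

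\textbf{The main obstacle.} The hard part is guaranteeing that the sub-instances handed to the children are themselves \emph{feasible}, i.e.\ that the prescribed points of $X$ falling inside a given child triangle lie in the correct left-to-right order relative to that triangle's two sides crossing $\ell$, and that placing a vertex of $S$ at a pinned point $q_x\in X_S$ never forces a side-crossing point out of its interval. This is where the order $<_{\Psi}$ derived from the original drawing $\Psi$ is essential: because $\Psi$ is itself planar, the points of $X_S\cup X_E$ inherited by each child occupy a \emph{contiguous} interval of $\ell$ strictly between the crossing points of that child's two non-horizontal boundary edges, so the target convex region for the child can be chosen (stretching it vertically as needed, as in the flexible-drawing arguments underlying Lemma~\ref{le:outerplanar-draw} and Theorem~\ref{th:topology}) to contain exactly those prescribed points in exactly the inherited order. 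I would make this precise by maintaining an invariant that, for the triangle currently being processed, the prescribed points assigned to its interior form a contiguous sub-order of $<_{\Psi}$ bracketed by the two boundary crossing points, and that the two boundary crossings themselves respect $<_{\Psi}$; the planarity of $\Psi$ yields this invariant for the outer triangle, and I check it is preserved when splitting at $w$.

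\textbf{Finishing.} Once feasibility of each sub-instance is guaranteed, the drawings of $G_1$, $G_2$, $G_3$ (or of the three children in the degenerate small cases) are glued along the shared edges incident to $w$; since each child lives in its own triangle of the fixed triangulation of $T$ and the triangles share only boundary edges, the union is planar, each edge is a straight-line segment, every vertex lands on its prescribed side of $\ell$, every $S$-vertex lands at its point of $X_S$, and every $E_{\ell}$-edge crosses $\ell$ at its point of $X_E$. This yields a drawing $\Gamma$ satisfying (1) and (2), completing the induction and hence the lemma; applying it with $S$ equal to a given collinear set and $X$ an arbitrary set of points in the matching order yields Theorem~\ref{th:csvfcs}.
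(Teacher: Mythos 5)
Your overall strategy coincides with the paper's: induction on the plane $3$-tree structure with a strengthened hypothesis on the drawing of the outer triangle, placing the central vertex $w$ and recursing into the three children, with feasibility of the sub-instances guaranteed by the planarity of $\Psi$ (your contiguity invariant is exactly the consistency the paper verifies case by case). However, there is one concrete gap in your placement rule for $w$. You write that if $w\in S$ it is pinned to $q_w$ and ``otherwise it is placed freely on its side.'' This is false in the cases that carry the technical weight of the proof: when $w\notin S$, up to two of the edges $(u,w)$, $(v,w)$, $(z,w)$ may belong to $E_{\ell}$, and each such edge must cross $\ell$ at its prescribed point of $X_E$. This constrains $p_w$ to lie on a specific half-line (if exactly one incident edge is in $E_{\ell}$: the paper takes $p_w$ on the ray from $p_v$ through $q_{vw}$, beyond $q_{vw}$) or at a single point (if two incident edges are in $E_{\ell}$: the paper takes $p_w$ as the intersection of the line through $p_u$ and $q_{uw}$ with the line through $p_z$ and $q_{wz}$). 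One must then also check that this forced location lies in the interior of $\Delta$, which is again where the left-to-right consistency of $<_{\Psi}$ with the planarity of $\Psi$ is invoked. You gesture at this constraint later (``never forces a side-crossing point out of its interval''), but the placement mechanism you actually state would not produce a drawing in which the edges incident to $w$ cross $\ell$ at their associated points, so property (2) would fail for those edges. Filling this in requires precisely the case analysis on the labels of $u$, $v$, $z$, $w$ that the paper carries out; with that correction your argument matches the paper's proof.
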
 

\begin{proof}
The proof is by induction on $n$ and relies on a stronger inductive hypothesis, namely that $\Gamma$ can be constructed for any \cfsl\ $\Delta$ of cycle $(u,v,z)$ such that: 

\begin{itemize}
\item[(i)] the vertices $p_u$, $p_v$, and $p_z$ of $\Delta$ representing $u$, $v$, and $z$ appear in this counter-clockwise order along $\Delta$; 
\item[(ii)] $\Delta$ respects the labeling -- each of $u$, $v$, and $z$ is above, below, or on $\ell$ if it has label $\uparrow$, $\downarrow$, or $=$, respectively; and 
\item[(iii)] $\Delta$ respects the ordering -- every vertex in $\{u,v,z\}\cap S$ lies at its associated point in $X_S$ and every edge in $\{(u,v),(u,z),(v,z)\}\cap E_{\ell}$ crosses $\ell$ at its associated point in $X_E$.
\end{itemize}



In the base case $n=3$. Let $\Delta$ be any \cfsl\ of cycle $(u,v,z)$ satisfying properties (i)--(iii). Define $\Gamma=\Delta$; then $\Gamma$ is a \cfsl\ of $G$ that respects the labeling and the ordering since $\Delta$ satisfies properties (i)--(iii). 

Now assume that $n>3$; let $w$ be the central vertex of $G$, and let $G_1$, $G_2$, and $G_3$ be its children, whose outer faces are delimited by cycles $(u,v,w)$, $(u,z,w)$, and $(v,z,w)$, respectively. We distinguish some cases according to the labeling of $u$, $v$, $z$, and $w$. In every case we draw $w$ at a point $p_w$ and we draw straight-line segments from $p_w$ to $p_u$, $p_v$, and $p_z$, obtaining triangles $\Delta_1=(p_u,p_v,p_w)$, $\Delta_2=(p_u,p_z,p_w)$, and $\Delta_3=(p_v,p_z,p_w)$. We then use induction to construct \cfsl s of $G_1$, $G_2$, and $G_3$ in which the cycles $(u,v,w)$, $(u,z,w)$, and $(v,z,w)$ delimiting their outer faces are represented by $\Delta_1$, $\Delta_2$, and $\Delta_3$, respectively. Thus, we only need to ensure that each of $\Delta_1$, $\Delta_2$, and $\Delta_3$ satisfies properties (i)--(iii). In particular, property (i) is satisfied as long as $p_w$ is in the interior of $\Delta$; property (ii) is satisfied as long as $p_w$ respects the labeling; and property (iii) is satisfied as long as $p_w=q_w$, if $w\in S$, and each edge in $\{(u,w),(v,w),(z,w)\}\cap E_{\ell}$ crosses $\ell$ at its associated point, if $w\notin S$.



If all of $u$, $v$, and $z$ have labels in the set $\{\uparrow,=\}$, then all the internal vertices of $G$ have label $\uparrow$, by the planarity of $\Psi$, and the interior of $\Delta$ is above $\ell$. Let $p_w$ be any point in the interior of $\Delta$ (ensuring properties (i)--(ii) for $\Delta_1$, $\Delta_2$, and $\Delta_3$). Also, $w\notin S$ and $(u,w),(v,w),(z,w)\notin E_{\ell}$, thus property (iii) is satisfied for $\Delta_1$, $\Delta_2$, and $\Delta_3$.


The case in which all of $u$, $v$, and $z$ have labels in the set $\{\downarrow,=\}$ is symmetric. If none of these cases applies, we can assume w.l.o.g. that $u$ has label $\uparrow$ and $v$ has label $\downarrow$. 

\begin{itemize}
\item Suppose that $z$ has label $=$. Since $u$ has label $\uparrow$, $v$ has label $\downarrow$, and $(u,v,z)$ has this counter-clockwise orientation in $G$, edge $(u,v)$ and vertex $z$ are respectively the first and the last element in $S \cup E_{\ell}$ according to $<_{\Psi}$. Since $\Delta$ satisfies properties (i)-(iii), points $q_{uv}$ and $q_z$ are respectively the leftmost and the rightmost point in $X$; hence all the points in $X-\{q_{uv},q_z\}$ are in the interior of $\Delta$.

\begin{figure}[htb]
\begin{center}
\begin{tabular}{c c c c c}
\mbox{\includegraphics[scale=.4]{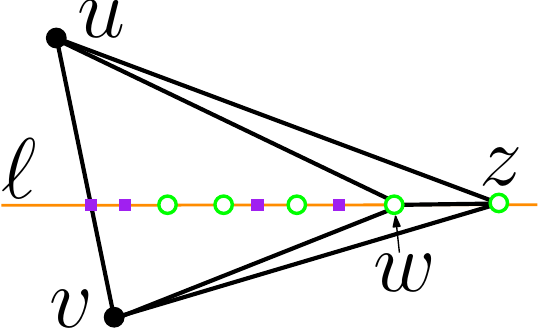}} \hspace{1mm} &
\mbox{\includegraphics[scale=.4]{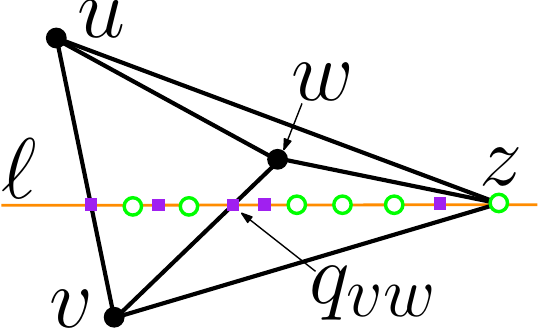}} \hspace{1mm} &
\mbox{\includegraphics[scale=.4]{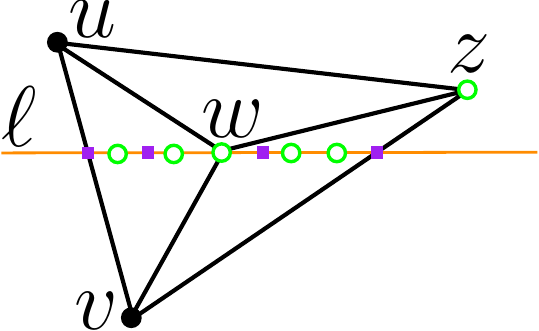}} \hspace{1mm} &
\mbox{\includegraphics[scale=.4]{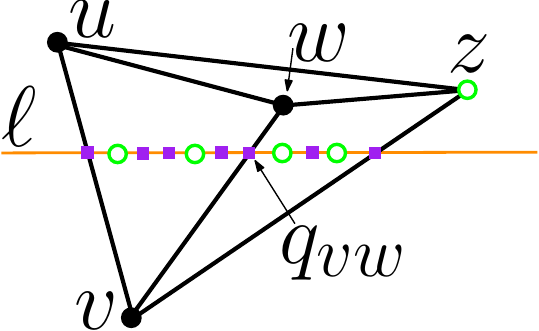}} \hspace{1mm} &
\mbox{\includegraphics[scale=.4]{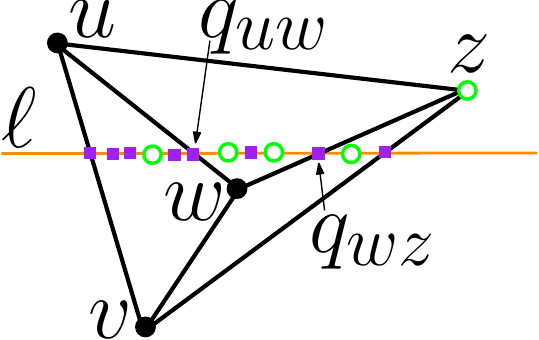}} \\
(a) \hspace{1mm} & (b) \hspace{1mm} & (c) \hspace{1mm} & (d) \hspace{1mm} & (e)
\end{tabular}
\caption{Cases for the proof of Lemma~\ref{le:conformal}. Line $\ell$ is orange, points in $X_S$ are green, and points in $X_E$ are purple. (a) $z$ and $w$ have label $=$; (b) $z$ has label $=$ and $w$ has label $\uparrow$; (c) $z$ has label $\uparrow$ and $w$ has label $=$; (d) $z$ and $w$ have label $\uparrow$; and (e) $z$ has label $\uparrow$ and $w$ has label $\downarrow$.}
\label{fig:universal}
\end{center}
\end{figure}

\begin{itemize}
\item If $w$ has label $=$, as in Fig.~\ref{fig:universal}(a), then $w$ is the last but one element in $S \cup E_{\ell}$ according to $<_{\Psi}$, by the planarity of $\Psi$ (note that edge $(w,z)$ lies on $\ell$). Since $\Delta$ satisfies (i)--(iii), point $q_w$ is the rightmost point in $X-\{q_z\}$. Let $p_w=q_w$ (ensuring properties (i)--(ii) for $\Delta_1$, $\Delta_2$, and $\Delta_3$). Then $w$ is at $q_w$ and $(u,w),(v,w),(z,w)\notin E_{\ell}$ (ensuring property (iii) for $\Delta_1$, $\Delta_2$, and $\Delta_3$).  

\item If $w$ has label $\uparrow$, as in Fig.~\ref{fig:universal}(b), then edge $(v,w)$ comes after edge $(u,v)$ and before vertex $z$ in $S \cup E_{\ell}$ according to $<_{\Psi}$, since $(v,w)$ is an internal edge of $G$ and $\Psi$ is planar. Since $\Delta$ satisfies (i)--(iii), point $q_{vw}$ is between $q_{uv}$ and $q_z$ on $\ell$. Draw a half-line $h$ starting at $v$ through $q_{vw}$ and let $p_w$ be any point in the interior of $\Delta$ (ensuring property (i) for $\Delta_1$, $\Delta_2$, and $\Delta_3$) after $q_{vw}$ on $h$ (ensuring property (ii) for $\Delta_1$, $\Delta_2$, and $\Delta_3$). Then $w\notin S$, $(u,w),(z,w)\notin E_{\ell}$, and the crossing point between $(v,w)$ and $\ell$ is $q_{vw}$ (ensuring property (iii) for $\Delta_1$, $\Delta_2$, and $\Delta_3$).  

\item The case in which $w$ has label $\downarrow$ is symmetric to the previous one.
\end{itemize}

\item Assume now that $z$ has label $\uparrow$. Since $u$ and $z$ have label $\uparrow$, since $v$ has label $\downarrow$, and since $(u,v,z)$ has this counter-clockwise orientation in $G$, edges $(u,v)$ and $(v,z)$ are respectively the first and the last element in $S \cup E_{\ell}$ according to $<_{\Psi}$. Since $\Delta$ satisfies properties (i)-(iii), points $q_{uv}$ and $q_{vz}$ are respectively the leftmost and the rightmost point in $X$; thus all the points in $X-\{q_{uv},q_{vz}\}$ are in the interior of $\Delta$.

\begin{itemize}
\item If $w$ has label $=$, as in Fig.~\ref{fig:universal}(c), then vertex $w$ comes after edge $(u,v)$ and before edge $(v,z)$ in $S \cup E_{\ell}$ according to $<_{\Psi}$, since $w$ is an internal vertex of $G$ and $\Psi$ is planar. Since $\Delta$ satisfies (i)--(iii), $q_w$ is between $q_{uv}$ and $q_{vz}$ on $\ell$. Let $p_w=q_w$ (ensuring properties (i)--(ii) for $\Delta_1$, $\Delta_2$, and $\Delta_3$). Then $w$ is at $q_w$ and $(u,w),(v,w),(z,w)\notin E_{\ell}$ (ensuring property (iii) for $\Delta_1$, $\Delta_2$, and $\Delta_3$).  

\item If $w$ has label $\uparrow$, as in Fig.~\ref{fig:universal}(d), then edge $(v,w)$ comes after edge $(u,v)$ and before edge $(v,z)$ in $S \cup E_{\ell}$ according to $<_{\Psi}$, since $(v,w)$ is an internal edge of $G$ and $\Psi$ is planar. Since $\Delta$ satisfies (i)--(iii), point $q_{vw}$ is between $q_{uv}$ and $q_{vz}$ on $\ell$. Draw a half-line $h$ starting at $v$ through $q_{vw}$ and let $p_w$ be any point in the interior of $\Delta$ (ensuring property (i) for $\Delta_1$, $\Delta_2$, and $\Delta_3$) after $q_{vw}$ on $h$ (ensuring property (ii) for $\Delta_1$, $\Delta_2$, and $\Delta_3$). Then $w\notin S$, $(u,w),(z,w)\notin E_{\ell}$, and the crossing point between $(v,w)$ and $\ell$ is $q_{vw}$ (ensuring property (iii) for $\Delta_1$, $\Delta_2$, and $\Delta_3$).

\item If $w$ has label $\downarrow$, as in Fig.~\ref{fig:universal}(e), then edges $(u,v)$, $(u,w)$, $(w,z)$, and $(v,z)$ come in this order in $S \cup E_{\ell}$ according to $<_{\Psi}$, since $(u,w)$ and $(w,z)$ are internal edges of $G$ and $\Psi$ is planar. Since $\Delta$ satisfies (i)--(iii), $q_{uv},q_{uw},q_{wz},q_{vz}$ appear in this left-to-right order on $\ell$. Let $p_w$ be the intersection point between the line through $u$ and $q_{uw}$ and the line through $z$ and $q_{wz}$ (ensuring property (ii) for $\Delta_1$, $\Delta_2$, and $\Delta_3$); note that $p_w$ is in the interior of $\Delta$ (ensuring property (i) for $\Delta_1$, $\Delta_2$, and $\Delta_3$). Then $w\notin S$, $(v,w)\notin E_{\ell}$, the crossing point between $(v,w)$ and $\ell$ is $q_{vw}$, and the crossing point between $(w,z)$ and $\ell$ is $q_{wz}$ (ensuring property (iii) for $\Delta_1$, $\Delta_2$, and $\Delta_3$).
\end{itemize}

\item The case in which $z$ has label $\downarrow$ is symmetric to the previous one.
\end{itemize}

This concludes the proof of the lemma.
\end{proof}

\remove{
\begin{theorem} \label{th:universal}
Every set of $\lceil\frac{n-3}{8}\rceil$ points is a universal sub-point set for \cfsl\ of $n$-vertex plane $3$-trees.
\end{theorem}

\begin{proof}
Since any (partial) plane $3$-tree can be augmented with dummy edges to a maximal plane $3$-tree~\cite{kv-npp3t-12}, it suffices to prove the statement for maximal plane $3$-trees. Also, by Lemma~1 in~\cite{DBLP:journals/dcg/BoseDHLMW09}, it suffices to prove the statement for collinear point sets. 

Let $G$ be any $n$-vertex maximal plane $3$-tree with external vertices $u$, $v$, and $z$ in this counter-clockwise order along cycle $(u,v,z)$. Let $P$ be any set of $\lceil\frac{n-3}{8}\rceil$ points on the same horizontal line $\ell$. By Theorem~\ref{th:3-trees}, there exists a straight-line planar drawing $\Psi$ of $G$ in which $x\geq \lceil\frac{n-3}{8}\rceil$ vertices lie on a common line $\ell'$. Rotate $\Psi$ so that $\ell'$ is horizontal and label each vertex of $G$ as $\uparrow$, $\downarrow$, or $=$ according to whether it lies above, below, or on $\ell'$, respectively. Consider any set $X$ of $x$ points on $\ell$. We show that $G$ admits a planar straight-line drawing $\Gamma$ that {\em respects the labeling}, that is, every vertex of $G$ labeled $\uparrow$, $\downarrow$, or $=$ is above $\ell$, below $\ell$, or at a point in $X$, respectively. This implies the theorem, since choosing $X$ so that $P\subseteq X$ results in a planar straight-line drawing of $G$ in which $\lceil\frac{n-3}{8}\rceil$ vertices are mapped to the points in $P$.  

The proof of the existence of $\Gamma$ is by induction on $n$ and relies on a stronger inductive hypothesis, namely that $\Gamma$ can be constructed for an arbitrary planar straight-line drawing $\Delta$ of cycle $(u,v,z)$ such that: (i) the vertices $p_u$, $p_v$, and $p_z$ of $\Delta$ representing $u$, $v$, and $z$ appear in this counter-clockwise order along $\Delta$; (ii) $\Delta$ respects the labeling of $u$, $v$, and $z$; and (iii) $|Y|=y$, where $Y$ is the subset of points in $X$ in the interior of $\Delta$ and $y$ is the number of internal vertices of $G$ labeled $=$. 

In the base case $n=3$. Let $\Delta$ be an arbitrary planar straight-line drawing of $(u,v,z)$ satisfying properties (i)--(iii). Define $\Gamma=\Delta$; then $\Gamma$ is a straight-line planar drawing of $G$ that respects the labeling since $\Delta$ satisfies properties (i) and (ii). 

Now assume $n>3$; let $w$, $G_1$, $G_2$, and $G_3$ be defined as in this section. We distinguish some cases according to the labeling of $u$, $v$, $z$, and $w$. In every case we draw $w$ at a point $p_w$ and we draw straight-line segments from $p_w$ to $p_u$, $p_v$, and $p_z$, obtaining triangles $\Delta_1=(p_u,p_v,p_w)$, $\Delta_2=(p_u,p_z,p_w)$, and $\Delta_3=(p_v,p_z,p_w)$. We then use induction to construct planar straight-line drawings of $G_1$, $G_2$, and $G_3$ in which the cycles $(u,v,w)$, $(u,z,w)$, and $(v,z,w)$ delimiting their outer faces are represented by $\Delta_1$, $\Delta_2$, and $\Delta_3$, respectively. Thus, we only need to show how to choose $p_w$ so that each of $\Delta_1$, $\Delta_2$, and $\Delta_3$ satisfies properties (i)--(iii). In particular, property (i) is satisfied as long as $p_w$ is in the interior of $\Delta$; property (ii) is satisfied as long as $p_w$ respects the labeling; property (iii) is satisfied as long $\Delta_1$, $\Delta_2$, and $\Delta_3$ contain $y_1$, $y_2$, and $y_3$ points in $X$ in their interior, where $y_1$, $y_2$, and $y_3$ are the number of internal vertices of $G_1$, $G_2$, and $G_3$ labeled $=$, respectively.

If all of $u$, $v$, and $z$ have labels in the set $\{\uparrow,=\}$, then let $p_w$ be any internal point of $\Delta$. All the internal vertices of $G$ have label $\uparrow$, by the planarity of $\Psi$; also the interior of $\Delta$ entirely lies above $\ell$, since $\Delta$ satisfies (ii). Thus, $p_w$ respects the labeling. Further, each of $\Delta_1$, $\Delta_2$, and $\Delta_3$ has no point in $X$ in its interior and $y_1=y_2=y_3=0$. 

The case in which all of $u$, $v$, and $z$ have labels in the set $\{\downarrow,=\}$ is symmetric. If none of these cases applies, we can assume w.l.o.g. that $u$ has label $\uparrow$ and $v$ has label $\downarrow$. 

\begin{itemize}
\item If $z$ has label $=$, then note that $z$ is on $\ell$ and to the right of every point in $Y$.

\begin{figure}[htb]
\begin{center}
\begin{tabular}{c c c c c}
\mbox{\includegraphics[scale=.4]{Universal2.pdf}} \hspace{1mm} &
\mbox{\includegraphics[scale=.4]{Universal3.pdf}} \hspace{1mm} &
\mbox{\includegraphics[scale=.4]{Universal4.pdf}} \hspace{1mm} &
\mbox{\includegraphics[scale=.4]{Universal5.pdf}} \hspace{1mm} &
\mbox{\includegraphics[scale=.4]{Universal6.pdf}} \\
(a) \hspace{1mm} & (b) \hspace{1mm} & (c) \hspace{1mm} & (d) \hspace{1mm} & (e)
\end{tabular}
\caption{Cases for the proof of Theorem~\ref{th:universal}. Line $\ell$ is orange and points in $X$ are green. (a) $z$ and $w$ have label $=$; (b) $z$ has label $=$ and $w$ has label $\uparrow$; (c) $z$ has label $\uparrow$ and $w$ has label $=$; (d) $z$ and $w$ have label $\uparrow$; and (e) $z$ has label $\uparrow$ and $w$ has label $\downarrow$.}
\label{fig:universal}
\end{center}
\end{figure}

\begin{itemize}
\item If $w$ has label $=$, as in Fig.~\ref{fig:universal}(a), then $y_2=y_3=0$ by the planarity of $\Psi$. Hence, $y_1=y-1$. Let $p_w$ be the rightmost point in $Y$. Thus, $p_w$ is in the interior of $\Delta$, it respects the labeling, and triangles $\Delta_1$, $\Delta_2$, and $\Delta_3$ have $y-1$, $0$, and $0$ points in $X$ in their interior, respectively, since $|Y|=y$.

\item If $w$ has label $\uparrow$, as in Fig.~\ref{fig:universal}(b), then $y_2=0$ by the planarity of $\Psi$. Hence, $y_1+y_3=y$. Pick a point $q_w\notin Y$ on $\ell$ so that $q_w$ has $y_1$ points in $Y$ to the left and $y_3$ points in $Y$ to the right; this is possible since $|Y|=y$. Draw a half-line $h$ starting at $v$ through $q_w$ and let $p_w$ be any point in the interior of $\Delta$ after $q_w$ on $h$. Thus, $p_w$ respects the labeling. By construction $\Delta_1$, $\Delta_2$, and $\Delta_3$ have $y_1$, $0$, and $y_3$ points in $X$ in their interior, respectively.

\item The case in which $w$ has label $\downarrow$ is symmetric to the previous one.
\end{itemize}

\item Assume now that $z$ has label $\uparrow$.
\begin{itemize}
\item If $w$ has label $=$, as in Fig.~\ref{fig:universal}(c), then $y_2=0$ by the planarity of $\Psi$. Hence, $y_1+y_3=y-1$. Pick a point $p_w\in Y$ so that $p_w$ has $y_1$ points in $Y$ to the left and $y_3$ points in $Y$ to the right; this is possible since $|Y|=y$. Thus, $p_w$ is in the interior of $\Delta$ and respects the labeling. By construction $\Delta_1$, $\Delta_2$, and $\Delta_3$ have $y_1$, $0$, and $y_3$ points in $X$ in their interior, respectively.

\item If $w$ has label $\uparrow$, as in Fig.~\ref{fig:universal}(d), then $y_2=0$ by the planarity of $\Psi$. Hence, $y_1+y_3=y$. Pick a point $q_w\notin Y$ on $\ell$ so that $q_w$ has $y_1$ points in $Y$ to the left and $y_3$ points in $Y$ to the right; this is possible since $|Y|=y$. Draw a half-line $h$ starting at $v$ through $q_w$ and let $p_w$ be any point in the interior of $\Delta$ after $q_w$ on $h$. Thus, $p_w$ respects the labeling. By construction $\Delta_1$, $\Delta_2$, and $\Delta_3$ have $y_1$, $0$, and $y_3$ points in $X$ in their interior, respectively.

\item If $w$ has label $\downarrow$, as in Fig.~\ref{fig:universal}(e), then $y_1+y_2+y_3=y$. Pick a point $q_w\notin Y$ on $\ell$ so that $q_w$ has $y_1$ points in $Y$ to the left and $y_2+y_3$ points in $Y$ to the right, and pick a point $r_w\notin Y$ on $\ell$ so that $r_w$ has $y_1+y_2$ points in $Y$ to the left and $y_3$ points in $Y$ to the right; this is possible since $|Y|=y$. Let $p_w$ the intersection point between the line through $u$ and $q_w$ and the line through $z$ and $r_w$. By construction $\Delta_1$, $\Delta_2$, and $\Delta_3$ have $y_1$, $y_2$, and $y_3$ points in $X$ in their interior, respectively.
\end{itemize}

\item The case in which $z$ has label $\downarrow$ is symmetric to the previous one.
\end{itemize}

This concludes the proof of the theorem.
\end{proof}
}



\section{Triconnected Cubic Planar Graphs} \label{le:cubic}

In this section we prove the following theorem.

\begin{theorem} \label{th:cubic}
Every $n$-vertex triconnected cubic plane graph admits a \cfsl\ with at least $\lceil \frac{n}{4}\rceil$ collinear vertices.
\end{theorem}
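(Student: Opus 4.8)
The plan is to reduce the statement to the topological characterization of Theorem~\ref{th:topology}: it suffices to exhibit, in the given triconnected cubic plane graph $G$, a proper good curve passing through at least $\lceil n/4\rceil$ vertices. To organize the search for such a curve I would pass to the dual. Since $G$ is triconnected and cubic, its dual $T=G^{*}$ is a (simple, $3$-connected) plane triangulation, whose $n/2+2$ vertices are the faces of $G$ and whose $n$ triangular faces are in bijection with the vertices of $G$. A good curve $\lambda$ in $G$ that crosses no edge interiors can be read off in this dual picture as a curve that threads through a sequence of faces of $G$ (adjacent faces along the sequence being consecutive), while inside each visited face it is free to pass through several of that face's boundary vertices by ``cutting the corner'' at each such vertex without crossing any edge. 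The point of working with cubic $G$ is that the \emph{good} condition becomes purely combinatorial here: passing through a vertex $v$ makes $v$ a common point of $\lambda$ with all three edges of $v$, so a family of vertices that $\lambda$ passes through is admissible exactly when it is an independent set of $G$ (no edge may be touched at both endpoints, unless $\lambda$ contains it), and when these corner cuts can be linked by a single non-self-intersecting arc.

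With this reformulation, the goal is to produce a simple, proper arc that passes through an \emph{independent} set of at least $\lceil n/4\rceil$ vertices of $G$, each vertex being used once. First I would fix a ``tour'' of the faces of $G$ coming from a spanning path/cycle structure of the dual triangulation $T$ (for instance one extracted from a canonical ordering or Schnyder wood of $T$, which exploits the $3$-connectivity of $T$, equivalently of $G$). Threading the curve along such a tour, I would earn a new vertex at a controlled fraction of the face-to-face transitions: at each shared edge $e=(a,b)$ between two consecutive faces the curve can either cross $e$ (earning nothing) or be routed through one endpoint of $e$ (earning that vertex), and by earning at roughly every other transition one keeps the earned set independent and keeps the arc simple. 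Since the tour ranges over $n/2+2$ faces, earning a vertex at a $\tfrac12$ fraction of the transitions yields on the order of $(n/2)/2=n/4$ vertices; the ceiling $\lceil n/4\rceil$ should come out of a careful charging of earned vertices to faces (using $\sum_f |f| = 2|E(G)| = 3n$ together with the fact that each vertex lies on exactly three faces and is earned once). Finally the arc is closed up/cut open so that its endpoints lie in the unbounded region, making it proper, and Theorem~\ref{th:topology} converts it into the desired \cfsl.

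The hard part will be carrying out the routing and the counting \emph{simultaneously}: one must guarantee that the earned vertices form an independent set, that each edge of $G$ is met by the curve at most once, and that the whole curve is a single simple arc, all while still reaching the $1/4$ density. The delicate obstruction is that $3$-connected plane triangulations need not admit Hamiltonian (or even linearly long) paths, so one cannot simply thread the curve along one very long dual path; instead the argument must either revisit faces in a crossing-free way (earning several independent vertices per large face) or combine several threaded segments, and it must use the triconnectivity of $G$ to rule out the small-separator configurations that would otherwise force the earned independent set to be sublinear. Verifying that the chosen dual tour together with the corner-cut rule always yields an independent, once-used vertex set realized by a non-self-intersecting proper curve — and that the bookkeeping never drops below $\lceil n/4\rceil$ — is where the real work lies; the topological reduction and the $n/4$ arithmetic are comparatively routine once this combinatorial routing lemma is in place.
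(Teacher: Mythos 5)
Your reduction to Theorem~\ref{th:topology} is the right first step and matches the paper, but everything after that is a plan rather than a proof, and the plan runs aground on exactly the obstruction you name yourself. Threading a single arc along a spanning path of the dual triangulation cannot work: triconnected planar triangulations (equivalently, the duals of triconnected cubic plane graphs) can have longest paths and cycles of length only $O(n^\sigma)$ with $\sigma<0.986$ --- this is precisely the shortness-exponent phenomenon the paper cites to show that \emph{general} planar graphs have only sublinear collinear sets. So the ``tour'' you want to extract from a canonical ordering or Schnyder wood of the dual does not exist with linear length, and your fallback (``revisit faces in a crossing-free way or combine several threaded segments, using triconnectivity to rule out small-separator configurations'') is exactly the missing combinatorial routing lemma; no mechanism is given for it, and the $1/2$-density-of-transitions counting that produces $n/4$ has no justification once the tour is no longer a single long dual path. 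In short, the arithmetic is conditional on a lemma that is the entire content of the theorem.

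The paper resolves this by working in the primal graph with a structural induction rather than a dual tour. It defines \emph{well-formed quadruples} $(G,u,v,X)$ abstracting the situation at each level of the recursion (a biconnected subcubic plane piece, two degree-$2$ ``link'' vertices $u,v$ on the outer face, and a set $X$ of degree-$2$ vertices the curve must avoid), proves a decomposition lemma (Lemma~\ref{le:separation-bottom}, in the spirit of Chen--Yu's strong circuit graphs) showing that the $\{a,b\}$-component containing $\beta_{ab}(G)$ splits into a chain of smaller well-formed pieces joined by paths, and then builds the curve inductively (Lemma~\ref{le:cubic-2connected}) while maintaining a charging scheme: every vertex missed by the curve is charged to a vertex on the curve, with at most $3$ charges per vertex and at most $1$ on $u$. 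That $3$-to-$1$ charging, not a density count over dual transitions, is what yields $\lceil n/4\rceil$. If you want to salvage your approach you would need to replace the dual tour by some comparably strong recursive structure; as written, the proposal has a genuine gap at its core.
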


By Theorem~\ref{th:topology} it suffices to prove that, for every $n$-vertex triconnected cubic plane graph $G$, there exists a proper good curve $\lambda$ passing through at least $\lceil \frac{n}{4}\rceil$ vertices of $G$. 

Our proof will be by induction on $n$; Lemma~\ref{le:cubic-2connected} below states the inductive hypothesis satisfied by $\lambda$. In order to split the graph into subgraphs on which induction can be applied, we will use a structural decomposition that is derived from a paper by Chen and Yu~\cite{cy-lc3g-02}, who proved that every $n$-vertex triconnected planar graph contains a cycle passing through $\Omega(n^{\log_3 2})$ vertices. This decomposition applies to a graph class, called {\em strong circuit graph} in~\cite{cy-lc3g-02}, wider than triconnected cubic plane graphs. We will introduce the concept of {\em well-formed quadruple} in order to point out some properties of the graphs in this class. In particular, the inductive hypothesis will need to handle carefully the set (denoted by $X$ below) of degree-$2$ vertices of the graph, which have neighbors that are not in the graph at the current level of the induction; since $\lambda$ might pass through these neighbors, it has to avoid the vertices in $X$, in order to be good. Special conditions will be ensured for two vertices, denoted by $u$ and $v$ below, which work as link to the rest of the graph. 

We introduce some definitions. Consider a biconnected plane graph $G$. Given two external vertices $u$ and $v$ of $G$, we denote by $\tau_{uv}(G)$ (by $\beta_{uv}(G)$) the path composed of the vertices and edges encountered when walking along the boundary of the outer face of $G$ in clockwise (resp.\ counter-clockwise) direction from $u$ to $v$. An {\em intersection point} between an open curve $\lambda$ and $\beta_{uv}(G)$ (or $\tau_{uv}(G)$) is a point $p$ belonging to both $\lambda$ and $\beta_{uv}(G)$ (resp.\ $\tau_{uv}(G)$) such that, for any $\epsilon>0$, the part of $\lambda$ in the disk centered at $p$ with radius $\epsilon$ contains points not in $\beta_{uv}(G)$ (resp.\ $\tau_{uv}(G)$). If the end-vertices of $\lambda$ are in $\beta_{uv}(G)$ (or $\tau_{uv}(G)$), then we regard them as intersection points. An intersection point $p$ between $\lambda$ and $\beta_{uv}(G)$ (or $\tau_{uv}(G)$) is {\em proper} if, for any $\epsilon>0$, the part of $\lambda$ in the disk centered at $p$ with radius $\epsilon$ contains points in the outer face of $G$. 




Our proof of the existence of a proper good curve passing through $\lceil \frac{n}{4}\rceil$ vertices of $G$ is by induction on $n$. In order to make the induction work, we deal with the following setting. A quadruple $(G,u,v,X)$ is {\em well-formed} if it satisfies the following properties. 

\begin{itemize}
\item[(a)] $G$ is a biconnected subcubic plane graph;
\item[(b)] $u$ and $v$ are two distinct external vertices of $G$;
\item[(c)] $\delta_G(u)=\delta_G(v)=2$; 
\item[(d)] if edge $(u,v)$ exists, then it coincides with $\tau_{uv}(G)$; 
\item[(e)] for every separation pair $\{a,b\}$ of $G$ we have that $a$ and $b$ are external vertices of $G$ and at least one of them is an internal vertex of $\beta_{uv}(G)$; further, every non-trivial $\{a,b\}$-component of $G$ contains an external vertex of $G$ different from $a$ and $b$; and 
\item[(f)] $X=(x_1,\dots,x_m)$ is a (possibly empty) sequence of degree-$2$ vertices of $G$ in $\beta_{uv}(G)$, different from $u$ and $v$, and in this order along $\beta_{uv}(G)$ from $u$ to $v$.
\end{itemize}


We have the following main lemma (refer to Fig.~\ref{fig:main-lemma-cubic}).

\begin{figure}[htb]
\begin{center}
\mbox{\includegraphics[width=.4\textwidth]{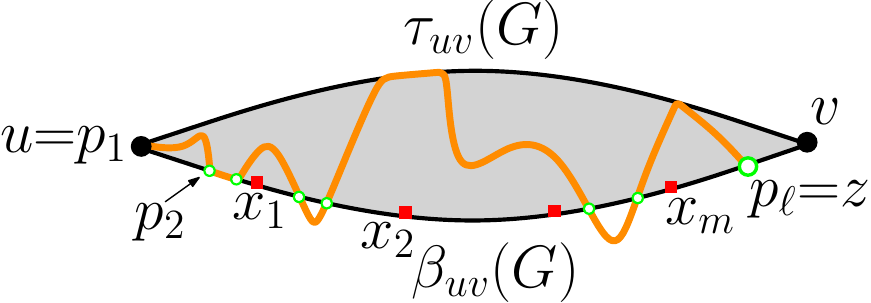}} \caption{Illustration for the statement of Lemma~\ref{le:cubic-2connected}. The gray region represents the interior of $G$. Curve $\lambda$ is orange, vertices in $X$ are red squares, intersection points between $\lambda$ and $\beta_{uv}(G)$ are green circles, and vertices $u$ and $v$ are black disks.}
\label{fig:main-lemma-cubic}
\end{center}
\end{figure}

\begin{lemma} \label{le:cubic-2connected}
Let $(G,u,v,X)$ be a well-formed quadruple. There exists a proper good curve $\lambda$ such that:
\begin{enumerate}
\item[(1)] $\lambda$ starts at $u$, does not pass through $v$, and ends at a point $z$ of $\beta_{uv}(G)$;
\item[(2)] $z$ is in the part of $\beta_{uv}(G)$ between $x_m$ and $v$ (if $X=\emptyset$, this condition is vacuous);
\item[(3)] let $u=p_1,p_2,\dots,p_\ell=z$ be the intersection points between $\lambda$ and $\beta_{uv}(G)$, ordered as they occur along $\lambda$; we have that $u=p_1,p_2,\dots,p_\ell=z,v$ appear in this order along $\beta_{uv}(G)$ (note that $z$ is the ``last'' intersection between $\lambda$ and $\beta_{uv}(G)$);
\item[(4)] $\lambda$ passes through no vertex in $X$ and all the vertices in $X$ are incident to $R_{G,\lambda}$; in particular, if $p_i$, $x_j$ and $p_{i+1}$ come in this order along $\beta_{uv}(G)$, then the part of $\lambda$ between $p_i$ and $p_{i+1}$ lies in the interior of $G$;
\item[(5)] $\lambda$ and $\tau_{uv}(G)$ have no proper intersection point; and
\item[(6)] let $L_{\lambda}$ and $N_{\lambda}$ be the subsets of vertices in $V(G)-X$ curve $\lambda$ passes through and does not pass through, respectively; each vertex in $N_{\lambda}$ can be charged to a vertex in $L_{\lambda}$ so that each vertex in $L_{\lambda}$ is charged with at most $3$ vertices and $u$ is charged with at most $1$ vertex.
\end{enumerate}
\end{lemma}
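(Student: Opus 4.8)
The plan is to prove Lemma~\ref{le:cubic-2connected} by induction on $n=|V(G)|$, building $\lambda$ from curves produced on smaller well-formed quadruples and concatenating them. The guiding structural fact is that, since $G$ is subcubic, every vertex of $\beta_{uv}(G)$ sends at most one edge into the interior of $G$; hence the part of $G$ strictly inside the outer boundary decomposes into bridges that each attach to $\beta_{uv}(G)$ (and to $\tau_{uv}(G)$) at single vertices or at separation pairs. I would use exactly the separation-pair decomposition of strong circuit graphs from Chen and Yu~\cite{cy-lc3g-02}, whose applicability is guaranteed by property (e): every separation pair lies on the outer face, at least one of its vertices is interior to $\beta_{uv}(G)$, and every non-trivial component meets the outer face away from the pair. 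This is precisely what lets me carve off a component, turn it into a smaller well-formed quadruple, and recurse.

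First I would dispose of the base case, where $G$ has no interior vertices (all vertices lie on the outer boundary). Here $\lambda$ simply \emph{snakes} along the inner side of $\beta_{uv}(G)$: it passes through each degree-$2$ vertex of $\beta_{uv}(G)$ that is not in $X$, and for each $x_j\in X$ it touches $\beta_{uv}(G)$ at a point $p_i$ just before $x_j$ and a point $p_{i+1}$ just after it, dipping into the interior in between so that $x_j$ remains incident to $R_{G,\lambda}$. Walking monotonically from $u$ towards $v$ gives properties (1)--(4) directly (the intersection points occur in the same order along $\lambda$ and along $\beta_{uv}(G)$), staying on the $\beta$-side gives property (5), and since nothing is skipped except the vertices in $X$ and $v$, the charging (6) is trivial with $u$ charged nothing.

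For the inductive step I would locate the first vertex $b_i$ of $\beta_{uv}(G)$, scanning from $u$, that carries an interior edge, route $\lambda$ through $u,b_1,\dots,b_i$ as above, and then \emph{dive} into the bridge $B$ hanging at $b_i$ (or, when a separation pair $\{a,b\}$ is present, into the corresponding $\{a,b\}$-component). I would set up $B$ as a well-formed quadruple $(B,u',v',X')$, where $u',v'$ are the attachment/link vertices determined by $b_i$ and the separation pair and $X'$ records the degree-$2$ vertices of $B$ whose missing neighbor lies outside $B$; verifying properties (a)--(f) for $(B,u',v',X')$ --- in particular inheriting (e) --- is the bookkeeping core of the step. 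Applying the lemma to $B$ yields a sub-curve that I splice into $\lambda$ at $b_i$, after which I continue snaking along $\beta_{uv}(G)$ and repeat. Concatenation preserves (1)--(3) because each spliced piece is entered and left monotonically, preserves (4) because each recursively handled $X'$ vertex stays on $R_{B,\lambda}\subseteq R_{G,\lambda}$, and preserves (5) because sub-curves avoid their own $\tau$, which is glued along the interface and never crosses $\tau_{uv}(G)$.

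The hard part will be the charging in property (6), and keeping it consistent with (4) across all splices. The subcubic bound is what makes the constant $3$ attainable: at each link vertex at most one edge leaves into a skipped region, so each recursion skips only a bounded number of interface vertices --- essentially $v'$ and at most one partner of a separation pair --- which I would charge to the curve vertex $b_i$ at which the dive occurs, while the recursively produced charges inside $B$ come in by induction with load at most $3$ and $u'$ charged at most once. The delicate point is that $u'$ of each subproblem must be identified with a vertex that $\lambda$ actually passes through in the parent (so that its ``$\le 1$'' slack can absorb the interface vertex introduced by the split), and that no curve vertex is overloaded when several bridges hang off nearby boundary vertices; bounding this simultaneously for all bridges, using only that every boundary vertex emits at most one interior edge, is where the argument is tightest and where I expect to spend most of the effort.
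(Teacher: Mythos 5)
Your overall strategy---induction on well-formed quadruples, a chain decomposition along $\beta_{uv}(G)$ guaranteed by property (e), concatenation of recursively produced curves, and a charging scheme that exploits the ``$u$ is charged at most once'' slack---matches the skeleton of the paper's proof, and in particular your inductive step is essentially the paper's Case~1, which uses exactly the decomposition you describe (the paper isolates it as Lemma~\ref{le:separation-bottom}: the $\{a,b\}$-component containing $\beta_{ab}(G)$ splits into paths alternating with biconnected pieces, each a well-formed quadruple). However, there is a genuine gap: your plan only covers the configurations in which such a chain along $\beta_{uv}(G)$ exists, i.e.\ when you can ``snake along $\beta_{uv}(G)$ and dive into a bridge at a boundary vertex.'' Since $G$ is biconnected, no bridge hangs off a single vertex of $\beta_{uv}(G)$, and property (e) only forces \emph{one} vertex of each separation pair onto the interior of $\beta_{uv}(G)$---the other may lie on $\tau_{uv}(G)$. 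So when $(u,v)\notin E(G)$, the biconnected component $H$ of $G-\{v\}$ containing $u$ can be a single large piece spanning both $\tau_{uv}(G)$ and $\beta_{uv}(G)$, with the rest of the graph being a bridge $B_2$ attached at $v$ and at one vertex $y_2$ interior to $\beta_{uv}(G)$. There is then no ``first boundary vertex to dive at'': the recursion must go around through $H$ (ending near the neighbour $y_1$ of $v$ on $\tau_{uv}(G)$) and only afterwards descend into $B_2$ through the face containing $y_2$. The device that makes this work---and that your proposal is missing---is to \emph{add the interface vertex $y_2$ to the set $X'$ of the subproblem} $(H,u,y_1,X')$, so that properties (2)--(4) of the recursive curve force it to avoid $y_2$, end past it on $\beta_{y_2y_1}(H)$, and leave $y_2$ incident to the unbounded region, which is exactly what allows the parent curve to reach $B_2$ without violating goodness or property (3). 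The paper needs four further cases (its Cases~2--5, with a second peeling $K\subseteq H-\{y_1\}$ and a second promoted vertex $w_2$) to make this consistent with the charging.

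A secondary but real problem is your base case. ``No interior vertices'' is outerplanar, not a simple cycle: chords may be present, and a curve that snakes along the inner side of $\beta_{uv}(G)$ either touches both endpoints of a chord (two common points with one edge, so not good) or skips every degree-$3$ boundary vertex, in which case arbitrarily many consecutive skipped vertices break the charging bound of $3$. The paper's base case is a simple cycle only; chords create separation pairs and are absorbed into the inductive machinery. Finally, note that your base-case curve also says nothing about the vertices of $\tau_{uv}(G)$, which are in $V(G)-X$ and must be either visited or charged.
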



Before proceeding with the proof of Lemma~\ref{le:cubic-2connected}, we state and prove an auxiliary lemma that will be used repeatedly in the remainder of the section. 

\begin{lemma} \label{le:separation-bottom}
Let $(G,u,v,X)$ be a well-formed quadruple and let $\{a,b\}$ be a separation pair of $G$ with $a,b\in \beta_{uv}(G)$. The $\{a,b\}$-component $G_{ab}$ of $G$ containing $\beta_{ab}(G)$ either coincides with $\beta_{ab}(G)$ or consists of (see Fig.~\ref{fig:structure}):
\begin{itemize}
\item a path $P_0=(a,\dots,u_1)$ (possibly a single vertex $a=u_1$);
\item for $i=1,\dots,k$ with $k\geq 1$, a biconnected component $G_i$ of $G_{ab}$ that contains vertices $u_i$ and $v_i$ and such that $(G_i,u_i,v_i,X_i)$ is a well-formed quadruple, with $X_i=X\cap V(G_i)$; 
\item for $i=1,\dots,k-1$, a path $P_i=(v_i,\dots,u_{i+1})$, where $u_{i+1}\neq v_i$; and
\item a path $P_{k}=(v_{k},\dots,b)$ (possibly a single vertex $b=v_{k}$).
\end{itemize} 
\end{lemma}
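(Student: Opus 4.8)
The plan is to translate the statement into a claim about the \emph{block-cut tree} of $G_{ab}$ (its decomposition into biconnected components and cut-vertices), and to prove that this tree is a path whose two ends are the blocks containing $a$ and $b$. Two facts will be used throughout. First, by the definition of $\{a,b\}$-component, the only vertices of $G_{ab}$ adjacent to vertices of $G-V(G_{ab})$ are $a$ and $b$. Second, each of $a$ and $b$ has a neighbor outside $G_{ab}$: otherwise, say if all neighbors of $a$ were in $G_{ab}$, then removing $b$ alone would disconnect $a$ together with the internal vertices of $G_{ab}$ from the rest of $G$, making $b$ a cut-vertex and contradicting that $G$ is biconnected.

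First I would prove that the block-cut tree $T$ of $G_{ab}$ is a path. Every leaf of $T$ is a block $B$ attached to the rest of $G_{ab}$ at a single cut-vertex $c$. If $B-c$ contained neither $a$ nor $b$, then every vertex of $B-c$ would have all of its $G$-neighbors inside $B$ (by the first fact above), so $c$ would be a cut-vertex of $G$ -- impossible. Hence each leaf block contains a vertex of $\{a,b\}$ other than its cut-vertex, so $T$ has at most two leaves and is a path (or a single node). If all blocks are trivial then $G_{ab}$ is a simple path; as $G$ is biconnected, its only possible degree-$1$ vertices are $a$ and $b$, so this path joins $a$ to $b$ and, containing $\beta_{ab}(G)$, coincides with it -- the first alternative of the statement. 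Otherwise, reading $T$ from the $a$-end to the $b$-end and collapsing maximal runs of trivial blocks into paths gives the chain $P_0,G_1,P_1,\dots,G_k,P_k$ with $k\ge 1$. Since every cut-vertex of $G_{ab}$ lies on the block-cut path between the $a$- and $b$-ends, it separates $a$ from $b$; hence $\beta_{ab}(G)$, being an $a$-to-$b$ path on the outer boundary, passes through all the attachments $u_i,v_i$ in order, the bottom of $G_i$ is the portion of $\beta_{ab}(G)$ from $u_i$ to $v_i$, and the $P_i$ are subpaths of $\beta_{ab}(G)$. Finally $u_{i+1}\neq v_i$, for otherwise two consecutive non-trivial blocks would share a cut-vertex of degree at least $4$, contradicting that $G$ is subcubic.

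It then remains to verify that each quadruple $(G_i,u_i,v_i,X_i)$ with $X_i=X\cap V(G_i)$ is well-formed. Properties (a), (b) and (f) are immediate: $G_i$ is a biconnected subcubic plane subgraph, $u_i,v_i$ are distinct external vertices, and the vertices of $X$ inside $G_i$ are degree-$2$ vertices on the bottom of $G_i$, hence internal to $\beta_{u_iv_i}(G_i)$ and distinct from the attachments, which all have degree $3$. For (c): an attachment $u_i$ has one edge toward the adjoining path (or, if $u_i=a$, one edge outside $G_{ab}$) and at least two edges inside the block $G_i$; subcubicity forces exactly two, so $\delta_{G_i}(u_i)=2$, and symmetrically $\delta_{G_i}(v_i)=2$. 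The two remaining properties (d) and (e) are the core of the proof, and I would derive both from property (e) of the original quadruple $(G,u,v,X)$, using two observations: $G_i$ meets the rest of $G$ only at $u_i,v_i$, and every vertex of $G_i$ that is external in $G$ lies on $\beta_{u_iv_i}(G_i)$, since $\tau_{u_iv_i}(G_i)$ faces the interior of $G$. For (d): if the edge $(u_i,v_i)$ existed on $\beta_{u_iv_i}(G_i)$, that path would be this single edge, the non-trivial block $G_i$ would have a nonempty top, and $\{u_i,v_i\}$ would be a separation pair of $G$ whose top component contains no external vertex of $G$ other than $u_i,v_i$, violating (e) of $G$; thus $(u_i,v_i)=\tau_{u_iv_i}(G_i)$. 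For (e), let $\{a',b'\}$ be a separation pair of $G_i$. If some component of $G_i-\{a',b'\}$ avoids both $u_i$ and $v_i$, then $\{a',b'\}$ is also a separation pair of $G$, and I would invoke (e) of $G$: the vertices $a',b'$ are external in $G$, hence external in $G_i$; no non-trivial component of $G_i-\{a',b'\}$ can lie entirely in the interior of $G_i$, since such a component would be an interior $\{a',b'\}$-component of $G$ with no external vertex, which (e) of $G$ forbids; consequently $\{a',b'\}\ne\{u_i,v_i\}$, and as the external-in-$G$ vertices of $G_i$ all lie on $\beta_{u_iv_i}(G_i)$, at least one of $a',b'$ is internal to $\beta_{u_iv_i}(G_i)$. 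Otherwise the two components of $G_i-\{a',b'\}$ contain $u_i$ and $v_i$ respectively; then $\{a',b'\}$ separates $u_i$ from $v_i$, planarity places one of $a',b'$ internal to $\beta_{u_iv_i}(G_i)$ and the other on $\tau_{u_iv_i}(G_i)$, and $u_i,v_i$ are themselves the external vertices required in the two components.

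The step I expect to be the main obstacle is exactly this verification of property (e) for the blocks $G_i$: it hinges on the observation that the vertices of $G_i$ external in $G$ are precisely those on $\beta_{u_iv_i}(G_i)$, which is what both transports property (e) from $G$ to each $G_i$ and rules out the degenerate separation pair $\{u_i,v_i\}$. Getting the embedding bookkeeping right -- which arc of $G_i$ is the top and which is the bottom, and that the top faces the interior of $G$ -- is where the care is needed, and the same bookkeeping underlies the short proof of (d).
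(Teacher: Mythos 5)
Your proof is correct but follows a genuinely different route from the paper's. The paper proceeds by induction on the size of $G_{ab}$: the base case handles $G_{ab}$ being a path or biconnected (verifying the well-formed-quadruple properties only in the latter situation, where the attachments are exactly $a$ and $b$), and the inductive step distinguishes three cases -- peeling off $a$ when it has a unique neighbour in $G_{ab}$, symmetrically for $b$, or splitting off the biconnected component containing $a$ and recursing on the remaining bridge. You instead give a single global argument: the block-cut tree of $G_{ab}$ is a path because every leaf block would otherwise yield a cut-vertex of $G$ (each leaf block must contain $a$ or $b$ beyond its cut-vertex, and there are only two such vertices), and then you verify well-formedness once, uniformly, for every non-trivial block $G_i$ with its two attachments $u_i,v_i$. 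Your verification of properties (d) and (e) is sound: the key observations -- that the external-in-$G$ vertices of $G_{ab}$ are exactly those of $\beta_{ab}(G)$ (forced by property (e) of $G$ applied to $\{a,b\}$, which pushes $\tau_{ab}(G)$ into the other component), that these trace out $\beta_{u_iv_i}(G_i)$ inside each block, and that a separation pair of $G_i$ whose removal leaves a component avoiding $u_i$ and $v_i$ is also a separation pair of $G$ -- are exactly what is needed, and your two-case split (some component avoids both attachments versus the pair separates $u_i$ from $v_i$, in which case the interleaving of $a',b'$ with $u_i,v_i$ on the outer cycle places one of them inside $\beta_{u_iv_i}(G_i)$) covers everything, including ruling out $\{u_i,v_i\}$ itself as a separation pair of $G_i$. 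What each approach buys: the paper's induction keeps the well-formedness check in the simplest possible setting (a biconnected $G_{ab}$ with attachments $a,b$ known to have neighbours outside) at the cost of a somewhat fiddly three-way case analysis to make the recursion go through; your block-tree argument exposes the chain structure in one step and makes the ordering of $u_1,v_1,\dots,u_k,v_k$ along $\beta_{ab}(G)$ transparent, at the cost of having to check properties (c)--(f) for attachments that may be cut-vertices of $G_{ab}$ rather than $a$ or $b$ -- a mild generalization that, as you note, reduces to the same embedding bookkeeping.
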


\begin{figure}[htb]
\begin{center}
\mbox{\includegraphics[width=.8\textwidth]{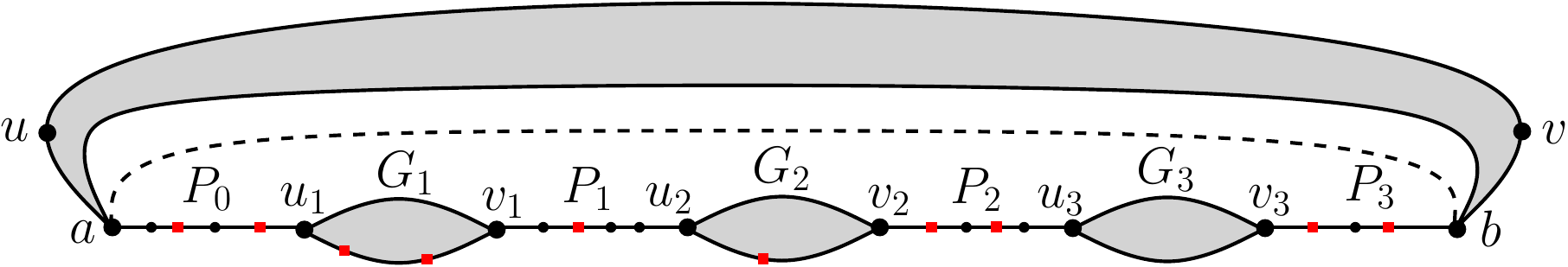}} \caption{Illustration for Lemma~\ref{le:separation-bottom} with $k=3$.}
\label{fig:structure}
\end{center}
\end{figure}

\begin{proof}
If $G$ contained more than two non-trivial $\{a,b\}$-components, then one of them would not contain any external vertex of $G$ different from $a$ and $b$, a contradiction to Property (e) of $(G,u,v,X)$. Thus, $G$ contains two non-trivial $\{a,b\}$-components, one of which is $G_{ab}$. Possibly, $G$ contains a trivial $\{a,b\}$-component which is an internal edge $(a,b)$ of $G$. The statement is proved by induction on the size of $G_{ab}$. 

In the base case, $G_{ab}$ is a path between $a$ and $b$ or is a biconnected graph. In the former case, $G_{ab}$ coincides with $\beta_{ab}(G)$ and the statement of the lemma follows. In the latter case, the statement of the lemma follows with $k=1$, $G_1=G_{ab}$, $P_0=a$, and $P_k=b$, as long as $(G_{ab},a,b,X_{ab})$ is a well-formed quadruple, where $X_{ab}=X\cap V(G_{ab})$. We now prove that this is indeed the case.

\begin{itemize}
\item Property (a): $G_{ab}$ is biconnected by hypothesis and subcubic since $G$ is subcubic. 
\item Property (b): $a$ and $b$ are external vertices of $G_{ab}$ as they are external vertices of $G$. 
\item Property (c): the degree of $a$ and $b$ in $G_{ab}$ is at least $2$, by the biconnectivity of $G_{ab}$, and at most $2$, since $G$ is subcubic and since $a$ and $b$ have a neighbor in the non-trivial $\{a,b\}$-component of $G$ different from $G_{ab}$. 
\item Property (d): if edge $(a,b)$ exists in $G$, then it forms a trivial $\{a,b\}$-component and it does not belong to $G_{ab}$, hence the property is trivially satisfied.
\item Property (e): consider any separation pair $\{a',b'\}$ of $G_{ab}$. If $G_{ab}$ contained more than two non-trivial $\{a',b'\}$-components, as in Fig.~\ref{fig:structure-proof}(a), then one of them would be a non-trivial $\{a',b'\}$-component of $G$ that contains no external vertex of $G$ different from $a'$ and $b'$, a contradiction to Property (e) of $(G,u,v,X)$. It follows that $G_{ab}$ contains two non-trivial $\{a',b'\}$-components $G'_{ab}$ and $G''_{ab}$. 


\begin{figure}[htb]
\begin{center}
\begin{tabular}{c c c}
\mbox{\includegraphics[width=.12\textwidth]{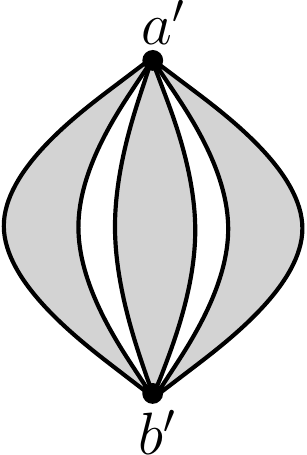}}\hspace{6mm} &
\mbox{\includegraphics[width=.2\textwidth]{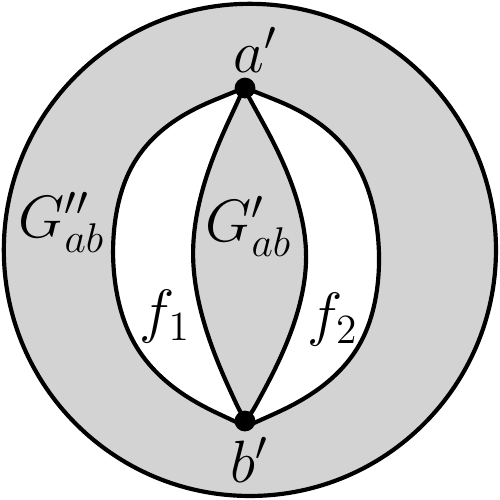}}\hspace{6mm} &
\mbox{\includegraphics[width=.32\textwidth]{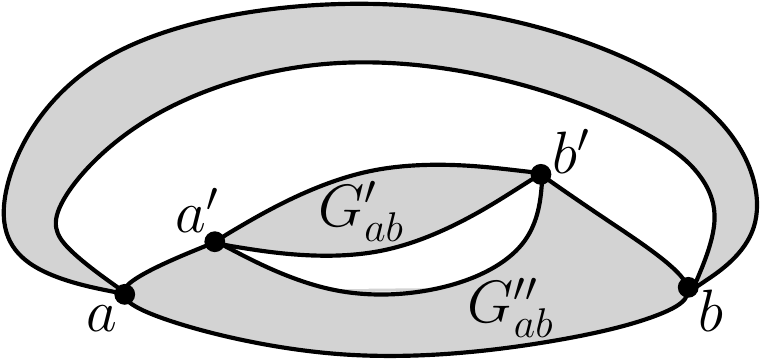}}\\
(a) \hspace{6mm} & (b)  \hspace{6mm} & (c) 
\end{tabular}
\caption{(a) $G_{ab}$ contains more than two non-trivial $\{a',b'\}$-components. (b) $G'_{ab}$ does not contain external vertices of $G_{ab}$. (c) $a'$ and $b'$ both belong to $\tau_{ab}(G_{ab})$.}
\label{fig:structure-proof}
\end{center}
\end{figure}

There are at most two faces $f_i$ of $G_{ab}$, with $i=1,2$, such that both $G'_{ab}$ and $G''_{ab}$ contain vertices different from $a'$ and $b'$ incident to $f_i$. If the outer face of $G_{ab}$ was not one of $f_1$ and $f_2$, as in Fig.~\ref{fig:structure-proof}(b), then one of $G'_{ab}$ and $G''_{ab}$ would be a non-trivial $\{a',b'\}$-component of $G$ that contains no external vertex of $G$ different from $a'$ and $b'$, a contradiction to Property (e) of $(G,u,v,X)$. It follows that both $G'_{ab}$ and $G''_{ab}$ contain external vertices of $G_{ab}$ different from $a'$ and $b'$; also, $a'$ and $b'$ are external vertices of $G_{ab}$. Now assume, for a contradiction, that $a'$ and $b'$ both belong to $\tau_{ab}(G_{ab})$, as in Fig.~\ref{fig:structure-proof}(c) (possibly $a'=a$, or $b'=b$, or both). Then $a$ and $b$ are both contained in the $\{a',b'\}$-component of $G_{ab}$, say $G''_{ab}$, containing $\beta_{ab}(G_{ab})$. It follows that $G'_{ab}$ is a non-trivial $\{a',b'\}$-component of $G$ containing no external vertex of $G$ different from $a'$ and $b'$, a contradiction to Property (e) of $(G,u,v,X)$. Hence, at least one of $a'$ and $b'$ is an internal vertex of $\beta_{ab}(G_{ab})$.
\item Property (f): the vertices in $X_{ab}$ have degree $2$ in $G_{ab}$ and are in $\beta_{ab}(G_{ab})$ since they have degree $2$ in $G$ and are in $\beta_{uv}(G)$. Note that $a,b\notin X$; indeed $G_{ab}$ is biconnected and both $a$ and $b$ have neighbors not in $G_{ab}$, hence $\delta_G(a)=\delta_G(b)=3$.
\end{itemize}

For the induction, we distinguish three cases. 

In the first case $a$ has a unique neighbor $a'$ in $G_{ab}$. Then $a'$ is an internal vertex of $\beta_{uv}(G)$. Since we are not in the base case, $G_{ab}$ is not a simple path with two edges; hence, $\{a',b\}$ is a separation pair of $G$ satisfying the conditions of the lemma. Let $G_{a'b}$ be the $\{a',b\}$-component of $G$ containing $\beta_{a'b}(G)$. Then $G_{ab}$ consists of $G_{a'b}$ together with vertex $a$ and edge $(a,a')$ and induction applies to $G_{a'b}$. If $G_{a'b}$ coincides with $\beta_{a'b}(G)$, then $G_{ab}$ coincides with $\beta_{ab}(G)$, contradicting the fact that we are not in the base case. Hence, $G_{a'b}$ consists of: (i) a path $P'_0=(a',\dots,u_1)$; (ii) for $i=1,\dots,k$ with $k\geq 1$, a biconnected component $G_i$ of $G_{a'b}$ that contains vertices $u_i$ and $v_i$ and such that $(G_i,u_i,v_i,X_i)$ is a well-formed quadruple; (iii) for $i=1,\dots,k-1$, a path $P_i=(v_i,\dots,u_{i+1})$, where $u_{i+1}\neq v_i$; and (iv) a path $P_{k}=(v_{k},\dots,b)$. Then $G_{ab}$ is composed of: (i) path $(a,a')\cup P'_0$; (ii) for $i=1,\dots,k$, the biconnected component $G_i$ of $G_{ab}$; (iii) for $i=1,\dots,k-1$, path $P_i$; and (iv) path $P_{k}$. 

The second case, in which $b$ has a unique neighbor in $G_{ab}$, is symmetric to the first one.

In the third case, the degree of both $a$ and $b$ in $G_{ab}$ is greater than $1$.  Let $G_1$ be the biconnected component of $G_{ab}$ containing $a$. Let $H$ be the subgraph of $G_{ab}$ induced by the vertices with incident edges not in $G_1$. We prove the following claim: $b\notin V(G_1)$, and $H$ and $G_1$ share a single vertex $a'\neq b$, which is an internal vertex of $\beta_{uv}(G)$. 


Assume, for a contradiction, that $b\in V(G_1)$. Then $G_{ab}$ is biconnected. Indeed, if $G_1$ contains a cut-vertex of $G_{ab}$, then this cut-vertex is also a cut-vertex of $G$, since $\{a,b\}$ is a separation pair of $G$ and $G_{ab}$ is an $\{a,b\}$-component of $G$; however, by Property (a) of $(G,u,v,X)$ graph $G$ is biconnected. By the biconnectivity of $G_{ab}$ and the maximality of $G_1$ we have $G_1=G_{ab}$; hence, we are in the base case, a contradiction. 

Every $G_1\cup \{b\}$-bridge of $G_{ab}$ has exactly one attachment in $G_1$ and there is exactly one $G_1\cup \{b\}$-bridge $H$; otherwise, $G_{ab}$ would contain a path not in $G_1$ between two vertices of $G_1$, contradicting the maximality of $G_1$. Denote by $a'$ the only attachment of $H$ in $G_1$. Note that $\delta_{H}(a')=1$, as $\delta_{G_1}(a')\geq 2$ since $G_1$ is biconnected. By the planarity of $G$, we have that $a'$ is incident to the outer face of $G_1$, since $a$ and $b$ are both incident to the outer face of $G$. Since $a'$ is the only attachment of $H$ in $G_1$, it follows that $a'$ is an internal vertex of $\beta_{uv}(G)$. This concludes the proof of the claim.

By the claim and since $G_1$ and $H$ are not single edges, given that the degree of both $a$ and $b$ in $G_{ab}$ is greater than $1$, it follows that $\{a,a'\}$ and $\{a',b\}$ are separation pairs of $G$ satisfying the statement of the lemma, hence induction applies to $G_1$ and $H$. In particular, $(G_1,u_1,v_1,X_1)$ is a well-formed quadruple, with $X_1=X\cap V(G_1)$, $u_1=a$ and $v_1=a'$. Further, $H$ consists of: (i) for $i=1,\dots,k-1$ with $k\geq 2$, a path $P_i=(v_i,\dots,u_{i+1})$ where $u_{i+1}\neq v_i$; note that $P_1=(v_1=a',\dots,u_2)$ satisfies $u_2\neq a'$ since $\delta_H(a')=1$; (ii) for $i=2,\dots,k$, a biconnected component $G_i$ of $H$ containing vertices $u_i$ and $v_i$ (with $v_k=b$) and such that $(G_i,u_i,v_i,X_i)$ is a well-formed quadruple, with $X_i=X\cap V(G_i)$. Then $G_{ab}$ is composed of: (i) a path $P_0=(a)$; (ii) for $i=1,\dots,k$ with $k\geq 1$, a biconnected component $G_i$ that contains vertices $u_i$ and $v_i$ and such that $(G_i,u_i,v_i,X_i)$ is a well-formed quadruple; (iii) for $i=1,\dots,k-1$, a path $P_i=(v_i,\dots,u_{i+1})$, where $u_{i+1}\neq v_i$; and (iv) a path $P_{k}=(b)$. This concludes the proof of the lemma.
\end{proof}


We are now ready to prove Lemma~\ref{le:cubic-2connected}. The proof is by induction on the size of $G$. 

{\bf Base case}: $G$ is a simple cycle. Refer to Fig.~\ref{fig:base-case}. If $u$ and $v$ were not adjacent, then $\{u,v\}$ would be a separation pair none of whose vertices is internal to $\beta_{uv}(G)$, contradicting Property (e) of $(G,u,v,X)$. Thus, edge $(u,v)$ exists and coincides with $\tau_{uv}(G)$ by Property (d). We now construct a proper good curve $\lambda$. Curve $\lambda$ starts at $u$; it then passes through all the vertices in $V(G)-(X\cup\{v\})$ in the order in which they appear along $\beta_{uv}(G)$ from $u$ to $v$; in particular, if two vertices in $V(G)-(X\cup\{v\})$ are consecutive in $\beta_{uv}(G)$, then $\lambda$ contains the edge between them. If the neighbor $v'$ of $v$ in $\beta_{uv}(G)$ is not in $X$, then $\lambda$ ends at $v'$, otherwise $\lambda$ ends at a point $z$ in the interior of edge $(v,v')$. Charge $v$ to $u$ and note that $v$ is the only vertex in $V(G)-X$ that is not on $\lambda$. It is easy to see that $\lambda$ is a proper good curve satisfying Properties (1)--(6).

\begin{figure}[htb]
\begin{center}
\mbox{\includegraphics[width=.3\textwidth]{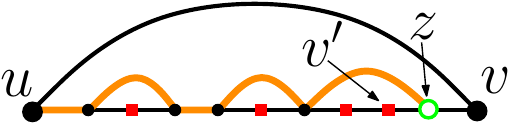}} \caption{Base case for the proof of Lemma~\ref{le:cubic-2connected}.}
\label{fig:base-case}
\end{center}
\end{figure}


Next we describe the inductive cases. In the description of each inductive case, we implicitly assume that  none of the previously described cases applies.  

{\bf Case 1}: edge $(u,v)$ exists. Refer to Fig.~\ref{fig:cubic-case-1}. By Property (d) of $(G,u,v,X)$ edge $(u,v)$ coincides with $\tau_{uv}(G)$. By Property (c), vertex $v$ has a unique neighbor $v'$. Since $G$ is not a simple cycle with length three, $\{u,v'\}$ is a separation pair of $G$ to which Lemma~\ref{le:separation-bottom} applies. If the $\{u,v'\}$-component of $G$ containing $\beta_{uv'}(G)$ coincided with $\beta_{uv'}(G)$, then $G$ would be a simple cycle, a contradiction to the fact that we are not in the base case. Hence, the graph $G'$ obtained from $G$ by removing edge $(u,v)$ consists of: (i) a path $P_0=(u,\dots,u_1)$; (ii) for $i=1,\dots,k$ with $k\geq 1$, a biconnected component $G_i$ of $G'$ that contains vertices $u_i$ and $v_i$ and such that $(G_i,u_i,v_i,X_i)$ is a well-formed quadruple, where $X_i=X\cap V(G_i)$; (iii) for $i=1,\dots,k-1$, a path $P_i=(v_i,\dots,u_{i+1})$, where $u_{i+1}\neq v_i$; and (iv) a path $P_{k}=(v_{k},\dots,v)$. Inductively compute a curve $\lambda_i$ satisfying the properties of Lemma~\ref{le:cubic-2connected} for each quadruple $(G_i,u_i,v_i,X_i)$. We construct a proper good curve $\lambda$ for $(G,u,v,X)$ as follows. 

\begin{figure}[htb]
\begin{center}
\mbox{\includegraphics[width=.7\textwidth]{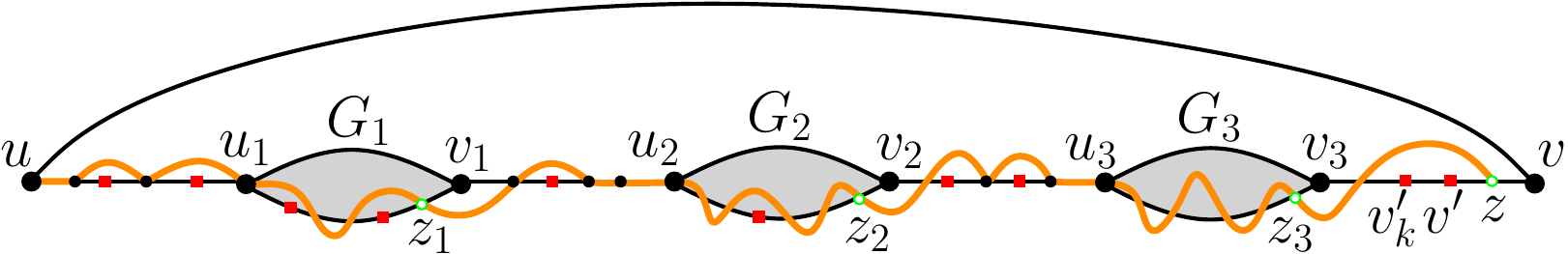}} \caption{Case 1 of the proof of Lemma~\ref{le:cubic-2connected} with $k=3$.}
\label{fig:cubic-case-1}
\end{center}
\end{figure}


\begin{itemize}
\item Curve $\lambda$ starts at $u$. 
\item It then passes through all the vertices in $V(P_0)\setminus X$ in the order as they appear along $\beta_{uv}(G)$ from $u$ to $u_1$; note that $u_1\notin X$, since $\delta_G(u_1)=3$, hence $\lambda$ passes through $u_1$; this part of $\lambda$ lies in the internal face of $G$ incident to edge $(u,v)$. 
\item Suppose that $\lambda$ has been constructed up to a vertex $u_i$, for some $1\leq i\leq k$. Then $\lambda$ contains $\lambda_i$, which terminates at a point $z_i$ on $\beta_{u_iv_i}(G_i)$. 
\item Suppose that $\lambda$ has been constructed up to a point $z_i$ on $\beta_{u_iv_i}(G_i)$, for some $1\leq i\leq k-1$. Then $\lambda$ continues with a curve in the outer face of $G$ from $z_i$ to the neighbor $v'_i$ of $v_i$ in $P_i$ (if $v'_i\notin X$, as with $i=1$ in Fig.~\ref{fig:cubic-case-1}) or from $z_i$ to a point in the interior of edge $(v_i,v'_i)$ (if $v'_i\in X$, as with $i=2$ in Fig.~\ref{fig:cubic-case-1}). 
\item Suppose that $\lambda$ has been constructed up to a point on edge $(v_i,v'_i)$ (possibly coinciding with $v'_i$), for some $1\leq i\leq k-1$. Then $\lambda$ passes through all the vertices in $V(P_i)\setminus(X\cup \{v_i\})$ in the order as they appear along $\beta_{uv}(G)$ from $v_i$ to $u_{i+1}$; note that $u_{i+1}\notin X$, since $\delta_G(u_{i+1})=3$, hence $\lambda$ passes through $u_{i+1}$; this part of $\lambda$ lies in the internal face of $G$ incident to edge $(u,v)$.
\item Finally, suppose that $\lambda$ has been constructed up to a point $z_k$ on $\beta_{u_kv_k}(G_k)$. If the neighbor $v'_k$ of $v_k$ in $P_k$ is $v$, then $\lambda$ terminates at $z_k$. Otherwise, $\lambda$ continues with a curve in the outer face of $G$ from $z_k$ to $v'_k$ (if $v'_k\notin X$) or from $z_k$ to a point in the interior of edge $(v_k,v'_k)$ (if $v'_k\in X$). Then $\lambda$ passes through all the vertices in $V(P_k)\setminus(X\cup \{v_k,v\})$ in the order as they appear along $\beta_{uv}(G)$ from $v_k$ to $v$. If $v'\in X$, then $\lambda$ terminates at a point $z$ along edge $(v',v)$, otherwise $\lambda$ terminates at $v'$. 
\end{itemize}

Curve $\lambda$ satisfies Properties (1)--(5) of Lemma~\ref{le:cubic-2connected}. In particular, the part of $\lambda$ from $z_i$ to a point on edge $(v_i,v'_i)$ can be drawn without causing self-intersections because $\lambda_i$ satisfies Properties (2), (3), and (5) by induction; in fact, these properties ensure that $z_i$ and $v'_i$ are both incident to $R_{G,\lambda_i}$. For $i=1,\dots,k$, the charge of the vertices in $(N_{\lambda}\cap V(G_i))$ to the vertices in $L_{\lambda}\cap V(G_i)$ is determined inductively, thus each vertex in $L_{\lambda}\cap V(G_i)$ is charged with at most three vertices; charge $v$ to $u$ and observe that Property (6) is satisfied by the constructed charging scheme.  

If Case~1 does not apply, then consider the graph $G'=G-\{v\}$. Since $\{u,v\}$ is not a separation pair of $G$, then $u$ is not a cut-vertex of $G'$. Let $H$ be the biconnected component of $G'$ containing $u$. We have the following claim.

\begin{claimx} \label{cl:structure-h} 
Graph $G$ has two $H\cup \{v\}$-bridges $B_1$ and $B_2$; further, each of $B_1$ and $B_2$ has two attachments, one of which is $v$; finally, one of $B_1$ and $B_2$ is an edge of $\tau_{uv}(G)$. 
\end{claimx}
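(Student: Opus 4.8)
The plan is to first pin down the block structure of $G'=G-\{v\}$ around $H$, then count the $H\cup\{v\}$-bridges together with their attachments (which gives the first two assertions), and finally to single out the bridge that lies on $\tau_{uv}(G)$ (the third assertion), which is the delicate part.

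First I would record two easy structural facts. Since $u$ has degree $2$ by Property~(c) and is not adjacent to $v$ (edge $(u,v)$ does not exist, as Case~1 does not apply), both edges incident to $u$ survive in $G'$; because $u$ is not a cut-vertex of $G'$, these two edges lie in its unique block $H$, so both neighbors of $u$ belong to $V(H)$ and $H$ is $2$-connected (in particular, $H$ is not a single edge). Let $v_\beta$ and $v_\tau$ denote the two neighbors of $v$, where $(v_\beta,v)$ is the last edge of $\beta_{uv}(G)$ and $(v_\tau,v)$ is the last edge of $\tau_{uv}(G)$; these two vertices are distinct and different from $u$.

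Next I would classify the bridges, using that $G-(V(H)\cup\{v\})=G'-V(H)$, so every non-trivial $H\cup\{v\}$-bridge arises from a connected component $K$ of $G'-V(H)$. Since $H$ is a block of $G'$, a standard ear argument (a path through $K$ between two attachments of $K$ in $H$ would enlarge the block) shows that $K$ attaches to $H$ at a single cut-vertex $c$. Biconnectivity of $G$ (Property~(a)) then forces two things: the bridge needs a second attachment, which can only be $v$, so $K$ contains a neighbor of $v$; and $K$ cannot contain both $v_\beta$ and $v_\tau$, since then $c$ would be a cut-vertex of $G$. Likewise, maximality of $H$ rules out any trivial bridge with both endpoints in $V(H)$, so every trivial bridge is an edge $(v,v_i)$ with $v_i\in V(H)$. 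Hence every bridge contains exactly one of the two edges incident to $v$, and each such edge lies in exactly one bridge; this yields precisely two bridges $B_1,B_2$, each with exactly two attachments, one of which is $v$, settling the first two assertions.

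The crux is to show that one bridge is the single edge $(v_\tau,v)$, an edge of $\tau_{uv}(G)$; equivalently, that $v_\tau\in V(H)$. I would argue by contradiction: if $v_\tau\notin V(H)$, then $v_\tau$ lies in a component $K$ whose bridge has attachments $\{c,v\}$, so $\{c,v\}$ is a separation pair of $G$. By Property~(e) one of $c,v$ is an internal vertex of $\beta_{uv}(G)$, and since $v$ is an endpoint of $\beta_{uv}(G)$ it must be $c$. As $G$ is a biconnected plane graph its outer boundary is a simple cycle, so $\beta_{uv}(G)$ and $\tau_{uv}(G)$ meet only at $u$ and $v$; thus $c$, being internal to $\beta_{uv}(G)$, lies neither in $\{u,v\}$ nor on $\tau_{uv}(G)$. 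But then the subpath of $\tau_{uv}(G)$ from $u$ to $v_\tau$ connects $u\in V(H)$ to $v_\tau\in K$ while avoiding both $c$ and $v$, contradicting that $\{c,v\}$ separates the interior vertices of the bridge (which include $v_\tau$) from $u$ (note $u\neq c$). Hence $v_\tau\in V(H)$, so the bridge containing $(v,v_\tau)$ is exactly that edge, the last edge of $\tau_{uv}(G)$. I expect this final step to be the main obstacle: it is where planarity, the degree hypothesis, and Property~(e) must be combined, and where one must correctly see that it is the $\tau_{uv}(G)$-side neighbor of $v$ (not the $\beta_{uv}(G)$-side one) that is forced into $H$ — the symmetric argument for $v_\beta$ fails precisely because the $\beta_{uv}(G)$-path from $u$ to $v_\beta$ would pass through $c$.
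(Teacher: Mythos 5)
Your proposal is correct and follows essentially the same route as the paper: maximality of $H$ limits each bridge to one attachment in $H$, biconnectivity together with $\delta_G(v)=2$ yields exactly two bridges each attached at $v$, and Property~(e) applied to the separation pair formed by $v$ and the $H$-attachment of the $\tau_{uv}(G)$-side bridge forces that bridge to be the single edge $(v_\tau,v)$. Your contradiction via the subpath of $\tau_{uv}(G)$ from $u$ to $v_\tau$ merely makes explicit the planarity fact the paper asserts without proof, namely that one attachment lies on $\tau_{uv}(G)$ and the other on $\beta_{uv}(G)$.
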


\begin{proof}
First, each $H\cup \{v\}$-bridge $B_i$ of $G$ has at most one attachment $y_i$ in $H$, as otherwise $B_i$ would contain a path (not passing through $v$) between two vertices of $H$, and $H$ would not be maximal. 

Second, if $B_i$ had no attachment in $H$, then $v$ would be a cut-vertex of $G$, whereas $G$ is biconnected. Also, if $v$ was not an attachment of $B_i$, then $y_i$ would be a cut-vertex of $G$, whereas $G$ is biconnected. Hence, $B_i$ has two attachments, namely $v$ and $y_i$. Further, if there was a single $H\cup \{v\}$-bridge $B_i$, then $y_i$ would be a cut-vertex of $G$, whereas $G$ is biconnected. This and $\delta_G(v)=2$ imply  that $G$ has two $H\cup \{v\}$-bridges $B_1$ and $B_2$. 

Finally, one of $y_1$ and $y_2$, say $y_1$, belongs to $\tau_{uv}(G)$, while the other one, say $y_2$, belongs to $\beta_{uv}(G)$. Hence, if $B_1$ was not a trivial $H\cup \{v\}$-bridge, then $\{y_1,v\}$ would be a separation pair none of whose vertices is internal to $\beta_{uv}(G)$, whereas $(G,u,v,X)$ is a well-formed quadruple. This concludes the proof of the claim.
\end{proof}

By Claim~\ref{cl:structure-h} graph $G$ is composed of three subgraphs: a biconnected graph $H$, an edge $B_1=(y_1,v)$, and a graph $B_2$, where $H$ and $B_1$ share vertex $y_1$, $H$ and $B_2$ share vertex $y_2$, and $B_1$ and $B_2$ share vertex $v$. Before proceeding with the case distinction, we argue about the structure of $H$. Let $X'=\{y_2\}\cup (X\cap V(H))$. We have the following.

\begin{claimx} \label{cl:H-well-formed}
$(H,u,y_1,X')$ is a well-formed quadruple. 
\end{claimx}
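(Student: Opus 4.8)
The plan is to verify that $(H,u,y_1,X')$ satisfies each of Properties~(a)--(f) in the definition of a well-formed quadruple, drawing on the structural description of $G$ given by Claim~\ref{cl:structure-h} and on the fact that $(G,u,v,X)$ is itself well-formed. Before checking the properties, I would first record a description of the outer boundary of $H$. Since edge $(y_1,v)$ lies on $\tau_{uv}(G)$ and $v$ has degree $2$, deleting $v$ together with the interior of the bridge $B_2$ merges the outer face of $G$ with the internal face of $G$ incident to $v$; moreover the rotation at $u$ is unaffected since $u$ is not adjacent to $v$. This shows that the outer boundary of $H$ is a cycle whose clockwise $u$-to-$y_1$ path $\tau_{uy_1}(H)$ is exactly the subpath of $\tau_{uv}(G)$ from $u$ to $y_1$, while its counter-clockwise path $\beta_{uy_1}(H)$ has the subpath of $\beta_{uv}(G)$ from $u$ to $y_2$ as a prefix, so that $y_2$ is an internal vertex of $\beta_{uy_1}(H)$. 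Two consequences I would extract and reuse are that $V(H)\cap\beta_{uv}(G)$ is precisely the $u$-to-$y_2$ subpath, and that every vertex external in $G$ that belongs to $H$ is also external in $H$.

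Properties~(a)--(c) are then short: $H$ is biconnected by definition and subcubic as a subgraph of $G$; $u$ and $y_1$ are external in $H$ by the boundary description, and they are distinct because $u=y_1$ would make $(u,v)$ an edge of $G$, contradicting that Case~1 does not apply; and $\delta_H(u)=\delta_H(y_1)=2$, since $u$ keeps both neighbors in $H$ (it has degree $2$ in $G$, is not adjacent to $v$, and is not a cut-vertex of $G'$) while $y_1$ loses exactly its edge to $v$ from a degree that is at most $3$ and at least $2$ by biconnectivity. For Property~(d), if $(u,y_1)$ is an edge of $H$ then, as $\delta_H(u)=2$ and both edges at $u$ lie on the boundary cycle, this edge must be the $\tau$-side edge at $u$ (the $\beta$-side is excluded because $y_2\neq u,y_1$ lies internally on $\beta_{uy_1}(H)$), whence $\tau_{uy_1}(H)=(u,y_1)$. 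For Property~(f), the boundary description gives $X\cap V(H)=(x_1,\dots,x_j)$, a prefix of $X$ lying internally on $\beta_{uy_1}(H)$ with each $x_i$ of degree $2$ in $H$; since $\delta_G(y_2)=3$ forces $y_2\notin X$ while $\delta_H(y_2)=2$, appending $y_2$ after $x_j$ yields exactly the required ordered sequence $X'=(x_1,\dots,x_j,y_2)$.

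The main obstacle is Property~(e), the condition on separation pairs, because a separation pair $\{a,b\}$ of $H$ need not be a separation pair of $G$: re-attaching the connected blob $B_1\cup B_2\cup\{v\}$ at $y_1$ and $y_2$ can reconnect $H-\{a,b\}$. I would therefore split into two cases. If $\{a,b\}$ is \emph{also} a separation pair of $G$, then by Property~(e) of $(G,u,v,X)$ both $a,b$ are external in $G$, hence external in $H$, and one of them is internal to $\beta_{uv}(G)$; since $a,b\in V(H)$, that vertex lies on the $u$-to-$y_2$ part of $\beta_{uy_1}(H)$ and is thus internal to $\beta_{uy_1}(H)$. If $\{a,b\}$ is \emph{not} a separation pair of $G$, then re-attaching the blob reconnects $H-\{a,b\}$, which forces $H-\{a,b\}$ to have exactly two components, one containing $y_1$ and the other $y_2$. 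A planarity argument on the outer boundary cycle $O$ of $H$ then finishes this case: $a$ and $b$ must both lie on $O$, since otherwise $y_1$ and $y_2$ would remain connected along $O$ (giving externality), and at least one of $a,b$ must lie internally on $\beta_{uy_1}(H)$, since if both lay on $\tau_{uy_1}(H)$ the intact $\beta$-arc through $y_2$ would again reconnect $y_1$ and $y_2$.

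It remains to handle the last requirement of Property~(e), that every non-trivial $\{a,b\}$-component of $H$ contain an external vertex of $H$ other than $a,b$, which I would prove by contradiction uniformly over both cases: if some non-trivial component $K$ had no external-$H$ vertex besides $a,b$, then $y_1$ and $y_2$ (external in $H$) would not be internal to $K$, so the blob attaches outside $K$; hence $K$ is also a non-trivial $\{a,b\}$-component of $G$, and since the external vertices of $G$ lying in $H$ are external in $H$, $K$ would contain no external vertex of $G$ other than $a,b$ -- contradicting Property~(e) of $(G,u,v,X)$. This completes all six checks. I expect the boundary-cycle reasoning in the second case of Property~(e) to demand the most care, as it is the only place where the argument cannot be reduced to a property of $G$ and must instead invoke planarity of $H$ directly.
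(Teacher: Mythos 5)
Your proof is correct and follows the same overall route as the paper's: a property-by-property verification that reduces each condition on $(H,u,y_1,X')$ to the corresponding condition on $(G,u,v,X)$, using the structure of $G$ as $H$ plus the two bridges $B_1$ and $B_2$. Properties (a)--(d) and (f) are handled essentially as in the paper (the paper dismisses (a)--(c) as trivial and argues (d) exactly as you do, via $\delta_H(u)=2$ and $y_2$ being internal to $\beta_{uy_1}(H)$). The substantive difference is in Property (e). The paper's proof asserts outright that a separation pair $\{a,b\}$ of $H$ with $a$ internal to $H$, or with both $a,b$ on $\tau_{uy_1}(H)$, is again a separation pair of $G$, and that a component with no external vertex of $H$ lifts to a component of $G$; you correctly flag this lifting as the delicate point, since re-attaching $B_1\cup B_2\cup\{v\}$ at $y_1$ and $y_2$ could in principle reconnect $H-\{a,b\}$. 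Your two-case split (separation pair of $G$ versus not) together with the outer-cycle argument -- reconnection can occur only when $y_1,y_2\notin\{a,b\}$ lie in different components, which forces both $a,b$ onto the outer cycle of $H$ with at least one internal to $\beta_{uy_1}(H)$ -- is precisely the justification the paper leaves unstated, and it simultaneously delivers the desired conclusion in the case the paper does not explicitly cover. Your uniform treatment of the non-trivial-component condition (via $y_1,y_2\notin V(H_{ab})\setminus\{a,b\}$) matches the paper's third sub-argument. In short: same approach, with a gap in the paper's Property (e) argument made explicit and closed.
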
 

\begin{proof}
Properties (a)--(c) are trivially satisfied by $(H,u,y_1,X')$. Concerning Property (d), if edge $(u,y_1)$ exists, then it is either $\tau_{uy_1}(H)$ or $\beta_{uy_1}(H)$, since $\delta_H(u)=2$. However, $(u,y_1)\neq \beta_{uy_1}(H)$, since $y_2\in \beta_{uy_1}(H)$ and $y_2\neq u, y_1$. 

Next, we discuss Property (e). Consider any separation pair $\{a,b\}$ of $H$. First, if $a$ was not an external vertex of $H$, then $\{a,b\}$ would also be a separation pair of $G$ such that $a$ is not an external vertex of $G$; this would contradict Property (e) of $(G,u,v,X)$. Second, if both $a$ and $b$ were in $\tau_{uy_1}(H)$, then $\{a,b\}$ would be a separation pair of $G$ whose vertices are both in $\tau_{uv}(G)$, given that $\tau_{uy_1}(H)\subset \tau_{uv}(G)$; again, this would contradict Property (e) of $(G,u,v,X)$. Third, if an $\{a,b\}$-component $H_{ab}$ of $H$ contained no external vertex of $H$ different from $a$ and $b$, then $H_{ab}$ would also be an $\{a,b\}$-component of $G$ containing no external vertex of $G$ different from $a$ and $b$, again contradicting Property (e) of $(G,u,v,X)$.

Finally, we deal with Property (f). The vertices in $X\cap X'$ have degree $2$ in $H$ since they have degree $2$ in $G$ and are internal to $\beta_{uy_1}(H)$ since they are internal to $\beta_{uv}(G)$. Further, we have that $\delta_H(y_2)=2$ since $H$ is biconnected, since $\delta_H(y_2)<\delta_G(y_2)$ (given that $y_2$ has a neighbor in $B_2$ not in $H$), and since $\delta_G(y_2)\leq 3$. Also, $y_2$ is an internal vertex of $\beta_{uy_1}(H)$, since it is an internal vertex of $\beta_{uv}(G)$ and is in $H$. This concludes the proof of the claim. 
\end{proof}



{\bf Case 2}: $B_2$ contains a vertex not in $X\cup\{v,y_2\}$. Refer to Fig.~\ref{fig:cubic-case-2}. Curve $\lambda$ will be composed of three curves $\lambda_1$, $\lambda_2$, and $\lambda_3$. 

\begin{figure}[htb]
\begin{center}
\mbox{\includegraphics[width=.6\textwidth]{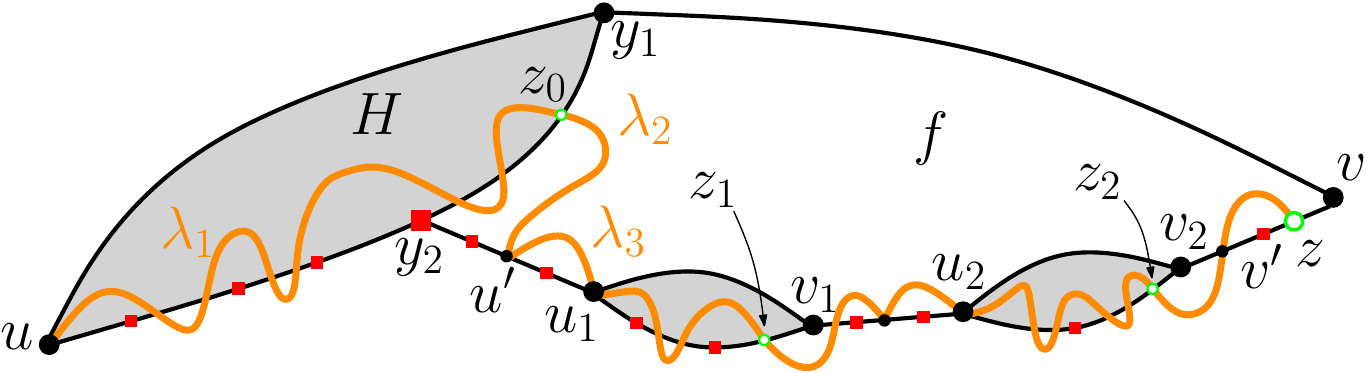}} \caption{Case 2 of the proof of Lemma~\ref{le:cubic-2connected}.}
\label{fig:cubic-case-2}
\end{center}
\end{figure}

Curve $\lambda$ starts at $u$. By Claim~\ref{cl:H-well-formed}, a curve $\lambda_1$ satisfying the properties of Lemma~\ref{le:cubic-2connected} can be inductively constructed for $(H,u,y_1,X')$. Notice that $y_2\in X'$, thus $\lambda_1$ terminates at a point $z_0$ in $\beta_{y_2y_1}(H)$, by Property (2) of $\lambda_1$.  

Curve $\lambda_2$ lies in the internal face $f$ of $G$ incident to edge $(y_1,v)$ and connects $z_0$ with a vertex $u'$ in $B_2$ determined as follows. Traverse $\beta_{uv}(G)$ from $y_2$ to $v$ and let $u'\neq y_2$ be the first encountered vertex not in $X$. By Property (f) of $(G,u,v,X)$, every vertex in $X\cap V(B_2)$ has degree $2$ in $G$ and in $B_2$; also, $\delta_{B_2}(y_2)=\delta_{B_2}(v)=1$. If all the internal vertices of $\beta_{y_2 v}(G)$ belong to $X$, then $B_2$ is a path whose internal vertices are in $X$, a contradiction to the hypothesis of Case 2. Hence, $u'\neq v$, $\beta_{y_2u'}(G)$ is induced in $B_2$, $u'$ is incident to $f$, and the interior of $\lambda_2$ crosses no edge of $G$. It is vital here that $\lambda_1$ satisfies Properties (3)--(5), ensuring that $y_2$ is not on $\lambda_1$ and that the edge incident to $y_2$ in $B_2$ is in $R_{G,\lambda_1}$. Thus, if such an edge is $(y_2,u')$, still $\lambda$ intersects it only once.
 
Curve $\lambda_3$ connects $u'$ with a point $z\neq y_2,v$ on $\beta_{y_2 v}(G)$. Note that $\{y_2,v\}$ is a separation pair of $G$, since by hypothesis $B_2$ is not an edge; further, $y_2$ and $v$ both belong to $\beta_{uv}(G)$. Hence Lemma~\ref{le:separation-bottom} applies and curve $\lambda_3$ is constructed as in Case~1. 

Curve $\lambda$ satisfies Properties (1)--(5) of Lemma~\ref{le:cubic-2connected}. We determine inductively the charge of the vertices in $(N_{\lambda}\cap V(H))-\{y_2\}$ to the vertices in $L_{\lambda}\cap V(H)$, and the charge of the vertices in $N_{\lambda}$ in each biconnected component $G_i$ of $B_2$ to the vertices in $L_{\lambda}\cap V(G_i)$. The only vertices in $N_{\lambda}$ that have not yet been charged to vertices in $L_{\lambda}$ are $y_2$ and $v$; charge them to $u'$. Then $u$ is charged with at most $1$ vertex of $H$; every vertex in $L_{\lambda}-\{u,u'\}$ is charged with at most $3$ vertices if it is in $H$ or in a biconnected component of $B_2$, or with no vertex otherwise; finally, $u'$ is charged with $y_2$, $v$, and with no other vertex if $\delta_{G}(u')=2$ or with at most $1$ other vertex if $\delta_{G}(u')=3$; indeed, in the latter case $u'=u_1$ is such that induction is applied on a quadruple $(G_1,u_1,v_1,X_1)$. Thus, Property (6) is satisfied by the constructed charging scheme.  

If Case~2 does not apply, then $B_2$ is a path between $y_2$ and $v$ whose internal vertices are in $X$. In order to proceed with the case distinction, we explore the structure of $H$.


{\bf Case 3}: edge $(u,y_1)$ exists. By Claim~\ref{cl:H-well-formed}, $(H,u,y_1,X')$ is a well-formed quadruple, thus by Property (d) edge $(u,y_1)$ coincides with $\tau_{uy_1}(H)$. Let $y'$ be the unique neighbor of $y_1$ in $\beta_{uy_1}(H)$.

\begin{figure}[htb]
\begin{center}
\begin{tabular}{c c}
\mbox{\includegraphics[width=.22\textwidth]{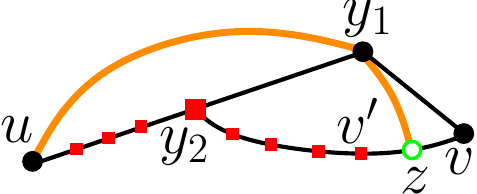}}\hspace{3mm} &
\mbox{\includegraphics[width=.55\textwidth]{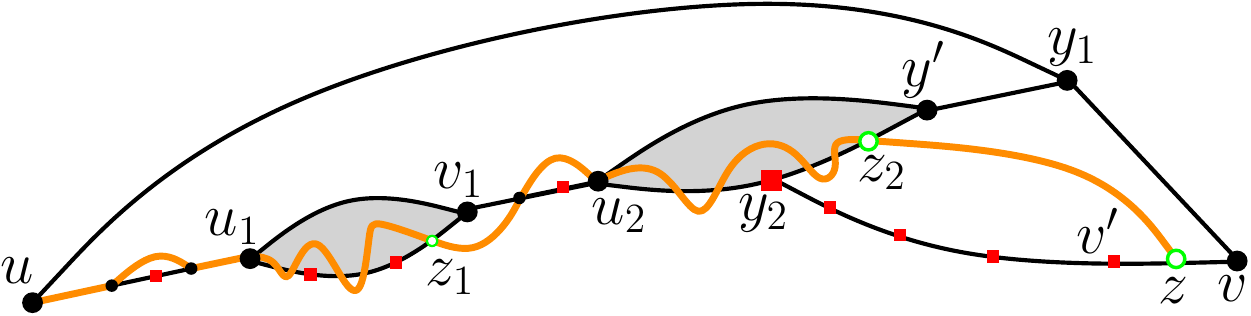}}\\
(a) \hspace{3mm} & (b) 
\end{tabular}
\caption{Case 3 of the proof of Lemma~\ref{le:cubic-2connected}. (a) Every vertex of $H$ different from $u$ and $y_1$ is in $X'$. (b) $H$ contains a vertex not in $X'\cup \{u,y_1\}$.}
\label{fig:cubic-case-3}
\end{center}
\end{figure}

If every vertex of $H$ different from $u$ and $y_1$ is in $X'$ (as in Fig.~\ref{fig:cubic-case-3}(a)), then $\lambda$ consists of edge $(u,y_1)$ together with a curve from $y_1$ to a point $z$ along edge $(v,v')$; the latter curve lies in the internal face of $G$ incident to edge $(v,y_1)$. Charge $y_2$ and $v$ to $y_1$ and note that $\lambda$ satisfies Properties (1)--(6) required by Lemma~\ref{le:cubic-2connected}. 

If $H$ contains a vertex not in $X'\cup \{u,y_1\}$  (as in Fig.~\ref{fig:cubic-case-3}(b)), then $H$ contains at least $4$ vertices; also, $u$ and $y'$ belong to $\beta_{uy_1}(H)$. Thus, Lemma~\ref{le:separation-bottom} applies to separation pair $\{u,y'\}$ of $H$ and a curve $\lambda_1$ can be constructed that connects $u$ with a point $z_k\neq y_1$ on $\beta_{y_2y_1}(H)$ as in Case~1. Curve $\lambda$ consists of $\lambda_1$ and of a curve $\lambda_2$ lying in the internal face of $G$ incident to edge $(v,y_1)$ and connecting $z_k$ with a point $z$ along edge $(v,v')$. Curve $\lambda$ satisfies Properties (1)--(5) of Lemma~\ref{le:cubic-2connected}. We determine inductively the charge of the vertices in $N_{\lambda}-\{y_2\}$ in each biconnected component $G_i$ of the graph obtained from $H$ by removing edge $(u,y_1)$ to the vertices in $L_{\lambda}\cap V(G_i)$. We charge $v$ to $u$, and $y_1$ and $y_2$ to the first vertex $u'\neq u$ not in $X'$ encountered when traversing $\beta_{uy_1}(H)$ from $u$ to $y_1$. That $u'$ exists, that $u'\neq y_1$, and that $u'\in L_{\lambda}$ can be proved as in Case~2 by the assumption that $H$ contains a vertex not in $X'\cup\{u,y_1\}$; then either zero or one vertex has been charged to $u'$ so far, depending on whether $\delta_G(u')=2$ or $\delta_G(u')=3$, respectively, and Property (6) is satisfied by the constructed charging scheme.

If Case~3 does not apply, consider the graph $H'=H-\{y_1\}$. Since we are not in Case~3, $(u,y_1)$ is not an edge of $H$; also, by Claim~\ref{cl:H-well-formed} and Property (e) of $(H,u,y_1,X')$, $\{u,y_1\}$ is not a separation pair of $H$. It follows that $u$ is not a cut-vertex of $H'$. Let $K$ be the biconnected component of $H'$ containing $u$. Analogously as in Claim~\ref{cl:structure-h}, it can be proved that $H$ has two $K\cup \{y_1\}$-bridges $D_1$ and $D_2$, that $D_1$ is a trivial $K\cup \{y_1\}$-bridge $(w_1,y_1)$ which is an edge of $\tau_{uy_1}(H)$ and that $D_2$ has two attachments $w_2$ and $y_1$. We further distinguish the cases in which $y_2$ does or does not belong to $K$. 


{\bf Case 4: $y_2 \in K$}. Refer to Fig.~\ref{fig:cubic-case-4}. Vertices $y_2$ and $w_2$ are distinct. Indeed, if they were the same vertex, then $\delta_G(y_2)\geq 4$, as $y_2$ would have at least two neighbors in $K$, since $K$ is biconnected, and one neighbor in each of $B_2$ and $D_2$; however, this would contradict the fact that $G$ is a subcubic graph. Since $w_1,y_1 \in \tau_{uv}(G)$ and $y_2\in \beta_{uv}(G)$, vertices $u,y_2,w_2,w_1$ come in this order along $\beta_{uw_1}(K)$; it follows that $D_2$ is a trivial $K\cup \{y_1\}$-bridge, as otherwise $\{y_1,w_2\}$ would be a separation pair of $G$ one of whose vertices is internal to $G$, while $(G,u,v,X)$ is a well-formed quadruple.

\begin{figure}[htb]
\begin{center}
\mbox{\includegraphics[width=.4\textwidth]{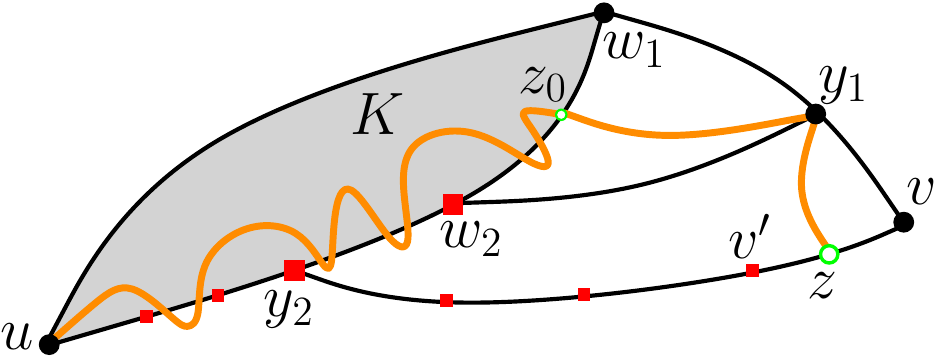}}
\caption{Case 4 of the proof of Lemma~\ref{le:cubic-2connected}.}
\label{fig:cubic-case-4}
\end{center}
\end{figure}

Let $X''=(X\cap V(K))\cup \{y_2,w_2\}$. Analogously as in Claim~\ref{cl:H-well-formed}, it can be proved that $(K,u,w_1,X'')$ is a well-formed quadruple.  By induction, a curve $\lambda_1$ can be constructed satisfying the properties of Lemma~\ref{le:cubic-2connected} for $(K,u,w_1,X'')$. In particular, $\lambda_1$ starts at $u$ and ends at a point $z_0\neq w_1$ in $\beta_{w_2w_1}(K)$. Curve $\lambda$ consists of $\lambda_1$, of a curve $\lambda_2$ from $z_0$ to $y_1$ lying in the internal face of $G$ incident to edge $(w_1,y_1)$, and of a curve $\lambda_3$ from $y_1$ to a point $z$ along edge $(v,v')$ lying in the internal face of $G$ incident to edge $(y_1,v)$. Curve $\lambda$ satisfies Properties (1)--(5) of Lemma~\ref{le:cubic-2connected}. Property (6) is satisfied by charging the vertices in $(N_{\lambda}\cap V(K))-\{y_2,w_2\}$ to the vertices in $L_{\lambda}\cap V(K)$ as computed by induction, and by charging $v$, $y_2$, and $w_2$ to $y_1$.


{\bf Case 5: $y_2 \notin K$}. Let $X''=\{w_2\}\cup (X\cap V(K))$. It can be proved as in Claim~\ref{cl:H-well-formed} that $(K,u,w_1,X'')$ is a well-formed quadruple.  By induction, a curve $\lambda_1$ can be constructed satisfying the properties of Lemma~\ref{le:cubic-2connected} for $(K,u,w_1,X'')$. In particular, $\lambda_1$ starts at $u$ and ends at a point $z_0\neq w_1$ in $\beta_{w_2w_1}(K)$. Curve $\lambda_1$ is the first part of $\lambda$. 

\begin{figure}[htb]
\begin{center}
\begin{tabular}{c c}
\mbox{\includegraphics[width=.4\textwidth]{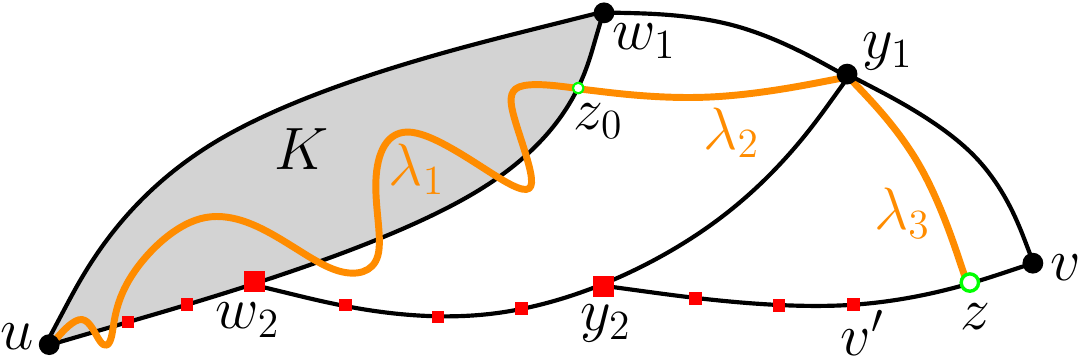}}\hspace{1mm} &
\mbox{\includegraphics[width=.5\textwidth]{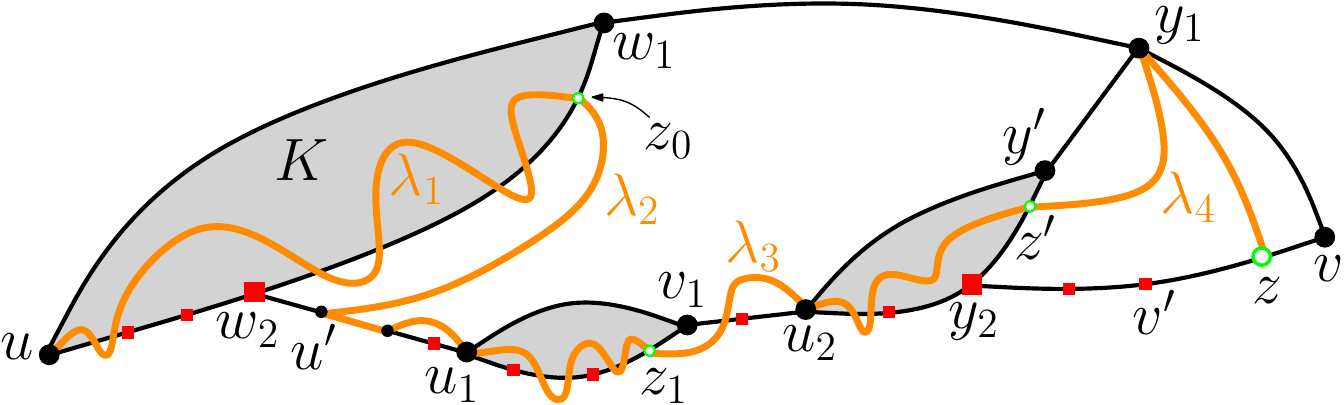}}\\
(a) \hspace{1mm} & (b) 
\end{tabular}
\caption{Case 5 of the proof of Lemma~\ref{le:cubic-2connected}. (a) Every vertex of $D_2$ different from $w_2$ and $y_1$ is in $X'$. (b) $D_2$ contains a vertex not in $X'\cup \{y_1,w_2\}$.}
\label{fig:cubic-case-5}
\end{center}
\end{figure}

If every vertex of $D_2$ different from $w_2$ and $y_1$ is in $X'$, as in Fig.~\ref{fig:cubic-case-5}(a), then $\lambda$ continues with a curve $\lambda_2$ that connects $z_0$ with $y_1$ ($\lambda_2$ lies in the internal face of $G$ incident to edge $(w_1,y_1)$) and with a curve $\lambda_3$ that connects $y_1$ with a point $z$ along edge $(v',v)$ ($\lambda_3$ lies in the internal face of $G$ incident to edge $(y_1,v)$). 

If $D_2$ contains a vertex not in $X'\cup\{y_1,w_2\}$, as in Fig.~\ref{fig:cubic-case-5}(b), then, similarly to Case~2, $\lambda$ continues with a curve $\lambda_2$ that connects $z_0$ with the first vertex $u'\neq w_2$ not in $X'$ encountered while traversing $\beta_{w_2y_1}(H)$ from $w_2$ to $y_1$; curve $\lambda_2$ lies in the internal face of $G$ incident to edge $(w_1,y_1)$. That $u'$ exists, that $u'\neq y_1$, and that $u'\in L_{\lambda}$ can be proved as in Case~2 by the assumption that $D_2$ contains a vertex not in $X'\cup\{y_1,w_2\}$. Then $\lambda$ continues with a curve $\lambda_3$ that connects $u'$ with a point $z'$ in $\beta_{y_2y_1}(H)$; as in Case~2, $\{w_2,y_1\}$ is a separation pair of $H$, hence Lemma~\ref{le:separation-bottom} applies and curve $\lambda_3$ is constructed as in Case~1. Finally, if $z'$ is not a point internal to edge $(y',y_1)$, curve $\lambda$ contains a curve $\lambda_4$ that connects $z'$ with $y_1$, and then $y_1$ with a point $z$ on edge $(v,v')$; curve $\lambda_4$ lies in the internal face of $G$ incident to edge $(y_1,v)$. Otherwise, we redraw the last part of $\lambda_3$ so that it terminates at $y_1$ rather than at $z'$; we then let $\lambda_4$ connect $y_1$ with a point $z$ on edge $(v,v')$ in the internal face of $G$ incident to edge $(y_1,v)$.

Curve $\lambda$ satisfies Properties (1)--(5) of Lemma~\ref{le:cubic-2connected}. We determine inductively the charge of the vertices in $(N_{\lambda}\cap V(K))-\{w_2\}$ to the vertices in $L_{\lambda}\cap V(K)$, as well as the charge of the vertices in $N_{\lambda}-\{y_2\}$ in each biconnected component $G_i$ of $D_2$, if any, to the vertices in $L_{\lambda}\cap V(G_i)$. Charge $v$, $y_2$, and $w_2$ to $y_1$. Property (6) is satisfied by the constructed charging scheme. This concludes the proof of Lemma~\ref{le:cubic-2connected}. 



We now apply Lemma~\ref{le:cubic-2connected} to prove Theorem~\ref{th:cubic}. Let $G$ be any triconnected cubic plane graph. Let $G'$ be the plane graph obtained from $G$ by removing any edge $(u,v)$ incident to the outer face of $G$, where $u$ is encountered right before $v$ when walking in clockwise direction along the outer face of $G$. Let $X'=\emptyset$. We have the following.

\begin{lemma} \label{le:trico-to-lemma}
$(G',u,v,X')$ is a well-formed quadruple.
\end{lemma}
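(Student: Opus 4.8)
The plan is to verify the six defining properties (a)--(f) of a well-formed quadruple for $(G',u,v,X')$, leaning on the $3$-connectivity and cubicity of $G$ throughout; properties (a)--(d) and (f) are quick, and essentially all the work is in property (e). For (a), $G'$ is subcubic since deleting an edge from the cubic graph $G$ cannot raise any degree; $G'$ is biconnected because, for every vertex $w\notin\{u,v\}$, the graph $G-\{w\}$ is $2$-connected on at least three vertices (as $G$ is triconnected), hence bridgeless, so $(G-\{w\})-(u,v)=G'-\{w\}$ stays connected, while $G'-\{u\}=G-\{u\}$ and $G'-\{v\}=G-\{v\}$ are connected as well; thus $G'$ has no cut-vertex and, not being a single edge, is biconnected. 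For (b), since $(u,v)$ is incident to the outer face of $G$, both $u$ and $v$ lie on the outer face of $G$, and deleting $(u,v)$ only enlarges the outer face, so $u$ and $v$ remain distinct external vertices of $G'$. Property (c) holds because $\delta_{G'}(u)=\delta_{G'}(v)=3-1=2$. Property (d) is vacuous: as $G$ is simple, $G'$ contains no edge $(u,v)$. Property (f) is immediate since $X'=\emptyset$.

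The core of the argument is a characterization of the separation pairs of $G'$, which I would state first. I claim that every separation pair $\{a,b\}$ of $G'$ satisfies $u,v\notin\{a,b\}$ and separates $u$ from $v$, with $G'-\{a,b\}$ having \emph{exactly} two connected components, one containing $u$ and one containing $v$. To see this, note that if, say, $u\in\{a,b\}$, then $G'-\{a,b\}=G-\{a,b\}$ (removing $u$ already deletes $(u,v)$), which is connected because $G$ is triconnected; so $\{a,b\}$ would not be a separation pair, and hence $u,v\notin\{a,b\}$. Then $G-\{a,b\}$ is connected, while $G'-\{a,b\}=(G-\{a,b\})-(u,v)$ is disconnected, which forces $(u,v)$ to be a bridge of $G-\{a,b\}$; removing a bridge from a connected graph yields exactly two components, one containing $u$ and the other containing $v$.

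From this characterization, property (e) follows cleanly. For the third requirement of (e), the two non-trivial $\{a,b\}$-components of $G'$ are precisely the subgraphs spanned by $a,b$ together with the two components above; they contain $u$ and $v$ respectively, both of which are external and different from $a$ and $b$. For the first two requirements I would use planarity: since $G'$ is biconnected, its outer face is bounded by a simple cycle $C_{\mathrm{out}}$ through $u$ and $v$, whose two $u$--$v$ subpaths are exactly $\tau_{uv}(G')$ and $\beta_{uv}(G')$. If one of these two arcs contained no internal vertex of $\{a,b\}$, it would be a $u$--$v$ path in $G'-\{a,b\}$, contradicting that $\{a,b\}$ separates $u$ from $v$; hence each arc contains an internal vertex of $\{a,b\}$, and since the two arcs share only $u$ and $v$, the vertices $a$ and $b$ must split with one internal to each arc. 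Consequently both $a$ and $b$ lie on $C_{\mathrm{out}}$, i.e., are external vertices of $G'$ (first requirement), and exactly one of them is internal to $\beta_{uv}(G')$ (second requirement). I expect property (e) to be the main obstacle, and within it the planarity step; the cleanest route, which I would adopt, is this arc-splitting argument, since it delivers both external membership and the $\beta_{uv}(G')$ condition at once and sidesteps having to pin down which of the two outer arcs is $\tau_{uv}(G')$ and which is $\beta_{uv}(G')$.
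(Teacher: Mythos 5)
Your proposal is correct and follows essentially the same route as the paper: verify properties (a)--(f) directly, with the substance in (e), concluding that $u$, $a$, $v$, $b$ must interleave along the outer cycle so that $a$ and $b$ are external with exactly one internal to $\beta_{uv}(G')$. The only cosmetic difference is that you derive the ``exactly two components, one with $u$ and one with $v$'' fact by observing that $(u,v)$ is a bridge of $G-\{a,b\}$, whereas the paper counts non-trivial $\{a,b\}$-components and invokes triconnectivity of $G$; both yield the same intermediate facts.
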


\begin{proof}
Concerning Property (a) $G'$ is a subcubic plane graph since $G$ is. Also, $G'$ is biconnected, since $G$ is triconnected. Concerning Property (b), vertices $u$ and $v$ are external vertices of $G'$ since they are external vertices of $G$. Concerning Property (c), $\delta_{G'}(u)=\delta_{G'}(v)=2$ since $\delta_{G}(u)=\delta_{G}(v)=3$. Properties (d) and (f) are trivially satisfied since edge $(u,v)$ does not belong to $G'$ and since $X'=\emptyset$, respectively. 

We now prove Property (e). Consider any separation pair $\{a,b\}$ of $G'$. If $G'$ had at least $3$ non-trivial $\{a,b\}$-components, then $G$ would have at least $2$ non-trivial $\{a,b\}$-components, whereas it is triconnected. Hence, $G'$ has $2$ non-trivial $\{a,b\}$-components $H$ and $H'$. Vertices $u$ and $v$ are not in the same non-trivial $\{a,b\}$-component of $G'$, as otherwise $G$ would not be triconnected. This implies that $\{a,b\}\cap \{u,v\}=\emptyset$. Both $H$ and $H'$ contain external vertices of $G'$ (in fact $u$ and $v$). It follows that $a$ and $b$ are both external vertices of $G'$. Hence, vertices $u$, $a$, $v$, and $b$ come in this order along the boundary of the outer face of $G'$, thus one of $a$ and $b$ is internal to $\tau_{uv}(G')$, while the other one is internal to $\beta_{uv}(G')$. This concludes the proof of the lemma. 
\end{proof}


It follows by Lemma~\ref{le:trico-to-lemma} that a proper good curve $\lambda$ can be constructed satisfying the properties of Lemma~\ref{le:cubic-2connected}. Insert the edge $(u,v)$ in the outer face of $G'$, restoring the plane embedding of $G$. By Properties (1)--(5) of $\lambda$ this insertion can be accomplished so that $(u,v)$ does not intersect $\lambda$ other than at $u$, hence $\lambda$ remains proper and good. In particular, the end-points $u$ and $z$ of $\lambda$ both belong to $\beta_{uv}(G')$, while the insertion of $(u,v)$ only prevents the internal vertices of $\tau_{uv}(G')$ from being incident to $R_{G,\lambda}$. By Property (6) of $\lambda$ with $X'=\emptyset$, each vertex in $N_{\lambda}$ is charged to a vertex in $L_{\lambda}$, and each vertex in $L_{\lambda}$ is charged with at most three vertices in $N_{\lambda}$. Thus, $\lambda$ is a proper good curve passing through $\lceil \frac{n}{4}\rceil$ vertices of $G$. This concludes the proof of Theorem~\ref{th:cubic}.



\section{Planar Graphs with Large Tree-width} \label{le:treewidth}

In this section we prove the following theorem.

\begin{theorem} \label{th:large-treewidth}
Let $G$ be a planar graph and $k$ be its tree-width. There exists a \cfsl\ of $G$ with $\Omega(k^2)$ collinear vertices.
\end{theorem}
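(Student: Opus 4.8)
The plan is to again pass through the topological reformulation of Theorem~\ref{th:topology}: it suffices to fix a plane embedding of $G$ and to exhibit a proper good curve that passes through $\Omega(k^2)$ of its vertices. The quadratic bound will come from the fact that a planar graph of large treewidth contains a large \emph{grid-like} structure, and that such a structure supports a good curve visiting a constant fraction of its vertices. Concretely, I would first extract an $r\times r$ grid with $r=\Omega(k)$, and then route a single ``boustrophedon'' curve through every other row of this grid.

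First I would invoke the planar grid-minor theorem of Robertson, Seymour and Thomas: every planar graph of treewidth $k$ contains the $r\times r$ grid as a minor for some $r=\Omega(k)$. Since I need \emph{actual} vertices of $G$ on the curve, I would upgrade this to a subgraph statement: because a wall has maximum degree three, containing a wall of height $h$ as a minor is equivalent to containing a subdivision of it as a subgraph, and for planar graphs a grid minor of side $r$ yields a wall of height $h=\Omega(r)=\Omega(k)$. Thus $G$ contains, as a subgraph, a subdivision $W$ of a wall of height $h=\Omega(k)$. In particular $W$ consists of $\Theta(h)$ pairwise vertex-disjoint \emph{horizontal paths} $R_1,\dots,R_{h+1}$ (each an honest path of $G$ with $\Omega(h)$ vertices), joined by vertical rungs, and $W$ inherits a plane embedding from $G$ in which its bounded faces are the ``bricks''.

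Next I would build the good curve $\lambda$ for $G$ along the \emph{odd} horizontal paths $R_1,R_3,R_5,\dots$. Inside each chosen path $R_i$ the curve simply follows $R_i$: it contains every edge of $R_i$ and passes through all of its $\Omega(h)$ vertices. Since $\lambda$ never leaves $R_i$ here, it meets every other edge of $G$ incident to a vertex of $R_i$ only at that vertex, which is a single common point and hence admissible. To pass from $R_i$ to $R_{i+2}$ I would route a short connecting arc that leaves the end of $R_i$, crosses the two bricks separating the rows (crossing exactly one edge of the intermediate path $R_{i+1}$, and no vertex of it), and reaches the corresponding end of $R_{i+2}$, alternating the left and right sides so that the resulting curve is simple. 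Each such arc can be drawn as a simple curve crossing every edge of $G$ at most once: whenever it would cross an edge twice I would remove the resulting bigon, exactly in the spirit of the convexity argument of Lemma~\ref{le:outerplanar-draw}. The two free ends of $\lambda$ are finally extended into the outer face of $G$ (again crossing each edge at most once), which makes both end-points incident to $R_{G,\lambda}$ and hence $\lambda$ proper. Counting, $\lambda$ passes through all vertices of the $\approx h/2$ odd paths, i.e.\ through $\Omega(h)\cdot\Omega(h)=\Omega(k^2)$ vertices, and Theorem~\ref{th:topology} converts this into a \cfsl\ with $\Omega(k^2)$ \cv.

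The step I expect to be the main obstacle is verifying \emph{goodness} in the presence of the arbitrary part of $G$ that the wall does not control: inside every brick, and in the outer face of $W$, there may be an entire $W$-bridge of $G$, and $\lambda$ must not cross any of its edges more than once. The reason this is manageable is planarity together with the flatness of the wall: each $W$-bridge lives inside a single face of $W$, the row portions of $\lambda$ stay on the boundaries between bricks and therefore avoid brick interiors altogether, and each connecting arc enters only the two bricks it must traverse, where a single simple crossing arc (after bigon removal) suffices. The quadratic bound itself is not delicate---it is forced by the $\Omega(k)\times\Omega(k)$ grid and the fact that we can safely harvest a positive fraction (every other row) of its vertices; the care is entirely in keeping the curve good and simple while threading it through the uncontrolled bridges.
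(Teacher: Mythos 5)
Your overall strategy (reduce to Theorem~\ref{th:topology}, extract an $\Omega(k)\times\Omega(k)$ grid-like structure via Robertson--Seymour--Thomas, and thread one curve through it to collect $\Omega(k^2)$ vertices) matches the paper's, but there is a genuine gap in the harvesting step. You route $\lambda$ along \emph{all} vertices of every other horizontal path $R_1,R_3,R_5,\dots$ of the wall and assert that any edge of $G$ not on $R_i$ ``meets $\lambda$ only at that vertex.'' This is false for a \emph{chord} of $R_i$, i.e.\ an edge of $G$ joining two non-consecutive vertices of $R_i$: such an edge can perfectly well exist, embedded inside a brick whose boundary contains a stretch of $R_i$ (or in the outer face of $W$), and $\lambda$ then has two common points with it without containing it --- violating goodness. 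The same failure occurs for an edge of $G$ joining a vertex of $R_i$ to a vertex of $R_{i+2}$ (possible near the left/right ends of the wall, where both rows meet the outer face of $W$). The $W$-bridges you worry about are not only obstacles to be crossed at most once; their \emph{trivial} bridges with both attachments on harvested vertices are the real problem, and no bigon removal or homotopy of the connecting arcs can repair a curve that is forced to pass through both endpoints of a non-contained edge. Making each $R_i$ induced is not an available fix either: replacing $R_i$ by an induced or shortest path between its ends can collapse its length below $\Omega(h)$ or leave its territory.

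The paper's construction is designed precisely around this obstruction, and it is instructive to compare. It works with the grid-\emph{minor} model directly (branch sets $G_{i,j}$, one reference edge per grid edge, and ``cells'') and harvests only \emph{one} vertex from each $G_{i,j}$ with $i,j$ even and suitably spaced, so that any two harvested branch sets are at grid-distance at least $2$; since planarity forces every edge of $G$ to join branch sets within distance $1$ in each coordinate, no edge of $G$ has both endpoints in distinct harvested regions. Within a single branch set, the vertex is collected by following a \emph{shortest} (hence induced) path, so the only edges met twice are edges entirely contained in $\lambda$; the inter-cell arcs are routed along shortest paths in the restricted dual graph, which bounds their crossings by one per edge. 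This thins the harvest from ``all of every other row'' to ``one vertex per cell in a sparse sublattice,'' which still yields $\Omega(k^2)$ vertices but is what makes goodness provable. To salvage your wall-based version you would have to impose the same kind of thinning and induced-path discipline, at which point the two proofs essentially coincide.
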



Let $G$ be a planar graph with tree-width $k$. We assume that $G$ is connected; indeed, if it is not, edges can be added to it in order to make it connected. This augmentation does not decrease the tree-width of $G$; further, the added edges can be removed once a \cfsl\ of the augmented graph with $\Omega(k^2)$ collinear vertices has been constructed. In order to prove that $G$ admits a \cfsl\ with $\Omega(k^2)$ collinear vertices we exploit Theorem~\ref{th:topology}, as well as a result of Robertson, Seymour and Thomas~\cite{rst-qepg-94}, which asserts that $G$ contains a $g\times g$ grid $H$ as a minor, where $g$ is at least $(k+4)/6$.

Denote by $v_{i,j}$ the vertices of $H$, with $1\leq i,j\leq g$, where $v_{i,j}$ and $v_{i',j'}$ are adjacent in $H$ if and only if $|i-i'|+|j-j'|=1$. Denote by $G_{i,j}$ the connected subgraph of $G$ represented by $v_{i,j}$ in $H$. By the planarity of $G$, every edge of $G$ that is incident to a vertex in $G_{i,j}$, for some $2\leq i,j \leq g-1$, has its other end-vertex in a graph $G_{i',j'}$ such that $|i-i'|\leq 1$ and $|j-j'|\leq 1$. (The previous statement might not be true for an edge that is incident to a vertex in $G_{i,j}$ with $i=1$, $i=g$, $j=1$, or $j=g$.) 

Refer to Fig.~\ref{fig:cell}(a). For every edge $(v_{i,j},v_{i+1,j})$ of $H$, arbitrarily choose an edge $e_{i,j}$ connecting a vertex in $G_{i,j}$ and a vertex in $G_{i+1,j}$ as the {\em reference edge} for the edge $(v_{i,j},v_{i+1,j})$ of $H$. Such an edge exists since $H$ is a minor of $G$. Reference edges $e'_{i,j}$ for the edges $(v_{i,j},v_{i,j+1})$ of $H$ are defined analogously. 

\begin{figure}[htb]
\begin{center}
\begin{tabular}{c c}
\mbox{\includegraphics[width=.45\textwidth]{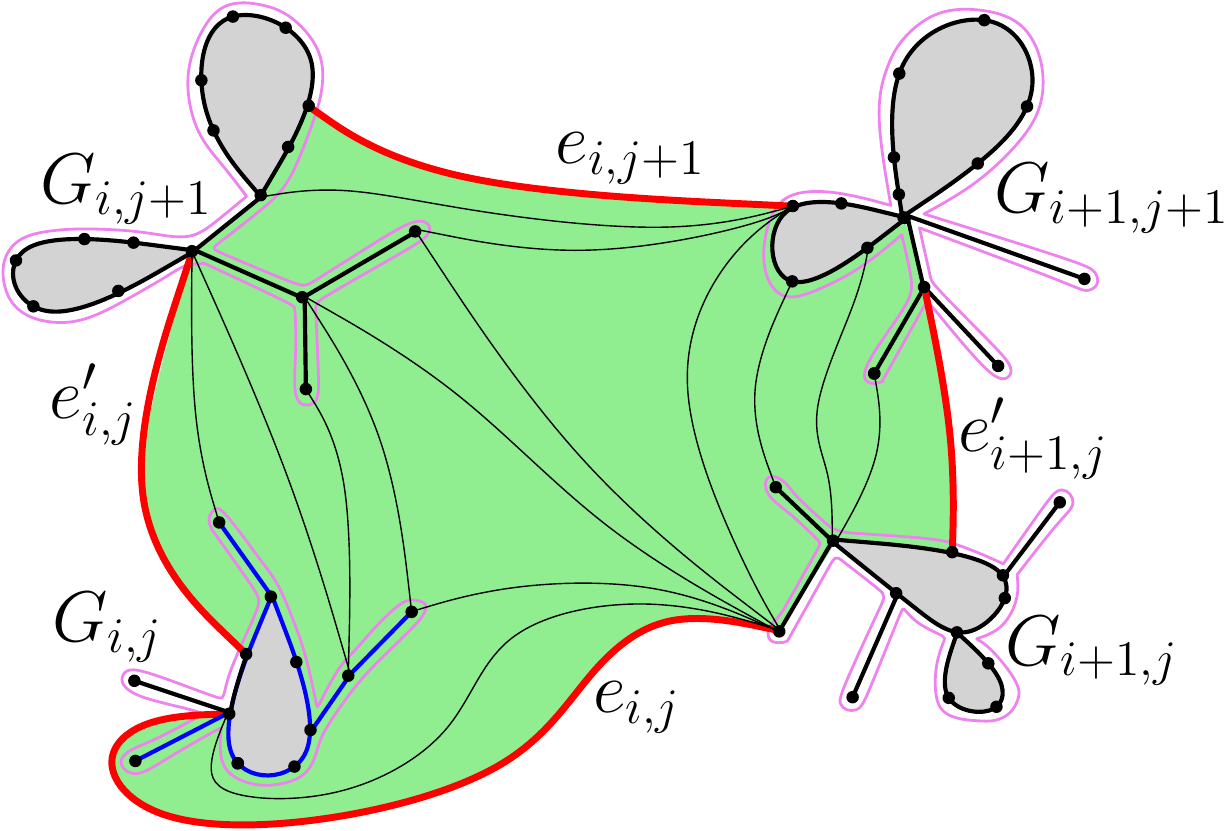}} \hspace{3mm} &
\mbox{\includegraphics[width=.3\textwidth]{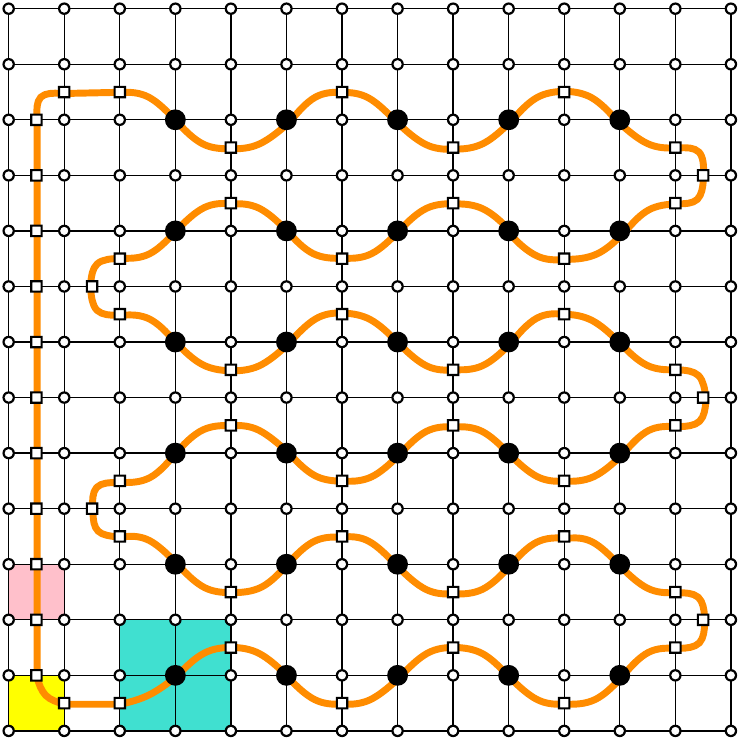}}\\
(a) \hspace{3mm} & (b)
\end{tabular}
\caption{(a) Cells, boundaries, and references edges. Cell $C_{i,j}$ is green. Graphs $G_{i,j}$, $G_{i+1,j}$, $G_{i,j+1}$, and $G_{i+1,j+1}$ are surrounded by violet curves; their interior is gray. The references edges are red and thick. The right-top boundary of $G_{i,j}$ is blue. (b) Construction of $\lambda$ (represented as a thick orange line). Large disks represent graphs $G_{i,j}$ such that $\lambda$ passes through vertices of $G_{i,j}$. Small circles represent graphs $G_{i,j}$ such that $\lambda$ does not pass through any vertex of $G_{i,j}$. White squares represent intersections between $\lambda$ and reference edges.}
\label{fig:cell}
\end{center}
\end{figure}

For every pair of indices $1\leq i,j\leq g-1$, we call {\em right-top} boundary of $G_{i,j}$ the walk that starts at the end-vertex of $e'_{i,j}$ in $G_{i,j}$, traverses the boundary of the outer face of $G_{i,j}$ in clockwise direction and ends at the end-vertex of $e_{i,j}$ in $G_{i,j}$. The {\em right-bottom} boundary of $G_{i,j}$ (for every $1\leq i\leq g-1$ and $2\leq j\leq g$), the {\em left-top} boundary of $G_{i,j}$ (for every $2\leq i\leq g$ and $1\leq j\leq g-1$), and the {\em left-bottom} boundary of $G_{i,j}$ (for every $2\leq i,j\leq g$) are defined analogously.


For each $1\leq i,j\leq g-1$, we define the {\em cell} $C_{i,j}$ as the bounded closed region of the plane that is delimited by (in clockwise order along the boundary of the region): the right-top boundary of $G_{i,j}$, edge $e'_{i,j}$, the right-bottom boundary of $G_{i,j+1}$, edge $e_{i,j+1}$, the left-bottom boundary of $G_{i+1,j+1}$, edge $e'_{i+1,j}$, the left-top boundary of $G_{i+1,j}$, and edge $e_{i,j}$.


We construct a proper good curve passing through $\Omega(g^2)\in \Omega(k^2)$ vertices of $G$. For simplicity of description, we construct a {\em closed} curve $\lambda$ passing through $\Omega(g^2)$ vertices of $G$ and such that, for each edge $e$ of $G$, either $\lambda$  contains $e$ or $\lambda$ has at most one point in common with $e$. Then $\lambda$ can be turned into a proper good curve by cutting off a piece of it in the interior of an internal face $f$ of $G$ and by changing the outer face of $G$ to $f$. 

Curve $\lambda$ passes through (at least) one vertex of each graph $G_{i,j}$ with $i$ and $j$ even, and with $4\leq i \leq g'$ and $2\leq j \leq g'$, where $g'$ is the largest integer divisible by $4$ and smaller than or equal to $g-2$; note that there are $\Omega(g^2)\in \Omega(k^2)$ such graphs $G_{i,j}$. Then Theorem~\ref{th:large-treewidth} follows from Theorem~\ref{th:topology}. Curve $\lambda$ is composed of several good curves, each one connecting two points in the interior of two reference edges for edges of $H$. Refer to Fig.~\ref{fig:cell}(b). In particular, each open curve is of one of the following types:

\begin{itemize}
	\item {\em Type A: Cell traversal curve.} A curve $\gamma$ connecting two points $p(\gamma)$ and $q(\gamma)$ in the interior of reference edges $e_{i,j}$ and $e_{i,j+1}$, or of reference edges $e'_{i,j}$ and $e'_{i+1,j}$. See, e.g., the part of $\lambda$ in the pink region in Fig.~\ref{fig:cell}(b).  
	
	\item {\em Type B: Cell turn curve.} A curve $\gamma$ connecting two points $p(\gamma)$ and $q(\gamma)$ in the interior of reference edges $e_{i,j}$ and $e'_{i,j}$, or of reference edges $e'_{i,j}$ and $e_{i,j+1}$, or of reference edges $e_{i,j+1}$ and $e'_{i+1,j}$, or of reference edges $e'_{i+1,j}$ and $e_{i,j}$. See, e.g., the part of $\lambda$ in the yellow region in Fig.~\ref{fig:cell}(b).  

	\item {\em Type C: Vertex getter curve.}  A curve $\gamma$ connecting two points $p(\gamma)$ and $q(\gamma)$ in the interior of reference edges $e'_{i,j-1}$ and $e'_{i+2,j}$ or of reference edges $e'_{i,j}$ and $e'_{i+2,j-1}$, and passing through a vertex of $G_{i+1,j}$. See, e.g., the part of $\lambda$ in the turquoise region in Fig.~\ref{fig:cell}(b).  
\end{itemize}
	
To each open curve $\gamma$ composing $\lambda$ we associate a distinct region $R(\gamma)$ of the plane, so that $\gamma$ lies in $R(\gamma)$. For curves $\gamma$ of Type A or B, the region $R(\gamma)$ is the unique cell delimited by the reference edges containing $p(\gamma)$ and $q(\gamma)$. For a curve $\gamma$ of Type C, the region $R(\gamma)$ consists of the interior of $G_{i+1,j}$ together with the four cells incident to the boundary of $G_{i+1,j}$. 

Any two regions associated to distinct open curves do not intersect, except along their boundaries. Further, for every region $R(\gamma)$ and for every edge $e$ of $G$, either $e$ is in $R(\gamma)$ or it has no intersection with the interior of $R(\gamma)$. Thus, in order to prove that $\lambda$ has at most one point in common with every edge of $G$, it suffices to show how to draw $\gamma$ so that it lies in the interior of $R(\gamma)$, except at points $p(\gamma)$ and $q(\gamma)$, and so that it has at most one common point with each edge in the interior of $R(\gamma)$. In order to describe how to draw $\gamma$, we distinguish the cases in which $\gamma$ is of Type A, B, or C. 

If $\gamma$ is of Type A or B (refer to Fig.~\ref{fig:curve-in-cell}(a)), draw the dual graph $D$ of $G$ so that each edge of $D$ only intersects its dual edge; restrict $D$ to the vertices and edges in the interior of $R(\gamma)$, obtaining a graph $D^*$; find a simple path $P$ in $D^*$ connecting the vertices $f_p$ and $f_q$ of $D^*$ incident to the reference edges to which $p(\gamma)$ and $q(\gamma)$ belong (note that $P$ exists since the region of the plane defined by each cell is connected and hence so is $D^*$); draw $\gamma$ as $P$ plus two curves connecting $f_p$ and $f_q$ with $p(\gamma)$ and $q(\gamma)$, respectively. Also, $\gamma$ intersects each edge of $G$ at most once, since $P$ does. Finally, $\gamma$ lies in the interior of $R(\gamma)$, except at points $p(\gamma)$ and $q(\gamma)$. Thus, $\gamma$ satisfies the required properties.

\begin{figure}[htb]
\begin{center}
\begin{tabular}{c c}
\mbox{\includegraphics[width=.35\textwidth]{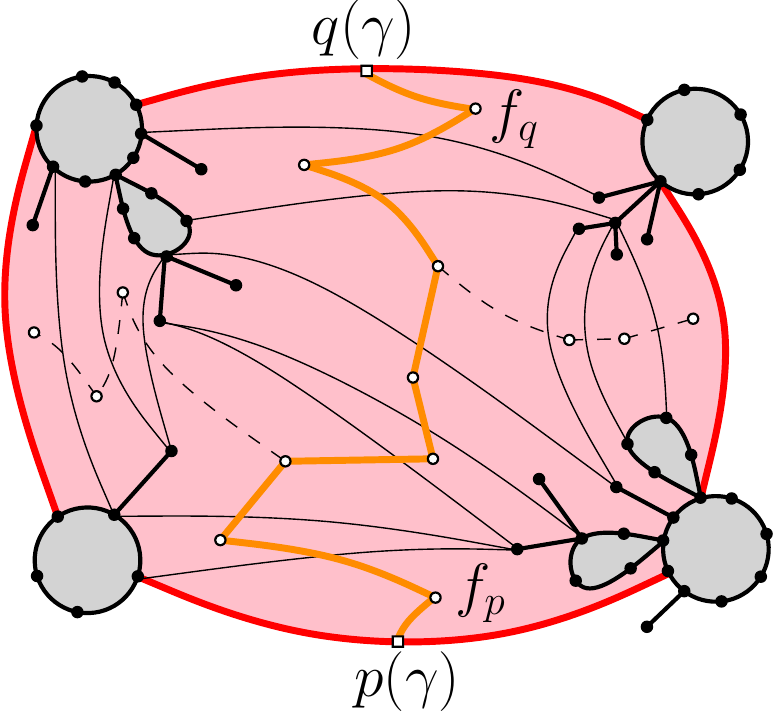}} \hspace{5mm} &
\mbox{\includegraphics[width=.5\textwidth]{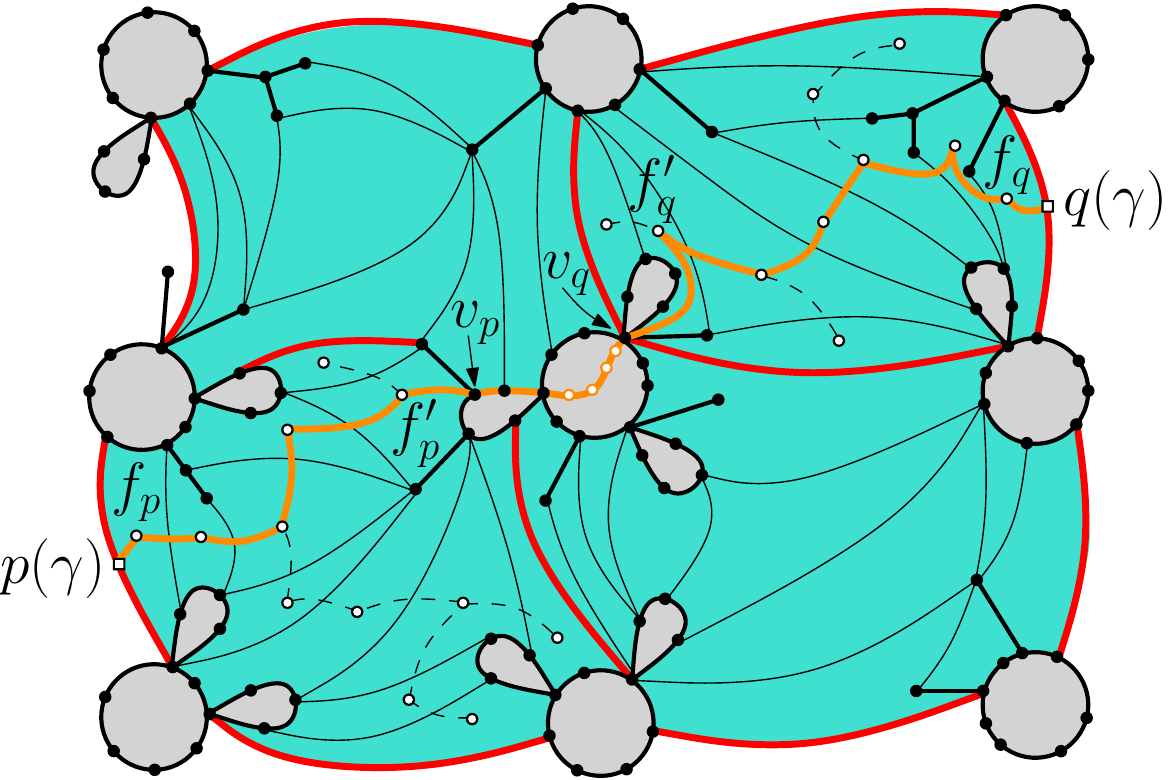}}\\
(a) \hspace{5mm} & (b)
\end{tabular}
\caption{(a) Drawing a curve $\gamma$ of Type A. Region $R(\gamma)$ is pink. Graph $D^*$ has vertices represented by white circles; the edges of $D^*$ in $P$ are thick orange lines, while the edges of $D^*$ not in $P$ are dashed black lines. (b) Drawing a curve $\gamma$ of Type C. Region $R(\gamma)$ is turquoise. Internal vertices of path $P$ in $G_{i+1,j}$ are black disks if they belong to the boundary of $G_{i+1,j}$, or orange circles if they are internal vertices of $G_{i+1,j}$.}
\label{fig:curve-in-cell}
\end{center}
\end{figure}

If $\gamma$ is of Type C (refer to Fig.~\ref{fig:curve-in-cell}(b)), assume that $\gamma$ connects two points $p(\gamma)$ and $q(\gamma)$ respectively belonging to the interior of $e'_{i,j-1}$ and $e'_{i+2,j}$; the case in which $p(\gamma)$ and $q(\gamma)$ respectively belong to the interior of $e'_{i,j}$ and $e'_{i+2,j-1}$ is analogous. Curve $\gamma$ is composed of three curves, namely: (1) a curve $\gamma_1$ that connects $p(\gamma)$ and a vertex $v_p$ on the left-bottom boundary of $G_{i+1,j}$, and that lies in the interior of $C_{i,j-1}$, except at $p(\gamma)$ and $v_p$; (2) a curve $\gamma_2$ that connects $v_p$ and a vertex $v_q$ on the right-top boundary of $G_{i+1,j}$, and that is an induced path in  $G_{i+1,j}$; and (3) a curve $\gamma_3$ that connects $v_q$ and $q(\gamma)$, and that lies in the interior of $C_{i+1,j}$, except at $v_q$ and $q(\gamma)$. Curve $\gamma_2$ might degenerate to be a single point $v_p=v_q$.

We start with $\gamma_2$. Consider a path $P$ in $G_{i+1,j}$ which is a shortest path connecting a vertex on the left-bottom boundary of $G_{i+1,j}$ and a vertex on the right-top boundary of $G_{i+1,j}$. Denote by $v_p$ and $v_q$ the end-vertices of $P$. Note that, possibly, $v_p=v_q$. Such a path $P$ always exists since $G_{i+1,j}$ is connected; also, $P$ has no internal vertex incident to the left-bottom boundary or to the right-top boundary of $G_{i+1,j}$, as otherwise there would exist a path shorter than $P$ between a vertex on the left-bottom boundary of $G_{i+1,j}$ and a vertex on the right-top boundary of $G_{i+1,j}$. Draw $\gamma_2$ as $P$. 

In order to draw $\gamma_1$ (curve $\gamma_3$ is drawn similarly), draw the dual graph $D$ of $G$ so that each edge of $D$ only intersects its dual edge; restrict $D$ to the vertices and edges in the interior of $C_{i,j-1}$, obtaining a graph $D^*$; find a shortest path $P_p$ in $D^*$ connecting the vertex $f_p$ of $D^*$ incident to the reference edge to which $p(\gamma)$ belongs and a vertex representing a face of $G$ incident to $v_p$. Denote by $f'_p$ the second end-vertex of such a path; draw $\gamma_1$ as $P$ plus two curves connecting $f_p$ and $f'_p$ with $p(\gamma)$ and $v_p$, respectively. 

Curve $\gamma$ has no intersections with the boundary of $R(\gamma)$ other than at $p(\gamma)$ and $q(\gamma)$. We now prove that $\gamma$ intersects each edge in $R(\gamma)$ at most once. First, $\gamma$ intersects each edge of $G_{i+1,j}$ at most once, since $\gamma_2$ is a shortest path in $G_{i+1,j}$ and since $\gamma_1$ and $\gamma_3$ have no intersections with the edges of $G_{i+1,j}$, except at $v_p$ and $v_q$. Second, $\gamma$ intersects each edge in $C_{i,j-1}$ at most once, since $P_p$ does, since $\gamma_1$ does not cross any edge incident to $v_p$ (given that $P_p$ is a shortest path between $f_p$ and any face incident to $v_p$), and since $\gamma_2$ and $\gamma_3$ do not intersect edges in $C_{i,j-1}$ other than at $v_p$ (given that $P$ does not contain any vertex incident to the left-bottom boundary of $G_{i+1,j}$ other than $v_p$); similarly, $\gamma$ intersects each edge in $C_{i+1,j}$ at most once. Third, $\gamma$ intersects each edge in $C_{i+1,j-1}$ at most once, namely at its possible end-vertex in $G_{i+1,j}$; similarly, $\gamma$ intersects each edge in $C_{i,j}$ at most once. Thus, $\gamma$ satisfies the required properties. 

This concludes the proof of Theorem~\ref{th:large-treewidth}.



\section{Implications for other graph drawing problems} \label{le:implications}

In this section, we present a number of corollaries of our results to other graph drawing problems. The following lemma is one of the key tools to establish these connections. For sake of completeness we explicitly state it here (in a more readily applicable form than the original, see \cite[Lemma 1]{DBLP:journals/dcg/BoseDHLMW09}).

\begin{lemma}\cite{DBLP:journals/dcg/BoseDHLMW09} \label{lem:help}
Let $G$ be a planar graph that has a \cfsl\ $\Gamma$ in which a (collinear) set $S\subseteq V(G)$ of vertices lie on the $x$-axis. Then, for an arbitrary assignment of $y$-coordinates to the vertices in $S$, there exists a \cfsl\ $\Gamma'$ of $G$ such that each vertex in $S$ has the same $x$-coordinate as in $\Gamma$ and has the assigned $y$-coordinate.
\end{lemma}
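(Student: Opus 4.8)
The plan is to reduce this geometric statement to the monotone-drawing machinery already used in the proof of Theorem~\ref{th:topology}, exploiting the single structural feature that makes the prescribed placement realizable. Since each $v\in S$ keeps its $x$-coordinate $x_v$ from $\Gamma$, and the vertices of $S$ have pairwise distinct $x$-coordinates (they are collinear on the $x$-axis in the straight-line drawing $\Gamma$), the target points $(x_v,y_v)$ form an \emph{$x$-monotone} sequence. It is this $x$-monotonicity, rather than any collinearity of the targets, that I would use to route the rest of the drawing; note in particular that the assigned $y_v$ need not be monotone, so the target configuration need not be realizable by a $y$-monotone drawing, which is why the $x$-coordinates must be the pinned ones.

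First I would normalize $\Gamma$: by an arbitrarily small perturbation of the vertices not in $S$ (which preserves planarity and leaves the vertices of $S$ on the $x$-axis), I may assume that all $x$-coordinates of $G$ are pairwise distinct. Then every edge of $\Gamma$ is $x$-monotone (edges joining two vertices of $S$ are horizontal but still $x$-monotone, as their endpoints have distinct $x$-coordinates). Next, I fix a continuous piecewise-linear function $h\colon\mathbb{R}\to\mathbb{R}$ with $h(x_v)=y_v$ for every $v\in S$ and with breakpoints only at the distinct $x$-coordinates of the vertices of $G$, and apply the vertical shear $\phi(x,y)=(x,y+h(x))$ to $\Gamma$. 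Since $\phi$ is a homeomorphism of the plane that preserves $x$-coordinates and is increasing in $y$ on each vertical line, the image $\Gamma'=\phi(\Gamma)$ is a planar drawing of $G$ with the same rotation system; moreover $\phi$ maps every $v\in S$ exactly to $(x_v,y_v)$, each edge of $\Gamma$ becomes an $x$-monotone poly-line that is straight inside each vertical slab and bends only on the vertical lines through vertices, and every vertex retains its $x$-coordinate.

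It remains to straighten $\Gamma'$ into a \cfsl\ while keeping the vertices of $S$ pinned at $(x_v,y_v)$. Here I would invoke the straight-line realizability results for monotone drawings of \cite{efln-sldahgc-06,pt-mdpg-04} — the same tools that, in the proof of Theorem~\ref{th:topology}, turn a $y$-monotone drawing into a straight-line drawing with prescribed $y$-coordinates, now applied with the roles of the two axes exchanged. This reduction is the main obstacle: the off-the-shelf statement preserves only \emph{one} coordinate of each vertex (the $x$-coordinate, in our orientation) and reassigns the other freely, whereas we must additionally pin the \emph{heights} $y_v$ of the vertices of $S$. The way I would overcome this is to observe that the vertices of $S$ all lie on the single $x$-monotone curve $C=\{(t,h(t))\}$; inserting $C$ into $\Gamma'$ as an $x$-monotone ``spine'' path (subdivided at the slab boundaries) recasts the goal as straightening an $x$-monotone drawing in which a distinguished monotone path is realized at prescribed positions, while all remaining vertices, and if necessary the plane embedding itself, are free. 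Verifying that the straightening of \cite{efln-sldahgc-06,pt-mdpg-04} can be carried out subject to this pinning — equivalently, that the slab-by-slab system of linear ordering constraints stays feasible once the spine heights are fixed — is the crux, and is precisely where the freedom afforded by the distinct $x$-coordinates, hence by the $x$-monotonicity of the target set, is indispensable. (A self-contained alternative would be a continuous-motion argument that slides each $v\in S$ along its vertical line from height $0$ to $y_v$, maintaining planarity throughout, but the monotone-straightening route above is the cleaner one to make rigorous.)
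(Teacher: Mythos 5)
First, a point of comparison: the paper does not prove this statement at all --- Lemma~\ref{lem:help} is imported from Bose~\emph{et al.}~\cite{DBLP:journals/dcg/BoseDHLMW09} (their Lemma~1) and merely restated ``in a more readily applicable form''. So your attempt has to be judged as a reconstruction of the cited result, not against an in-paper argument.

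Your preprocessing is fine: perturbing the vertices outside $S$ so that all $x$-coordinates are distinct, and applying the shear $\phi(x,y)=(x,y+h(x))$, does produce a planar drawing with every $v\in S$ at $(x_v,y_v)$ and every edge an $x$-monotone poly-line. The genuine gap is exactly the step you yourself flag as ``the crux'', and it is not a verification detail but the entire content of the lemma. The straightening theorems of \cite{efln-sldahgc-06,pt-mdpg-04} preserve \emph{one} coordinate of every vertex and give \emph{no control whatsoever} over the other; applied to $\Gamma'$ they return each $v\in S$ to a point $(x_v,\ast)$ with an arbitrary height, which is precisely the quantity the lemma must pin. Inserting the curve $C$ as a ``spine'' does not change this: first, $C$ may cross edges of $G$, so realizing it as a path requires subdividing the graph; second, and decisively, even in the subdivided supergraph the cited theorems still reassign the non-preserved coordinate of \emph{every} vertex, spine vertices included. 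The extension you would need (``an $x$-monotone drawing can be straightened preserving all $x$-coordinates \emph{and} the prescribed $y$-coordinates of the vertices on a distinguished $x$-monotone path'') is essentially equivalent to the statement being proved, so invoking it is circular unless you prove it. The parenthetical continuous-motion alternative is likewise not an argument as stated: sliding one vertex of a straight-line drawing along a vertical line does not in general preserve planarity unless the other vertices move too.

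A concrete way to close the gap while keeping your shear: \emph{stretch before you shear}. Let $W=\max_{v\in S}|y_v|$ and let $\nu>0$ be the minimum, over all pairs of non-adjacent features of the perturbed $\Gamma$ (a vertex and a non-incident edge, or two edges sharing no endpoint) and over all common abscissae, of their vertical separation. Apply $(x,y)\mapsto(x,My)$ with $M\nu>4W$; this fixes the $x$-axis pointwise, so $S$ is unmoved. Now shear by $\phi$ and replace each bent image $\phi(e)$ by the straight segment between its endpoints. Since $h$ and its linear interpolation along any edge are both bounded by $W$, this final straightening displaces each edge vertically by at most $2W$ at every abscissa, while shearing preserves vertical separations at fixed abscissae; hence non-adjacent features remain disjoint and the drawing stays planar, with each $v\in S$ at $(x_v,y_v)$. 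This elementary quantitative argument avoids the monotone-straightening theorem altogether and is the kind of proof the cited lemma actually admits.
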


The above lemma immediately implies the following.

\begin{lemma}\cite{DBLP:journals/dcg/BoseDHLMW09} \label{lem:help2}
Let $G$ be a planar graph, $R\subseteq V(G)$ be a free collinear set, and $<_R$ be the total order associated with $R$. Consider any assignment of $x$- and $y$-coordinates to the vertices in $R$ such that the assigned $x$-coordinates are all distinct and the order by increasing $x$-coordinates of the vertices in $R$ is $<_R$ (or its reversal). Then there exists a \cfsl\ of $G$ such that each vertex in $R$ has the assigned $x$- and $y$-coordinates.
\end{lemma}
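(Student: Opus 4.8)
The plan is to prove Lemma~\ref{lem:help2} as an immediate composition of two facts: the free-collinear-set property, which lets me realize the prescribed $x$-coordinates of $R$ on a horizontal line, and Lemma~\ref{lem:help}, which lets me then lift the vertices of $R$ to their prescribed $y$-coordinates while keeping those $x$-coordinates unchanged.

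First I would take $\ell$ to be the $x$-axis and let $P=\{(x(r),0):r\in R\}$, one point for each $r\in R$; these $|R|$ points are distinct because the assigned $x$-coordinates are distinct. By the definition of free collinear set, $G$ admits a \cfsl\ in which the vertices of $R$ are mapped bijectively to the points of $P$ in an order along $\ell$ that matches $<_R$. If the order of $R$ by increasing assigned $x$-coordinate is exactly $<_R$, then the $<_R$-rank-$i$ vertex is the one with the $i$-th smallest assigned $x$-coordinate, while the $i$-th point of $P$ from left to right is $(x(r),0)$ for that same vertex; hence in this drawing every $r\in R$ lies at $(x(r),0)$. If instead the increasing-$x$ order is the reversal of $<_R$, I would run the same argument on the reflected point set $\{(-x(r),0):r\in R\}$ and then reflect the resulting drawing across the $y$-axis. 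Since this reflection maps the $x$-axis onto itself, sends $(-x(r),0)$ to $(x(r),0)$, and preserves planarity of a straight-line drawing, in either case I obtain a \cfsl\ $\Gamma$ of $G$ in which every $r\in R$ lies on the $x$-axis exactly at its prescribed $x$-coordinate.

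Second, I would invoke Lemma~\ref{lem:help} with $S=R$. In $\Gamma$ the set $R$ lies on the $x$-axis, so the lemma applies: assigning to each $r\in R$ its prescribed $y$-coordinate $y(r)$ yields a \cfsl\ $\Gamma'$ of $G$ in which each $r\in R$ retains the $x$-coordinate it had in $\Gamma$, namely $x(r)$, and attains $y$-coordinate $y(r)$. Thus in $\Gamma'$ every $r\in R$ is placed precisely at its prescribed point $(x(r),y(r))$, which is the required drawing.

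There is no genuine obstacle here, since the statement is a direct chaining of two previously stated facts; the only point needing a little care is aligning the left-to-right order of the prescribed points with $<_R$. The free-collinear-set property guarantees only the order $<_R$, so the ``or its reversal'' alternative in the hypothesis must be absorbed, and the horizontal reflection in the first step does this cleanly while preserving both planarity and the property of the vertices of $R$ lying on the $x$-axis.
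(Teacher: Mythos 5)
Your proposal is correct and matches the paper's intent: the paper states Lemma~\ref{lem:help2} as an immediate consequence of the definition of a free collinear set (to realize the prescribed $x$-coordinates on the $x$-axis) followed by Lemma~\ref{lem:help} (to lift to the prescribed $y$-coordinates), which is exactly the two-step composition you give. Your explicit handling of the ``or its reversal'' case by reflection is a reasonable way to fill in the one detail the paper leaves implicit.
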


We first apply Lemma \ref{lem:help2} to obtain an optimal bound (up to a multiplicative constant) on the size of universal point subsets for planar graphs of treewidth at most three. 

\begin{corollary} \label{cor:sub}
Every set $P$ of at most  $\lceil\frac{n-3}{8}\rceil$ points in the plane is a universal point subset for all $n$-vertex plane graphs of treewidth at most three.
\end{corollary}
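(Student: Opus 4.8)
The plan is to combine the free collinear set guaranteed by our first result with Lemma~\ref{lem:help2}; the only genuine subtlety is that Lemma~\ref{lem:help2} prescribes \emph{distinct} $x$-coordinates, while the given set $P$ may contain points sharing an $x$-coordinate. I would resolve this with a rotation argument.

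First I would invoke the machinery already in place. By Theorem~\ref{th:3-trees} and Theorem~\ref{th:csvfcs}, together with the fact that every plane graph of treewidth at most three is a subgraph of a plane $3$-tree~\cite{kv-npp3t-12}, every $n$-vertex plane graph $G$ of treewidth at most three has a free collinear set $R$ with $|R|=\lceil\frac{n-3}{8}\rceil$ and an associated total order $<_R$. I would write $R=(r_1,\dots,r_N)$ in the order $<_R$, with $N=\lceil\frac{n-3}{8}\rceil$, and set $k=|P|\le N$.

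Next I would remove the degeneracy of $P$. Since $P$ is a finite set of distinct points, only finitely many rotation angles cause two of its points to align vertically, so there is a rotation $\rho$ of the plane under which the points of $\rho(P)$ have pairwise distinct $x$-coordinates; denote them $p_1,\dots,p_k$ in increasing order of $x$-coordinate. I would then assign coordinates to \emph{all} of $R$: map $r_i$ to $p_i$ for $1\le i\le k$, and give each $r_j$ with $j>k$ an $x$-coordinate exceeding $x(p_k)$, chosen so that $x(r_{k+1})<\cdots<x(r_N)$, with arbitrary $y$-coordinates. By construction the prescribed $x$-coordinates are pairwise distinct and their left-to-right order is exactly $<_R$, so Lemma~\ref{lem:help2} applies and yields a planar straight-line drawing $\Gamma'$ of $G$ placing each $r_i$ at its prescribed position; in particular $r_1,\dots,r_k$ lie at the points of $\rho(P)$. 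Applying the inverse rotation $\rho^{-1}$, which preserves planarity and maps segments to segments, gives a planar straight-line drawing $\Gamma=\rho^{-1}(\Gamma')$ of $G$ in which $k$ vertices lie at $\rho^{-1}(\rho(P))=P$. As $G$ is an arbitrary $n$-vertex plane graph of treewidth at most three, $P$ is a universal point subset for this class.

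I expect the entire argument to be routine: the substance is carried by the free collinear set result and by Lemma~\ref{lem:help2}. The one point requiring care—and the only place where I foresee any obstacle—is the distinctness of $x$-coordinates, which is exactly what the rotation $\rho$ is designed to guarantee; making that rotation and the accompanying coordinate assignment precise is all that remains.
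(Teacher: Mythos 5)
Your proposal is correct and follows essentially the same route as the paper's proof: invoke Theorems~\ref{th:3-trees} and~\ref{th:csvfcs} to obtain a free collinear set of size $\lceil\frac{n-3}{8}\rceil$, rotate the axes so that the points of $P$ have distinct $x$-coordinates, and apply Lemma~\ref{lem:help2}. The only cosmetic difference is that you place the surplus vertices of $R$ at auxiliary points rather than restricting to a free collinear subset of cardinality $|P|$, which changes nothing of substance.
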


\begin{proof} 
If necessary, rotate the Cartesian axes so that no two points in $P$ have the same $x$-coordinate. By Theorems~\ref{th:3-trees} and~\ref{th:csvfcs} every $n$-vertex plane graph $G$ of treewidth at most three has a free collinear set $R$ of cardinality $|P|$. Let $<_R$ be the total order associated with $R$. Since no two points in $P$ have the same $x$-coordinate, there exists a bijective mapping $\delta:R\rightarrow P$ such that, for every two vertices $v,w\in R$, $v<_R w$ if and only if the $x$-coordinate of point $\delta(v)$ is smaller than the $x$-coordinate of point $\delta(w)$. Then by Lemma \ref{lem:help2} there exists a \cfsl\ of $G$ that respects mapping $\delta$.   
\end{proof}

It is implicit in \cite{DBLP:journals/dcg/BoseDHLMW09} and explicit in \cite{DBLP:conf/wg/RavskyV11} (in both cases using Lemmata \ref{lem:help} and \ref{lem:help2} above), that every straight-line drawing (possibly with crossings) of a planar graph $G$  can be untangled while keeping at least $\sqrt{x}$ vertices fixed, where $x$ is the size of a free collinear set of $G$. Together with Theorems~\ref{th:3-trees} and~\ref{th:csvfcs} this implies the following corollary.

\begin{corollary} \label{cor:untangle}
Any straight-line drawing (possibly with crossings) of an $n$-vertex planar graph of treewidth at most three can be untangled while keeping at least $\sqrt{\lceil (n-3)/8 \rceil}$ vertices fixed. 
\end{corollary}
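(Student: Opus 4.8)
The plan is to reduce the statement to the two earlier results on free collinear sets together with the untangling fact stated just before the corollary. By Theorem~\ref{th:3-trees} and Theorem~\ref{th:csvfcs}, every $n$-vertex plane graph $G$ of treewidth at most three has a free collinear set $R$ with $|R|=\lceil (n-3)/8\rceil$; let $<_R$ be its associated total order. It then suffices to show that any straight-line drawing $\delta$ of $G$ (possibly with crossings) can be untangled while keeping at least $\sqrt{|R|}$ of its vertices fixed, since $\sqrt{|R|}=\sqrt{\lceil (n-3)/8\rceil}$. The corollary is therefore a direct instantiation of the general untangling bound for free collinear sets; the substantive part is to recall why that bound holds, which I would do through Lemma~\ref{lem:help2} and the Erd\H{o}s--Szekeres theorem.

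The untangling argument I would give is as follows. First, rotate the Cartesian axes (which moves no vertex of $\delta$ in the plane, and hence keeps fixed exactly the same point set) so that no two vertices of $R$ share an $x$-coordinate in $\delta$. List the vertices of $R$ in order of increasing $x$-coordinate in $\delta$; this list is a permutation of $R$ relative to $<_R$. By the Erd\H{o}s--Szekeres theorem this permutation contains a subsequence $R'$ of length at least $\lceil \sqrt{|R|}\rceil \geq \sqrt{|R|}$ that is monotone with respect to $<_R$, i.e.\ whose order by increasing $x$-coordinate in $\delta$ agrees either with $<_R$ or with its reversal. I would then observe that $R'$ is itself a free collinear set of $G$, with associated order the restriction of $<_R$ to $R'$: given any $|R'|$ points on a line, one extends them to $|R|$ points respecting $<_R$, applies the free collinearity of $R$, and discards the vertices of $R\setminus R'$. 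Now the coordinates that the vertices of $R'$ have in $\delta$ satisfy the hypotheses of Lemma~\ref{lem:help2}: the $x$-coordinates are distinct and their left-to-right order is the restriction of $<_R$ or its reversal. Applying Lemma~\ref{lem:help2} to $R'$ yields a \cfsl\ $\Gamma'$ of $G$ in which every vertex of $R'$ keeps both its $x$- and its $y$-coordinate from $\delta$. Thus $\Gamma'$ is obtained from $\delta$ by relocating only the vertices outside $R'$, and it keeps at least $\sqrt{|R|}=\sqrt{\lceil (n-3)/8\rceil}$ vertices fixed, as required.

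I expect no genuine difficulty here, since all the hard work is done by Theorems~\ref{th:3-trees} and~\ref{th:csvfcs} and by Lemma~\ref{lem:help2}. The two points that need a line of care are that rotating the coordinate axes to separate the $x$-coordinates does not change which vertices are ultimately kept fixed (it is a relabelling of coordinates, not a motion of the drawing), and that a subset of a free collinear set is again a free collinear set under the induced order; both are routine. Consequently the only real loss in the argument is the quadratic one incurred by Erd\H{o}s--Szekeres, which is precisely what turns the linear free-collinear-set bound $\lceil (n-3)/8\rceil$ into the claimed $\sqrt{\lceil (n-3)/8\rceil}$.
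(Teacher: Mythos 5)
Your proposal is correct and follows essentially the same route as the paper: the paper simply cites the known fact (from Bose~\etal\ and Ravsky--Verbitsky, via Lemmata~\ref{lem:help} and~\ref{lem:help2}) that a free collinear set of size $x$ yields an untangling fixing $\sqrt{x}$ vertices, and combines it with Theorems~\ref{th:3-trees} and~\ref{th:csvfcs}; you have merely unfolded that cited argument (Erd\H{o}s--Szekeres on the $x$-coordinate order, restriction to a monotone subsequence, then Lemma~\ref{lem:help2}), and your unfolding is accurate.
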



We conclude this section with the application to column planar sets. Lemma \ref{lem:help} implies that every collinear set is a column planar set. That and our three main results imply our final corollary.

\begin{corollary} \label{cor:untangle2}
\begin{inparaenum}
\item[(a)] Triconnected cubic planar graphs have column planar sets of linear size.
\item[(b)] Planar graphs of treewidth at most three have column planar sets of linear size.
\item[(c)] Planar graphs of treewidth at least $k$ have column planar sets of size $\Omega(k^2)$.
\end{inparaenum}
\end{corollary}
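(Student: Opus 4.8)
The plan is to first establish the general principle, anticipated in the sentence preceding the statement, that \emph{every collinear set is a column planar set}, and then to feed into it the three main results of the paper. Once this principle is in hand, parts (a)--(c) become immediate: Theorem~\ref{th:cubic} supplies a collinear set of size $\lceil \frac{n}{4} \rceil$ in any triconnected cubic plane graph, Theorem~\ref{th:3-trees} supplies one of size $\lceil \frac{n-3}{8} \rceil$ in any plane graph of treewidth at most three, and Theorem~\ref{th:large-treewidth} supplies one of size $\Omega(k^2)$ whenever the treewidth is at least $k$ (using that the bound of Theorem~\ref{th:large-treewidth} is monotone in the treewidth). I would carry out the principle first, then invoke each theorem in turn.

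To prove the principle, let $S \subseteq V(G)$ be a collinear set of a planar graph $G$. By definition $G$ has a \cfsl\ $\Gamma$ in which all vertices of $S$ lie on a common line $\ell$; after rotating the Cartesian axes I may assume $\ell$ is the $x$-axis. Define $\gamma\colon S \to \mathbb{R}$ by letting $\gamma(v)$ be the $x$-coordinate of $v$ in $\Gamma$. Now consider an arbitrary function $\lambda\colon S \to \mathbb{R}$ and apply Lemma~\ref{lem:help} to $\Gamma$, using $\lambda$ as the prescribed assignment of $y$-coordinates to the vertices of $S$. The lemma yields a \cfsl\ $\Gamma'$ of $G$ in which every $v\in S$ keeps its $x$-coordinate $\gamma(v)$ and receives $y$-coordinate $\lambda(v)$; that is, $v$ is placed at the point $(\gamma(v),\lambda(v))$. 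Since $\gamma$ is fixed before $\lambda$ is seen and the construction succeeds for every $\lambda$, the set $S$ satisfies the definition of a column planar set.

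Combining this principle with the three theorems as described above then proves parts (a), (b), and (c), respectively; for part (c) one additionally observes that a graph of treewidth at least $k$ has treewidth $k'\geq k$, so Theorem~\ref{th:large-treewidth} already guarantees $\Omega((k')^2)\subseteq\Omega(k^2)$ collinear, hence column planar, vertices. I expect no serious obstacle: the whole argument is a direct translation through Lemma~\ref{lem:help}. The only point requiring care is matching the quantifier order in the definition of a column planar set---$\gamma$ must be chosen once and work for \emph{all} $\lambda$---which is exactly the asymmetry captured by Lemma~\ref{lem:help}, where the drawing $\Gamma$ (and hence the $x$-coordinates of $S$) is fixed first and the $y$-coordinates are prescribed afterwards. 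I would also flag that this paper adopts the version of the definition without the auxiliary no-three-collinear requirement on the image points, so no further genericity argument on the $(\gamma(v),\lambda(v))$ is needed.
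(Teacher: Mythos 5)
Your proposal is correct and follows exactly the paper's route: the paper likewise observes that Lemma~\ref{lem:help} implies every collinear set is a column planar set, and then combines this with Theorems~\ref{th:3-trees}, \ref{th:cubic}, and~\ref{th:large-treewidth}. Your write-up merely makes explicit the details (rotating so the line is the $x$-axis, the quantifier order on $\gamma$ versus $\lambda$) that the paper leaves implicit.
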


\section{Conclusions}

In this paper we studied the problem of constructing planar straight-line graph drawings with many collinear vertices. It would be interesting to tighten the best known bounds (which are $\Omega(n^{0.5})$ and $O(n^{0.986})$) for the maximum number of vertices that can be made collinear in a \cfsl\ of any $n$-vertex planar graph. In particular, we ask: Is it true that, if a plane graph $G$ has a dual graph that contains a cycle with $m$ vertices, then $G$ has a \cfsl\ with $\Omega(m)$ collinear vertices? A positive answer to this question would improve the $\Omega(n^{0.5})$ lower bound to $\Omega(n^{0.694})$ (via the result in~\cite{j-lc3cg-86}). As noted in the Introduction, the ``converse'' is true for maximal plane graphs: If a maximal plane graph $G$ has a \cfsl\ with $x$ collinear vertices, then the dual graph $D$ of $G$ has a cycle with $\Omega(x)$ vertices.


We proved that every $n$-vertex triconnected cubic plane graph has a \cfsl\ with $\lceil \frac{n}{4} \rceil$ collinear vertices. It seems plausible that an $\Omega(n)$ lower bound holds true for every $n$-vertex subcubic plane graph. Recall from the Introduction that the linear lower bound does not extend to all bounded-degree planar graphs~\cite{DBLP:journals/dm/Owens81}, in fact, it does not extend already to all planar graphs of maximum degree $7$.

We proved that $n$-vertex plane graphs with threewidth at most three have \cfsl s with $\lceil \frac{n-3}{8} \rceil$ collinear vertices. Of our three results, this one has the widest applications to other graph drawing problems due to the fact that gives a free collinear set of size  $\lceil \frac{n-3}{8} \rceil$. In fact, we proved that every collinear set is a free collinear set in planar $3$-trees. This brings us to an open question already posed by Ravsky and Verbitsky~\cite{DBLP:conf/wg/RavskyV11}: is every collinear set a free collinear set, and if not, how close are the sizes of these two sets in a general planar graph?

Finally, we can prove that the maximum number of collinear vertices in any \cfsl\ of a plane $3$-tree $G$ can be computed in polynomial time (the statement extends to a {\em planar} $3$-tree by choosing the outer face in every possible way). Indeed, there are six (topologically distinct) ways in which a proper good curve $\lambda$ can ``cut'' the $3$-cycle $C$ delimiting the outer face of $G$: in three of them $\lambda$ passes through a vertex of $C$ and properly crosses the edge of $C$ not incident to that vertex, and in the other three $\lambda$ properly crosses two edges of $C$. This associates to $G$ six parameters, representing the maximum number of internal vertices of $G$ these six curves can pass through. Further, the six parameters for $G$ can be easily computed as a function of the same parameters for the plane $3$-trees children of $G$. This leads to a polynomial-time dynamic-programming algorithm to compute the six parameters and consequently the maximum number of collinear vertices in any \cfsl\ of $G$. By implementing this idea, we have observed the following fact: For every $m\leq 50$ and for every plane $3$-tree $G$ with $m$ internal vertices, there exists a \cfsl\ of $G$ with $\lceil \frac{m+2}{3} \rceil$ collinear internal vertices (and this bound is the best possible for all $m\leq 50$). It would be interesting to prove that this is the case for every $m\geq 1$.

\subsubsection*{Acknowledgments} The authors wish to thank Giuseppe Liotta, Sue Whitesides, and Stephen Wismath for stimulating discussions about the problems studied in this paper.

\bibliographystyle{abbrv}
\bibliography{bibliographyfinal}
\end{document}